\newcommand{\Uall}{U_{\text{all}}}
\newcommand{\uall}{u_{\text{all}}}
\newcommand{\Vall}{V_{\text{all}}}
\newtheorem{thm}{Theorem}
\newtheorem{lemma}{Lemma}
\newtheorem{cor}{Corollary}
\title{Confirmatory adaptive group sequential designs for clinical trials with multiple time-to-event outcomes in Markov models}
\author[1,*]{Moritz Fabian Danzer}
\author[1]{Andreas Faldum}
\author[2]{Thorsten Simon}
\author[2]{Barbara Hero}
\author[1]{Rene Schmidt}
\affil[1]{Institute of Biostatistics and Clinical Research, University of Münster, Münster, Germany}
\affil[2]{Department of Pediatric Oncology and Hematology, University Hospital Cologne, Cologne, Germany}
\affil[*]{Corresponding author, moritzfabian.danzer@ukmuenster.de}
\date{\today}
\begin{document}
	
\maketitle

\begin{abstract}
The analysis of multiple time-to-event outcomes in a randomised controlled clinical trial can be accomplished with exisiting methods. However, depending on the characteristics of the disease under investigation and the circumstances in which the study is planned, it may be of interest to conduct interim analyses and adapt the study design if necessary. Due to the expected dependency of the endpoints, the full available information on the involved endpoints may not be used for this purpose. We suggest a solution to this problem by embedding the endpoints in a multi-state model. If this model is Markovian, it is possible to take the disease history of the patients into account and allow for data-dependent design adaptiations.\\
To this end, we introduce a flexible test procedure for a variety of applications, but are particularly concerned with the simultaneous consideration of progression-free survival (PFS) and overall survival (OS). This setting is of key interest in oncological trials. We conduct simulation studies to determine the properties for small sample sizes and demonstrate an application based on data from the NB2004-HR study.
\end{abstract}

\textbf{Keywords:} clinical trial, log-rank test, sample size recalculation, survival analysis

\section{Introduction}
Adaptive clinical trial designs for a single primary time-to-event endpoint are well-established (see e.g. \cite{Muller:01, Wassmer:06}). These are based on the log-rank test by exploiting its independent increments structure as exhibited in \cite{Tsiatis:1981, Sellke82}. As long as only information on this single endpoint is used to inform an adaptation of the design in an interim analysis, the nominal type 1 error rate will be maintained. However, this no longer applies if information of further endpoints is used from patients, who have been recruited before this interim analysis and remain event-free beyond it \cite{BPB:02}. This is because these additional data can be used to predict the course of the disease in those same patients. For example, information on progression status can be used to predict individual mortality risk in a trial with primary endpoint overall survival (OS). Such misuse of surrogate interim data leads to inflation of the actual type I error level. Approaches to solving this problem make use of the strategy of patient-wise separation \cite{Jenkins:11, Irle:12, Jörgens:2019}. Although the initial approaches in \cite{Jenkins:11, Irle:12} have already been improved by \cite{Jörgens:2019}, some disadvantages cannot be resolved, such as partial discarding of primary endpoint data in the final analysis. Alternatively, worst-case adjustments can be made to avoid a type I error inflation \cite{Magirr:16} that often result in a conservative procedure.\\
Similar issues arise as well for trials with multiple primary time-to-event endpoints. For one-sample studies, this situation has already been addressed by \cite{Danzer:2022}. Single-stage procedures for the simultaneous assessment of multiple time-to-event endpoints in randomised trials have already been proposed in \cite{Wei:1984}. Roughly speaking, this method can be described as performing separate log-rank tests simultaneously for all endpoints involved. A final test decision is made by looking at the joint distribution of the individual test statistics. Corresponding group sequential procedures were introduced in \cite{Lin:1991}. At first glance, an extension of \cite{Lin:1991} to adaptive designs seems obvious by following the strategy of \cite{Wassmer:06}, since the multivariate test statistic also has a property of independent increments. However, this property only applies to each component of the multivariate test statistic separately and not for the multivariate process as a whole. The reason for this is closely linked to the problem mentioned in \cite{BPB:02} because, again, information about some endpoints might be used to predict future outcome of other endpoints. At the same time, patients who are known to be in different disease states are compared to each other. We will solve this problem by taking into account the available information on all endpoints when calculating the test statistics, thus only comparing patients who have the same prognosis of disease course given the available information.\\
To this end, it is central to our approach that we can easily embed different time-to-event endpoints into a multi-state model. Especially in oncology, which is of central importance to us as an area of application for our methods, such models can be very helpful in being able to depict different courses of disease \cite{Le-Rademacher:2018}. Two of the most important endpoints in this field of clinical research are given by progression-free survival (PFS)/event-free survival (EFS) and overall survival (OS). While the latter one is the most objectively defined endpoint, the former can often be regarded as its surrogate and has certain advantages in terms of time- and cost-effectiveness. The exact definition of the endpoint and its suitability as a primary endpoint strongly depends on the tumour entity and the patient collective to be considered \cite{Bellera:2013}. Those two endpoints can be embedded in a simple illness-death model which has been discussed extensively in \cite{Meller:2019}. Provided that this model has the Markov property, we can perform a two-group comparison that addresses the aforementioned issues. As in \cite{Lin:1991}, this results in a consideration that refers to the clinical endpoints with the aid of a transitional consideration as in \cite{Tattar:2014}.\\
The paper is organised as follows. It starts with a presentation of the procedure for the prominent example of PFS and OS. Sections \ref{sec:notation} and \ref{sec:general_framework} introduce the general notation and generalize the procedure for broader applications. Building on that, planning and execution of a clinical trial is briefly sketched in Section \ref{sec:gs_design}. Properties of the method in practically relevant scenarios are studied by simulation in Section \ref{sec:simulation_study}. An application of the proposed method is demonstrated in Section \ref{sec:application_example} using the data from the NB2004-HR trial (NCT number NCT03042429). We conclude with a discussion of our findings and prospects for future research.\\
Proofs of mathematical statements and a further case study are shifted to the Appendix.

\section{Main application example: PFS \& OS}\label{sec:pfs_os}
We illustrate our procedure using the example of a trial with the primary time-to-event endpoints PFS and OS. In a randomised clinical trial, PFS is defined as time from randomisation to progression of the disease or death, whatever occurs first. OS denotes the time from randomisation to death. While OS is obviously the most objectively defined time-to-event endpoint, the use of other endpoints such as PFS may also be justified in oncological phase III clinical trials, depending on the nature of the disease and the mechanism of action of the experimental treatment. Outcome improvement can first be associated with longer progression-free survival time or an increase of the rate of patients without tumor progression. In addition, there may be other advantages concerning death without prior progression or post-progression survival which then additionally affect OS. The methods proposed here cover all of these aspects by allowing to use both of these endpoints as primary endpoints under exploitation of their dependence structure in a Markovian multi-state model.\\
Such a model as presented extensively in \cite{Meller:2019} establishes the corresponding probabilistic structure. The multi-state model is visualized in Figure \ref{fig:pfsos}. A patient's  history of disease from start of the therapy corresponds to a path along the arrows in this figure. At the beginning of the treatment, a patient starts in state $0$. He may die directly without progression. This is represented by a jump to state $2$. Otherwise, he may experience a progression of the disease, which is represented by a jump to state $1$ and die afterwards which is represented to a subsequent jump to state $2$.

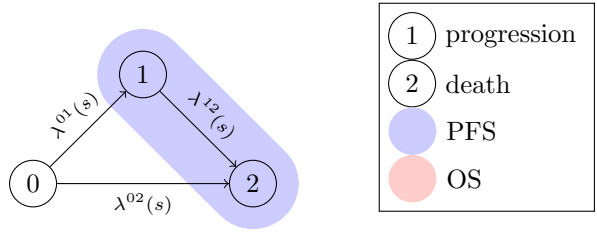
\begin{figure}[h]
	\centering
	\begin{tikzpicture}
		\node[circle, draw] (initial) {$0$};
		\node[circle, draw] (progression) [above right=of initial] {$1$};
		\node[circle, draw] (death) [below right=of progression] {$2$};
		
		\draw [->] (initial) -- node[midway, sloped, above] {\scriptsize $\lambda^{01}(s)$} (progression);
		\draw [->] (initial) -- node[midway, below] {\scriptsize $\lambda^{02}(s)$} (death);
		\draw [->] (progression) -- node[midway, sloped, above] {\scriptsize $\lambda^{12}(s)$} (death);
		
		\begin{pgfonlayer}{background}
			\foreach \nodename in {progression, death} {
				\coordinate (\nodename') at (\nodename);
			}
			\path[fill=blue!20, draw=blue!20, line width = 1.2cm, line cap=round, line join=round] 
			(progression') to[bend left=0] (death') 
			to[bend left=0] (progression') -- cycle;
			\path[fill=red!20, draw=red!20, line width = 1.0cm, line cap=round, line join=round] 
			(death') to[bend left=0] (death') -- cycle;
		\end{pgfonlayer}
		
		\matrix [draw,below right,xshift = 30pt, yshift = 10pt] at (current bounding box.north east) {
			\node [circle, draw, label=right:progression] {$1$}; \\
			\node [circle, draw, label=right:death] {$2$}; \\
			\node [circle, fill=blue!20, label=right:PFS] {$\quad$};\\
			\node [circle, fill=red!20, label=right:OS] {$\quad$};\\
		};
		
	\end{tikzpicture}
	\caption{Representation of the PFS/OS scenario as a multi-state model}
	\label{fig:pfsos}
\end{figure}
In accordance with our general framework, we denote the random time of transition to node $1$ of some patient $i$ by $T_i^{\{1\}}$ and the time of transition to node $2$ by $T_i^{\{2\}}$. Accordingly, the random time of PFS which is the first hitting time of the set of nodes $\{1,2\}$ can be defined as $T_i^{\text{PFS}}\coloneqq T_i^{\{1\}} \wedge T_i^{\{2\}}$ where $a\wedge b$ denotes the minimum of two real numbers $a,b$. The random time of OS which is the first hitting time of node 2 is given by $T_i^{\text{OS}}\coloneqq T_i^{\{2\}}$.\\
Such a model fulfills the Markov assumption if the conditional probability of future transitions does only depend on the present state. To introduce this more formally, let $X_i\colon \mathbb{R}_+ \to \{0,1,2\}$ denote the state occupation function for some patient $i$, i.e. $X_i(s)$ yields the state of patient $i$ at time $s$ since randomisation of that patient. The left-hand limits $X_i(s-)\coloneqq \lim_{h \searrow 0} X_i(s-h)$ denote the state of the patient just before $s$. Now, if
\begin{equation*}
	\mathbb{P}\left[X_i(s_2) \in S|(X_i(u))_{u \in [0,s_1]}\right] = \mathbb{P}[X_i(s_2) \in S|X(s_1)]
\end{equation*}
for any subspace of the state space $S\subset \{0,1,2\}$ and any $0\leq s_1 < s_2$, the stochastic process $(X_i(s))_{s \geq 0}$ is said to possess the Markov property.\\
Given the current state of a patient, its instantaneous rate of transition to another state does only depend on the time elapsed since randomisation. Hence, each of the transitions represented by the arrows is equipped with a univariate transition hazard or intensity function $\lambda^{jk}\colon \mathbb{R}_+ \to \mathbb{R}_+$ with $0 \leq j < k \leq 2$. These are defined by
\begin{equation*}
	\lambda^{jk}(s)\coloneqq \lim_{h \searrow 0} \frac{\mathbb{P}[X_i(s+h)=k | X_i(s) = j]}{h}.
\end{equation*}
Given those functions, the joint distribution of PFS and OS is fully specified. As we want to explore any kind of possible differences in the joint distribution of the two endpoints between the two groups, our testing procedure investigates the null hypothesis

\begin{equation}\label{eq:H0_dist_fct_pfsos}
	H_0\colon F^0_{\text{PFS,OS}} = F^1_{\text{PFS,OS}}.
\end{equation}

where $F^g_{\text{PFS,OS}}$ denotes the joint distribution function of the time-to-event endpoints PFS and OS in group $g$. Following \cite{Tattar:2014}, this could also be reformulated in terms of the cumulative intensity matrices $\mathbf{\Lambda}^g$. The $(j,k)$-th entry of this matrix is the function of the cumulative intensity for the transition from state $j$ to $k$ for the respective group $g$. The corresponding hypothesis is 

\begin{equation}\label{eq:H0_cum_int_pfsos}
	H_0\colon \mathbf{\Lambda}^0 = \mathbf{\Lambda}^1.
\end{equation}

Differing from \cite{Tattar:2014}, we do not compare the estimated transition intensity matrices, but pursue an approach that is motivated by the clinically relevant endpoints.\\
In univariate survival analysis, one-dimensional compensated counting processes form the basis for constructing adaptive designs. For the two endpoints considered here, these are given by $(\tilde{M}_i^{\text{PFS}}(s))_{s\geq 0}$ respectively (resp.) $(\tilde{M}_i^{\text{OS}}(s))_{s\geq 0}$ with
\begin{equation}\label{eq:univariate_martingale}
	\tilde{M}_i^E(s)\coloneqq \mathbbm{1}_{\{T^E_i\leq s \wedge C_i\}} - \int_0^{s\wedge T_i^E \wedge C_i} \lambda^E (u)\, du 
\end{equation}
for $E \in \{\text{PFS}, \text{OS}\}$ and any $s\geq 0$. The positive real-valued random variable $C_i$ denotes the random censoring time which is assumed to be independent from the process $X_i$. The endpoint-specific hazards $\lambda^{\text{PFS}}$ and $\lambda^{\text{OS}}$ do not take into account the current state of the patient. Since we have to do exactly this when constructing adaptive designs where all information on PFS and OS is allowed to be used at an interim analysis, we will instead consider the multivariate compensated counting processes $(\mathbf{M}_i(s))_{s\geq 0}$ with
\begin{equation*}
	\mathbf{M}_i(s) \coloneqq 
	\begin{pmatrix}
		M_i^{\text{PFS}}(s)\\
		M_i^{\text{OS}}(s)	
	\end{pmatrix}
	=
	\begin{pmatrix}
		\tilde{M}_i^{\text{PFS}}(s)\\
		M_i^{\text{OS}}(s)	
	\end{pmatrix}
\end{equation*}
and 
\begin{equation}\label{eq:os_martingale_mv}
	M_i^{\text{OS}}(s)\coloneqq \mathbbm{1}_{\{T^{\text{OS}}_i\leq s \wedge C_i\}} - \int_0^{s\wedge C_i \wedge T_i^{\text{PFS}}} \lambda^{02}(u)\, du - \int_{s\wedge C_i \wedge T_i^{\text{PFS}}}^{s\wedge C_i \wedge T_i^{\text{OS}}} \lambda^{12}(u)\, du
\end{equation}
for any $s\geq 0$. The component for PFS can be adopted from the univariate setting (according to \eqref{eq:univariate_martingale}) as there is no additional information to be taken into account for this endpoint. As soon as any transition occurs in our simple model, the process automatically stops.\\
In order to state the test statistics that arise in this way, we need to introduce some more notation. First, let $Z_i\in \{0,1\}$ denote the treatment indicator variable and $R_i \in \mathbb{R}_+$ the random time of trial entry of patient $i$. As we aim for adaptive sequential designs, we need to deal with two different time scales: We will always denote the calendar time by $t$ and the individual time in trial by $s$. In this way, we can define the event counting processes
\begin{equation*}
N^{\text{PFS}}_i(t,s) \coloneqq \mathbbm{1}_{\{T_i^{\text{PFS}} \leq s \wedge C_i \wedge (t-R_i)_+\}} \quad \text{and} \quad N^{\text{OS}}_i(t,s) \coloneqq \mathbbm{1}_{\{T_i^{\text{OS}} \leq s \wedge C_i \wedge (t-R_i)_+\}}
\end{equation*}
counting events that happen before calendar time $t$ and trial time $s$. For any state $l\in \{0,1,2\}$ of our model from Figure \ref{fig:pfsos} we can also define the corresponding at risk processes
\begin{equation*}
Y_i^{j}(t,s)\coloneqq \mathbbm{1}_{\{X_i(s-)=j\}} \cdot \mathbbm{1}_{\{s \leq C_i \wedge (t-R_i)_+\}} \quad \text{and} \quad Y_i^{j,Z=1}(t,s)\coloneqq Z_i \cdot Y_i^{j}(t,s)
\end{equation*}
which indicate at some calendar time $t$ whether patient $i$ is known to be in state $l$ just before trial time $s$ and (for the latter one) whether the patient is in treatment group 1. While these quantities are defined for each patient, the aggregates $N^{\text{PFS}},\, N^{\text{OS}},\,Y^{j}$ and $Y^{j, Z=1}$ over the whole study sample are given by summing the corresponding quantities over all patients $i$ from 1 to $n$.\\
At calendar time $t$ the component of our unstandardised multivariate test statistic concerning PFS is then given by
\begin{equation*}
	U^{\text{PFS}}(t)\coloneqq \frac{1}{\sqrt{n}} \sum_{i=1}^n \int_0^t \left(Z_i - \frac{Y^{0,Z=1}(t,s)}{Y^{0}(t,s)}\right) N_i^{\text{PFS}}(t,ds).
\end{equation*}
which is just the common unstandardised log-rank statistic for PFS. For the second component, concerning the endpoint OS, we need to take the additional information of prior progressions into account. It is defined by
\begin{equation*}
	U^{\text{OS}}(t)
	\coloneqq \frac{1}{\sqrt{n}}\sum_{i=1}^n \int_0^t \left(Z_i - Y_i^0(t,s) \frac{Y^{0,Z=1}(t,s)}{Y^{0}(t,s)} - Y_i^1(t,s) \frac{Y^{1,Z=1}(t,s)}{Y^{1}(t,s)}\right) N_i^{\text{OS}}(t,ds).
\end{equation*}
Analogously to the adopted compensated counting process in \eqref{eq:os_martingale_mv}, we need to distinguish between patients who did not experience a progression of the disease yet $(Y_i^0(t,s)=1)$ and those who did $(Y_i^1(t,s)=1)$. In contrast to \cite{Lin:1991}, this distinction enables adaptive design changes based on all information from the illness-death model from Figure \ref{fig:pfsos}.\\
The variance of $U^{\text{PFS}}(t)$ and $U^{\text{OS}}(t)$ can be estimated by
\begin{equation*}
	\hat{V}^{\text{PFS}}(t)=\frac{1}{n} \sum_{i=1}^n \int_{[0,t]} \frac{Y^{0,Z=1}(t,s)}{Y^{0}(t,s)} \left(1 - \frac{Y^{0,Z=1}(t,s)}{Y^{0}(t,s)} \right) N_i^{\text{PFS}}(t,ds)
\end{equation*}
resp.
\begin{align*}
	\hat{V}^{\text{OS}}(t)=\frac{1}{n} \sum_{i=1}^n \int_{[0,t]} &Y_i^0(t,s) \frac{Y^{0,Z=1}(t,s)}{Y^{0}(t,s)}\left(1 - \frac{Y^{0,Z=1}(t,s)}{Y^{0}(t,s)}\right)\\
	& + Y_i^1(t,s) \frac{Y^{1,Z=1}(t,s)}{Y^{1}(t,s)} \left(1 - \frac{Y^{1,Z=1}(t,s)}{Y^{1}(t,s)}\right) N_i^{\text{OS}}(t,ds)
\end{align*}
The covariance between the two random variables can be estimated by
\begin{align*}
	\hat{V}^{\text{PFS,OS}}(t)=\frac{1}{n} \sum_{i=1}^n \int_{[0,t]} Y_i^0(t,s) \frac{Y^{0,Z=1}(t,s)}{Y^{0}(t,s)}\left(1 - \frac{Y^{0,Z=1}(t,s)}{Y^{0}(t,s)}\right) N_i^{\text{OS}}(t,ds).
\end{align*}
As we integrate over the process $N_i^{\text{OS}}(t,ds)$ and multiply each summand by $\mathbbm{1}_{\{Y_i^0(t,s-)=1\}}$, we can see that the covariance is only driven by transitions from state 0 to node 2, i.e. by patients for which PFS and OS happen simultaneously.

We will now consider the bivariate process $(\mathbf{U}(t))_{t \geq 0}$ with $\mathbf{U} (t) \coloneqq (U^{\text{PFS}}(t), U^{\text{OS}}(t))$ for all $t \geq 0$ and the $2\times 2$-matrix valued process $(\hat{\mathbf{V}} (t))_{t \geq 0}$ with
\begin{equation*}
	\hat{\mathbf{V}} (t) = 
	\begin{pmatrix}
		\hat{V}^{\text{PFS}}(t) & \hat{V}^{\text{PFS,OS}}(t)\\
		\hat{V}^{\text{PFS,OS}}(t) & \hat{V}^{\text{OS}}(t)
	\end{pmatrix}
\end{equation*}
for all $t \geq 0$.

For the sake of simplicity, we only consider a design with one interim analysis at calendar time $t_1>0$ and a final analysis at calendar time $t_2 > t_1$ here. First stage test statistics will be based on $\mathbf{U} (t_1)$ and $\hat{\mathbf{V}} (t_1)$. Test statistics for the data from the second stage will be based on the increments since calendar time $t_1$, i.e. $\mathbf{U} (t_2) - \mathbf{U} (t_1)$ and $\hat{\mathbf{V}} (t_2) - \hat{\mathbf{V}} (t_1)$. If $\hat{\mathbf{L}}(t_i, t_{i-1})$ denotes the lower triangular Cholesky factor of $\hat{\mathbf{V}}(t_i) - \hat{\mathbf{V}}(t_{i-1})$ where $t_0\coloneqq 0$, then the two multivariate test statistics
\begin{equation*}
	\mathbf{Z}_1 \coloneqq \hat{\mathbf{L}}(t_1, t_0)^{-1} \mathbf{U} (t_1) \quad \text{and} \quad \mathbf{Z}_2 \coloneqq \hat{\mathbf{L}}(t_2, t_1)^{-1} (\mathbf{U} (t_2) - \mathbf{U} (t_1))
\end{equation*}
are asymptotically bivariate standard normally distributed and independent. To obtain a univariate test statistic for each stage one can use the $L^2$-norm $|z|_2\coloneqq \sqrt{z_1^2 + z_2^2}$. This yields the stagewise $p$-values
\begin{equation}\label{eq:stagewise_p_values_pfsos}
	p_r = 1 - F_{\chi_2^2}(|Z_r|_2^2)
\end{equation}
for $r \in \{1,2\}$. Thus we obtain $p$-values in analogy to \cite{Wei:1984} as we can rewrite
\begin{equation*}
	|Z_1|_2^2 = \mathbf{U}(t_1)^T \hat{\mathbf{V}}(t_1)^{-1}\mathbf{U}(t_1)\quad \text{resp.} \quad |Z_2|_2^2 = (\mathbf{U}(t_2)-\mathbf{U}(t_1))^T (\hat{\mathbf{V}}(t_2) - \hat{\mathbf{V}}(t_1))^{-1}(\mathbf{U}(t_2)-\mathbf{U}(t_1)).
\end{equation*}
The stagewise $p$-values can then be further processed using the standard methods for adaptive designs of clinical trials.

\section{General framework}\label{sec:notation}
In this section we will introduce the framework and all its components we need to construct the multivariate process and resulting test statistics. This will allow us to expand upon the example from the previous section by considering an arbitrary number of composite events.\\
Let $(\Omega, \mathcal{F},\mathbb{P})$ denote the probability space upon which all random variables are defined. Any patient $i$ enters the trial at the random time $R_i\geq 0$ and is assigned to treatment group $Z_i \in \{0,1\}$. During the stay in the trial, the patients assume one state of the state space $\{0,1,\dots,l\}$. The assumed state may change in course of time. For each $i$ and any $s \geq 0$, let $X_i(s) \in \{0,1,\dots,l\}$ denote the individual state of patient $i$ at time $s$ after its trial entry. The multistate process $(X_i(s))_{s \geq 0}$ is assumed to be càdlàg. Each individual starts in the initial node $0$, i.e. $X_i(0)=0$ for any $i$. We assume that $(X_i(s))_{s \geq 0}$ is Markovian, i.e.
\begin{equation}\label{eq:markov_property}
	\mathbb{P}\left[X_i(s_2) \in S|(X_i(u))_{u \in [0,s_1]}\right] = \mathbb{P}[X_i(s_2) \in S|X(s_1)]
\end{equation}
for any $S\subset \{0,1,\dots,l\}$ and any $0\leq s_1 < s_2$. This means that the probability of any future transition depends on the past only via the current time $s_1$ and the current state $X(s_1)$. Given the Markov property, the probabilities of transitions of the patient are completely determined by the transition intensities $\lambda^{jk} \colon \mathbb{R}_+ \to \mathbb{R}_+$ for $j,k \in \{0,\dots,l\}$ defined by
\begin{equation*}
	\lambda^{jk}(s)\coloneqq \lim_{h \searrow 0} \frac{\mathbb{P}[X_i(s+h) = k|X_i(s-) = j]}{h}
\end{equation*}
where $X_i(s-)$ again denotes the left hand limit of the càdlàg process. When illustrating such a model as in Figure \ref{fig:pfsos}, we only draw arrows from a state $k$ to some other state $j$ if $\lambda^{jk} \not\equiv 0$. %We will assume that the graph emerging in this way has no directed circles. In particular, this model then is non-recurrent, i.e. each state will be visited at most once. 
As there may be some terminal or absorbing state $j$ in the model (for example the state of death in the model of Section \ref{sec:pfs_os}), these terminal states have the property that
\begin{equation*}
	\lambda^{jk}\equiv 0 \quad \forall k\neq j.
\end{equation*}
On that basis, we can define hitting or first entry times for each node $j \in \{1,\dots,l\}$ by
\begin{equation*}
	T_i^{\{j\}}\coloneqq \inf\{s \geq 0 | X_i(s) = j\}.
\end{equation*}
If the time of a composite event is of interest, this can be depicted by the hitting time of a set of nodes $E \subset \{1,\dots,l\}$ with
\begin{equation*}
	T_i^{E}\coloneqq \inf\{s \geq 0 | X_i(s) \in E\}.
\end{equation*}
However, the observations for all our patients can be censored. Either by administrative censoring at the time of an interim or the final analysis or by random dropout. In the former case, an analysis at calendar time $t$ induces an adminsitrative censoring at $(t-R_i)_+$. The latter case is depicted by the random variable $\tilde{C}_i$. Combining this information at calendar time $t$ yields the censoring variable $C_i(t)\coloneqq \tilde{C}_i \wedge (t-R_i)_+$. Note that censoring by some terminal event as e.g. death is not included here.\\ 
We assume the tuples $(R_i, Z_i, \tilde{C}_i, (X_i(s))_{s \geq 0})$ for $i \in \{1,\dots,n\}$ to be independet replicates of some tuple $(R, Z, \tilde{C}, (X(s))_{s \geq 0})$. Additionally, we assume independent censoring and recruitment mechanisms, i.e. that the variable $Z$, $R$ and $\tilde{C}$ are mutually independent.\\
With the quantities given above, we can now define counting processes and at risk indicators for the occurence of certain events. First, for any event given via a set $E \subset \{0,\dots,k\}$, the multivariable process $(N^{E}_i(t,s))_{t\geq 0, s\geq 0}$ defined by
\begin{equation*}
	N^{E}_i(t,s) \coloneqq \mathbbm{1}_{\{T_i^{E} \leq s \wedge C_i(t)\}}
\end{equation*}
indicates whether a visit of patient $i$ in the subset $E$ of the state space (resp. the event associated with this set) has been observed before calendar time $t$ and trial time $s$. We can also aggregate these individual counting processes to obtain the overall number of events $N^E(t,s)\coloneqq \sum_{i=1}^n N^{E}_i(t,s)$ observed before calendar time $t$ and trial time $s$.\\
As indicated by the Markov property in \eqref{eq:os_martingale_mv}, the current state of a process at some trial time $s$ determines the probability of future transitions. Hence, it will be of utmost importance for our procedure to keep track of the current state of each individual. The multivariable process $(Y_i^j(t,s))_{t\geq 0, s\geq 0}$ indicates whether it is known at calendar time $t$ that individual $i$ has been in state $j$ just before its trial time $s$. It is defined by
\begin{equation*}
	Y_i^j(t,s)\coloneqq \mathbbm{1}_{\{X_i(s-) = j\}} \cdot \mathbbm{1}_{\{s \leq C_i(t)\}}
\end{equation*} 
We can aggregate these indicators in the complete study sample or in the subsample of treatment group 1 to obtain the processes $(Y^j(t,s))_{t\geq 0, s\geq 0}$ resp. $(Y^{j, Z=1}(t,s))_{t\geq 0, s\geq 0}$ counting the number of patients in state $j$ with
\begin{equation}\label{eq:at_risk_process}
	Y^j(t,s)\coloneqq \sum_{i=1}^n Y_i^j(t,s) \quad \text{resp.} \quad Y^{j,Z=1}(t,s)\coloneqq \sum_{i=1}^n Z_i \cdot Y_i^j(t,s).
\end{equation}
As we only consider the first hitting time of a subset $E$ of the state space which corresponds to the event time of the corresponding (composite) event for now, we need to restrict these at risk numbers to those patients which did not already experience the event $E$. Those quantities are given by $(Y^{j \to E}_i(t,s))_{t\geq 0, s\geq 0}$ resp. $(Y^{j \to E, Z=1}_i(t,s))_{t\geq 0, s\geq 0}$ for any patient $i$. Those quantities are defined by
\begin{equation*}
	Y_i^{j \to E}(t,s)\coloneqq Y_i^j(t,s)\cdot \mathbbm{1}_{\{T_i^E \geq s\}} \quad \text{resp.} \quad Y_i^{j\to E,Z=1}(t,s)\coloneqq Y_{i,Z=1}^j(t,s)\cdot \mathbbm{1}_{\{T_i^E \geq s\}}.
\end{equation*}
and the aggregates $(Y^{j \to E}(t,s))_{t\geq 0, s\geq 0}$ resp. $(Y^{j \to E, Z=1}(t,s))_{t\geq 0, s\geq 0}$ over the whole study sample are as in \eqref{eq:at_risk_process}.

\section{Construction of the multivariate process and its asymptotics}\label{sec:general_framework}
First, we consider only one composite event represented by a subspace of the state space except the initial state $E \subset \{1,\dots,k\}$. For this event, we define the stochastic process
\begin{equation*}
	U^E(t)\coloneqq \frac{1}{\sqrt{n}} \sum_{i=1}^n \int_{[0,t \wedge T_i^E]} \left(Z_i -\sum_{j \notin E} Y_i^{j \to E}(t,u) \frac{Y^{j \to E, Z=1}(t,u)}{Y^{j \to E}(t,u)} \right) N_i^E(t,du).
\end{equation*} 
Different from a standard log-rank test for the composite endpoint $E$ we need to distinguish between the states from which a transition to one of the component events belonging to $E$ occurs.\\
Now, let $d \in \mathbb{N}$ composite events of interest be given via some subsets of the state space $E_1,\dots,E_d$. Similar to the formulation of the null hypothesis for the example in Section \ref{sec:pfs_os}, we can now formulate the null hypothesis in terms of the joint distribution by
\begin{equation}\label{eq:H0_dist_fct}
	H_0\colon F^0_{T^{E_1}, \dots, T^{E_d}} = F^1_{T^{E_1}, \dots, T^{E_d}}
\end{equation}
or in terms of the cumulative transition intensity matrix by
\begin{equation}\label{eq:H0_cum_int}
	H_0\colon \mathbf{\Lambda}^0 = \mathbf{\Lambda}^1.
\end{equation}
To test these hypotheses, we will consider the multivariate process $\mathbf{U}\colon \mathbb{R}_+ \to \mathbb{R}^d$
\begin{equation*}
	\mathbf{U}(t) = (U^{E_1}(t),\dots,U^{E_d}(t)).
\end{equation*}
In Corollary \ref{cor:asymp_equiv_comp} it is shown that this process is asymptotically equivalent to a martingale with respect to the filtration incorporating any information about events in the multi-state model. Please note that it is the same as the multivariate process introduced in \cite{Lin:1991} in a competing risks setting, where each of the states $1,\dots,k$ is a terminal state. In this special case, there is no difference between the two methods as there are no intermediate events which can be used to make predictions about later events of the same patient.
The variance of $U^E(t)$ can then be estimated by
\begin{equation*}
	\hat{V}^E(t) \coloneqq \frac{1}{n} \sum_{i=1}^n \int_{[0,t \wedge T_i^E]} \sum_{j \notin E} \left( Y_i^{j \to E}(t,u) \cdot \frac{Y^{j \to E, Z=1}(t,u)}{Y^{j\to E}(t,u)} \left(1 - \frac{Y^{j\to E, Z=1}(t,u)}{Y^{j\to E}(t,u)}\right) \right) N_i^E(t,du)
\end{equation*}
If at least two of the sets $E_1,\dots,E_d$ have a non-empty intersection, i.e. if two of the composite events may occur at the same time, there is a non-zero covariance of the corresponding entries in $\mathbf{U}(t)$. Accordingly for $b, c \in \{1,\dots,d\}$, the covariance $\text{Cov}(U^{E_b}(t), U^{E_c}(t))$ can be estimated by
\begin{align*}
	&\hat{V}^{E_b E_c}(t)\\
	\coloneqq &\frac{1}{n} \sum_{i=1}^n \int_{[0,t \wedge T_i^{E_b \cup E_c}]} \sum_{j \notin E} \left( Y_i^{j \to E_b \cup E_c}(t,u)\cdot \right. \\
	&\qquad \qquad \qquad \left. \frac{Y^{j \to E_b \cup E_c, Z=1}(t,u)}{Y^{j \to E_b \cup E_c}(t,u)} \left(1 - \frac{Y^{j \to E_b \cup E_c, Z=1}(t,u)}{Y^{j \to E_b \cup E_c}(t,u)}\right) \right) N_i^{E_b \cap E_c}(t,du).
\end{align*}
The covariance of the process $\mathbf{U}$ is thus estimated by the $d\times d$-matrix valued function $\hat{\mathbf{V}}\colon \mathbb{R}_+ \to \mathbb{R}_+^{d\times d}$ with
\begin{equation*}
	\hat{\mathbf{V}}(t)= \left(\hat{V}^{E_b E_c}(t)\right)_{1 \leq b,c \leq d}.
\end{equation*}
Invertibility of increments of the variance matrix $\mathbf{V}$ given in Theorem \ref{thm:rebolledo_application} (which are estimated by the corresponding increments of $\hat{\mathbf{V}}$) should be checked first. Invertibility is of course not given if e.g. $E_b = E_c$ for some $b\neq c$ and $b,c \in \{1,\dots,d\}$. However, in the Appendix we provide criteria for invertibility of $\mathbf{V}$. In most relevant cases as e.g. in those mentioned in the main manuscript, these criteria can easily be verified to be fulfilled.\\
In a group sequential design, there is a sequence of calendar dates $t_1,\dots,t_m$ at which analyses shall be conducted. It is a result of Theorem \ref{thm:transition_stat} that the increments of the process $\mathbf{U}$ are asymptotically independent and the covariance matrix of these increments can consistently be estimated by increments of $\hat{\mathbf{V}}$. Now, we consider the lower triangular Cholesky factor $\hat{\mathbf{L}}_r$ of the increment $\hat{\mathbf{V}}(t_r) - \hat{\mathbf{V}}(t_{r - 1})$ for $r \in \{1,\dots,m\}$ with $t_0=0$. If we define 
\begin{equation*}
	\mathbf{Z}_r \coloneqq \hat{\mathbf{L}}_r (\mathbf{U}(t_r) - \mathbf{U}(t_{r - 1})),
\end{equation*}
we obtain the asymptotical result
\begin{equation*}
	\mathbf{Z}\coloneqq(\mathbf{Z}_1, \dots, \mathbf{Z}_m) \overset{\mathcal{D}}{\to} \mathcal{N}(0, \mathbf{1}_{d\cdot m})
\end{equation*} 
as $n \to \infty$ where $\mathbf{1}_{d\cdot m}$ denotes the identity matrix of size $d\cdot m$.\\
Based on this, various stagewise test statistics can be constructed. With reference to \cite{Wei:1984}, we propose 
\begin{equation*}
	S_r \coloneqq (\mathbf{U}(t_{r}) - \mathbf{U}(t_{r-1}))^T (\hat{\mathbf{V}}(t_{r}) - \hat{\mathbf{V}}(t_{r-1}))^+ (\mathbf{U}(t_{r}) - \mathbf{U}(t_{r-1})) {\overset{\mathcal{D}}{\to}} \chi^2_{d} \qquad \forall r \in \{1,\dots,m\}
\end{equation*}
as a natural test statistic for testing $H_0$ in stage $r\in \{1,\dots,m\}$. Here, $\mathbf{A}^+$ denotes the Moore-Penrose inverse of a quadratic matrix $\mathbf{A}$. Following Corollary \ref{cor:test_stat}, $S_1,\dots,S_m$ are asymptotically independent and asymptotically follow a $\chi^2$-distribution with $d$ degrees of freedom. Stagewise $p$-values can accordigly be computed by
\begin{equation}\label{eq:stagewise_p_value}
	p_r \coloneqq 1 - F_{\chi^2_d}(S_r)
\end{equation}
for any $r \in \{1,\dots,m\}$.\\
Going beyond the joint assessment of PFS/EFS and OS which has been explained in Section \ref{sec:pfs_os}, the general framework can be used beyond this example. For example, it might also be of interest to assess long-term efficacy (PFS) and long-term safety (as time to life-threatening toxicity or death). This results in a slightly more complex illness-death model as depicted by Figure \ref{fig:efficacy_safety_msm} and $k=3$, $d=2$, $E_1 = \{2,3\}$ (PFS) and $E_2 = \{1,3\}$ (safety).
\begin{figure}[h]
	\centering
	\begin{tikzpicture}
		\node[circle, draw] (initial) {$0$};
		\node[circle, draw] (progression) [above right=of initial] {$1$};
		\node[circle, draw] (toxicity) [below right=of initial] {$2$};
		\node[circle, draw] (death) [below right=of progression] {$3$};
		
		\draw [->] (initial) to (progression);
		\draw [->] (initial) to (toxicity);
		\draw [->] (initial) to (death);
		\draw [->] (progression) to[out = -75, in = 75] (toxicity);
		\draw [->] (progression) to (death);
		\draw [->] (toxicity) to[out = 105, in = -105] (progression);
		\draw [->] (toxicity) to (death);
		
		\begin{pgfonlayer}{background}
			\foreach \nodename in {progression, toxicity, death} {
				\coordinate (\nodename') at (\nodename);
			}
			\path[fill=blue!20, draw=green!20, line width = 1.2cm, line cap=round, line join=round] 
			(progression') to[bend left=0] (death') 
			to[bend left=0] (progression') -- cycle;
			\path[fill=red!20, draw=blue!20, line width = 1.0cm, line cap=round, line join=round] 
			(toxicity') to[bend left=0] (death') 
			to[bend left=0] (toxicity') -- cycle;
		\end{pgfonlayer}	
		
		\matrix [draw,below right,xshift = 30pt] at (current bounding box.north east) {
			\node [circle, draw, label=right:toxicity] {$1$}; \\
			\node [circle, draw, label=right:progression] {$2$}; \\
			\node [circle, draw, label=right:death] {$3$}; \\
			\node [circle, fill=green!20, label=right:toxic event] {$\quad$};\\
			\node [circle, fill=blue!20, label=right:PFS] {$\quad$};\\
		};
		
	\end{tikzpicture}
	\caption{Representation of the simultaneous assesment of efficacy and toxicity as a multi-state model} 
	\label{fig:efficacy_safety_msm}
\end{figure}
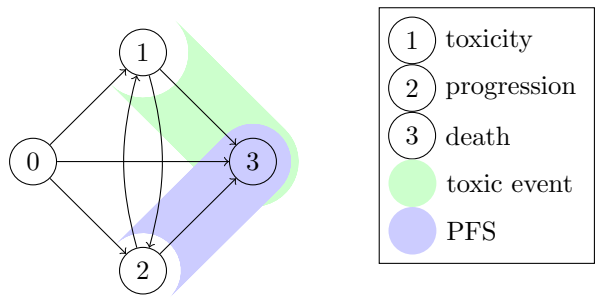
Additionally, it is notable, that this framework contains an adaptive design for a single-primary endpoint as a special case ($k=d=1$) and coincides with the procedure of \cite{Lin:1991} in a competing risks setting ($d = k$ and $E_c = \{c\}\, \forall c \in \{1,\dots,d\}$).

\section{Group sequential and adaptive designs}\label{sec:gs_design}
Based on the results obtained in the previous sections, we outline the procedure of a two-stage adaptive design for testing the null hypothesis \eqref{eq:H0_dist_fct_pfsos} resp. \eqref{eq:H0_cum_int_pfsos} in the illness-death model from Figure \ref{fig:pfsos}. Procedures with more than two stages and/or different endpoints can be constructed analogously. Stagewise $p$-values computed as in \eqref{eq:stagewise_p_value} can be used with any kind of combination function and decision boundary to set up a group sequential testing procedure. An extensive overview on these topics is e.g. given in \cite{Wassmer:2016}.\\
We do not consider stopping for futility here. Hence, an adaptive level $\alpha$ test can be specified by values $0 < \alpha_1 < \alpha < 1$ where $\alpha$ is the overall significance level and $\alpha_1$ the rejection level of the first stage, together with a conditional error function $\alpha_2\colon(\alpha_1,1] \to [0,1]$ which is monotonically decreasing and fulfills the equality
\begin{equation*}
	\int_{\alpha_1}^1 \alpha_2(x)\, dx = \alpha - \alpha_1.
\end{equation*}
In terms of the stagewise $p$-values, this leads to the rejection region $\{p_1 \leq \alpha_1\} \cup \{p_1 > \alpha_1, p_2 \leq \alpha_2(p_1)\}$. With regard to the stagewise increments of the multivariate process $\mathbf{U}$, $\alpha_1$ and $\alpha_2(p_1)$ induce an ellipse as the decision boundary (see Figure \ref{fig:nb2004}). At the interim analysis, the design of the trial (e.g. its sample size) may be adapted based on observations concerning PFS- and OS-events observed until this analysis date.

\subsection{Initial planing and sample size recalculations}
The second formulation \eqref{eq:H0_cum_int_pfsos} of the null hypothesis which is in terms of the transition intensities appears to be more natural as it directly refers to the components of the underlying model. In this way, treatment effects can easily be quantified via separate hazard ratios for each transition. Such a transition-wise consideration is presented in \cite{Le-Rademacher:2018} and employed for planning purposes in \cite{Erdmann:2023}. If the transition intensities in the control group ($Z=0$) and the treatment group ($Z=1$) are given by transition intensity functions $\lambda_0^{01}$, $\lambda_0^{02}$, $\lambda_0^{12}$ and $\lambda_1^{01}$, $\lambda_1^{02}$, $\lambda_1^{12}$, respectively, the transition-wise proportional hazard assumption amounts to
\begin{equation*}
	\frac{\lambda_1^{jk}(s)}{\lambda_0^{jk}(s)} = \delta^{jk} \qquad \forall\, s\geq 0\, ,(j,k) \in \{(0,1), (0,2), (1,2)\}
\end{equation*}
for some constants $\delta^{jk} > 0$. In particular, it follows that the porportional hazards assumption does in general not hold for OS or PFS. For PFS, for example, it can only be guaranteed if $\lambda_0^{01}$ and $\lambda_0^{02}$ are constant. Sample size calculation can also be adapted to cases in which transition intensities do not behave proportional.\\
It is shown in the Appendix, that in this case the increments $\mathbf{U}(t_r) - \mathbf{U}(t_{r-1})$ are approximately normally distributed with mean $\boldsymbol{\theta}(t_r) - \boldsymbol{\theta}(t_{r-1})$ governed by the size of the $\delta^{jk}$'s. The process $\boldsymbol{\theta}\coloneqq (\theta^{\text{PFS}}, \theta^{\text{OS}})$ is defined by
\begin{equation}\label{eq:pfs_drift}
	\theta^{\text{PFS}}(t)\coloneqq -\sqrt{n} \sum_{k=1}^2 (1 - \delta^{0k}) \int_{[0,t]} \left(1 - \frac{y^{0,Z=1}(t,u)}{y^{0}(t,u)} \right) y^{0,Z=1}(t,u) \lambda^{0k}_0(u) du
\end{equation}
\begin{equation}\label{eq:os_drift}
	\theta^{\text{OS}}(t)\coloneqq -\sqrt{n} \sum_{j=0}^1 (1 - \delta^{j2}) \int_{[0,t]} \left(1 - \frac{y^{j,Z=1}(t,u)}{y^{j}(t,u)} \right) y^{j,Z=1}(t,u) \lambda^{j2}_0(u) du.
\end{equation}
The covariance function $\mathbf{V}$ can be approximated as stated in the Appendix. In particular, the stagewise test statistic $S_m$ will asymptotically follow a non-central $\chi^2_2$-distribution with location parameter $n\cdot \eta_r$ with $\eta_r\coloneqq (\boldsymbol{\theta}(t_r) - \boldsymbol{\theta}(t_{r-1}))^T (\mathbf{V}(t_{r}) - \mathbf{V}(t_{r-1}))^{-1} (\boldsymbol{\theta}(t_r) - \boldsymbol{\theta}(t_{r-1}))$. All of the quantities mentioned here can be computed given a recruitment and censoring mechanism.\\
At the time of the interim analysis, each transition can be considered separately to check whether the original assumptions about the intensity functions are valid or need to be updated. Such updated assumptions may then be plugged into \eqref{eq:pfs_drift} and \eqref{eq:os_drift} to compute the conditional power of the procedure for the remaining stage. An adjustment of external parameters, such as the number of cases by extending the duration of recruitment, can then be made in order to achieve a desired conditional power. An example of such a procedure can be found in the Appendix. %either application example or appendix

\section{Simulation study}\label{sec:simulation_study}
In this simulation study, we want to examine the procedure's compliance with the nominal type I error level as well as the sample size calculation procedure introduced in Section \ref{sec:gs_design} for practically relevant scenarios. 
\subsection{Design}
We consider different baseline scenarios of Markovian illness-death models presented in \cite{Meller:2019}. In each of the three scenarios the transition intensities are given by
\begin{equation}\label{eq:weibull_intensities}
	\lambda^{jk}(s) = \lambda^{jk} \cdot s^{\gamma^{jk} - 1}
\end{equation}
for any $(j,k) \in \{(0,1), (0,2), (1,2)\}$. The parameter values for the three scenarios considered here can be found in Table \ref{table:parameters}. The intensities thus have a Weibull form. In the special case in which $\gamma^{01} = \gamma^{02} = \gamma^{12} = 1$, the transition intensities are constant over time and the model is referred to as a time-homogenenous Markov model.\\
\begin{table}
	\centering
	\caption{Parameter configurations and event rates for the three scenarios of the simulation study}
	\begin{tabular}{|c|l|l|l|l|l|l|l|l|l|l|}
		\hline
		Scenario&$\gamma^{01}$&$\lambda^{01}$&$\gamma^{02}$&$\lambda^{02}$&$\gamma^{12}$&$\lambda^{12}$&$\pi^{\text{PFS}}(t_1)$&$\pi^{\text{OS}}(t_1)$&$\pi^{\text{PFS}}(t_2)$&$\pi^{\text{OS}}(t_2)$\\
		\hline
		1&1&0.6&1&0.075&1&0.9&0.431&0.241&0.889&0.745\\
		2&1.3&0.85&1.3&0.1&1.3&0.3&0.522&0.189&0.980&0.694\\
		3&1.5&0.57&0.5&0.065&0.85&1.1&0.441&0.235&0.957&0.772\\
		\hline
	\end{tabular}
	\label{table:parameters}
\end{table} 
For sample size calculation and type II error considerations, we assume that the groups differ in each transition by a proportionality factor as in Section \ref{sec:gs_design}. This means that the transition intensities in the experimental treatment group ($Z=1$) are given by
\begin{equation*}
	\lambda_1^{jk}(s) = \delta^{jk} \cdot \lambda_0^{jk} \quad \forall s\geq 0
\end{equation*}
given hazard ratios $\delta^{01}$, $\delta^{02}$ and $\delta^{12}$. In our scnearios, we assume values $\delta^{02} = 1$, $\delta^{01} \in \{0.8, 0.7, 0.6\}$ and $\delta^{12} \in \{0.85, 0.8, 0.75\}$.\\
We considered sequential designs with an interim analysis after $t_1 = 2.5$ and a final analysis at $t_2 = 5$ years. The duration of the accrual period was set to $a = 3$. The recruitment date of each trial participant was simulated as uniformly distributed on the interval $[0,a]$. Under these conditions, the expected proportion of all trial participants that will have experienced a PFS- or an OS-event by calendar time $t_1$ resp. $t_2$ are given in the last four columns of Table \ref{table:parameters} for each scenario.\\
The tests were carried out at an overall significance level of $\alpha = 5\%$. Stagewise $p$-values were combined using the inverse normal combination function with equal weights for the two stages. We applied the sequential decision boundaries according to Pocock as well as O'Brien-Fleming (abbreviated by P resp. OF in Tables \ref{table:emperrors} and \ref{table:sample_sizes}). We chose those designs as we consider them as well-known and -established in the wide range of group sequential and adaptive designs.\\
For type I error rate examination, we considered overall sample sizes $n \in \{50, 100, 250, 500, 1000\}$. For each alternative scenario, the respective analytically determined sample size to achieve a power of 80\% was used. The allocation ratio was always set to $1$, such that both groups are equal in size. For each constellation, 100,000 simulation runs were executed. Empirical rejection rates under the null hypothesis based on the two different sequential rejection boundaries were then computed and are displayed in Table \ref{table:emperrors}. Analytically computed sample sizes and empirical rejection rates for the different deviations from the null hypothesis are displayed in Table \ref{table:sample_sizes}.\\
The simulation study was performed with R 4.2.3 (see \cite{R}).

\subsection{Type I error rate}
\begin{table}
	\centering
	\caption{Empirical type I errors for different sample sizes, sequential decision bounds and scenarios}
	\begin{tabular}{|c|c|c|c|c|}
		\hline
		\multicolumn{2}{|c}{}&\multicolumn{3}{|c|}{Scenarios}\\
		\hline
		$n$&Type&1&2&3\\
		\hline
		\hline
		\multirow{2}{*}{50}&P&0.0531&0.0537&0.0557\\
		\cline{2-5}
		&OF&0.0546&0.0541&0.0566\\
		\hline
		\multirow{2}{*}{100}&P&0.0513&0.0520&0.0540\\
		\cline{2-5}
		&OF&0.0509&0.0528&0.0542\\
		\hline
		\multirow{2}{*}{250}&P&0.0505&0.0511&0.0521\\
		\cline{2-5}
		&OF&0.0512&0.0514&0.0517\\
		\hline
		\multirow{2}{*}{500}&P&0.0505&0.0511&0.0508\\
		\cline{2-5}
		&OF&0.0511&0.0516&0.0510\\
		\hline
		\multirow{2}{*}{1000}&P&0.0505&0.0489&0.0495\\
		\cline{2-5}
		&OF&0.0506&0.0500&0.0505\\
		\hline
	\end{tabular}
	\label{table:emperrors}
\end{table} 

For the chosen number of simulation runs, the 95\%-confidence interval for an underlying true value of 0.05 has a breadth of about 0.0027. Thus, it can be concluded that the presented method is slightly anti-conservative for small sample sizes (below 100). However, the deviation from the nominal significance level of 5\% is small. This tendency can also be observed for the standard log-rank test \cite{Heller:1996}, which can be regarded as a special case of our framework (with $k=1$, $d=1$ and $m=1$ in terms of the framework introduced in Section \ref{sec:general_framework}). The choice of sequential decision boundaries does not seem to play a role for the actually achieved significance level.

\subsection{Power and sample size}
\begin{table}
	\centering
	\caption{Empirically determined sample sizes for different sequential designs, scenarios and alternative hypotheses necessary to reach a power of 80\%}
	\begin{tabular}{|c|c|c|c|c|}
		\hline
		$\delta^{01}$&\multirow{2}{*}{Type}&\multicolumn{3}{|c|}{Scenarios}\\
		\cline{3-5}
		$\delta^{12}$&&1&2&3\\
		\hline
		\hline
		0.8&P&620 (79.88\%)&565 (80.27\%)&506 (80.12\%)\\
		\cline{2-5}
		0.85&OF&577 (79.98\%)&528 (80.17\%)&466 (80.16\%)\\
		\hline
		0.8&P&512 (80.01\%)&487 (80.10\%)&417 (80.02\%)\\
		\cline{2-5}
		0.8&OF&473 (79.83\%)&451 (80.00\%)&383 (79.97\%)\\
		\hline
		0.8&P&408 (79.72\%)&407 (79.79\%)&334 (79.74\%)\\
		\cline{2-5}
		0.75&OF&376 (80.15\%)&374 (79.99\%)&306 (79.95\%)\\
		\hline
		0.7&P&294 (80.25\%)&255 (80.11\%)&238 (80.06\%)\\
		\cline{2-5}
		0.85&OF&275 (80.11\%)&240 (80.02\%)&220 (80.12\%)\\
		\hline
		0.7&P&272 (80.00\%)&241 (80.17\%)&219 (80.25\%)\\
		\cline{2-5}
		0.8&OF&254 (80.25\%)&226 (80.03\%)&202 (80.08\%)\\
		\hline
		0.7&P&244 (79.98\%)&223 (80.12\%)&196 (79.98\%)\\
		\cline{2-5}
		0.75&OF&227 (80.01\%)&208 (80.05\%)&180 (79.83\%)\\
		\hline
		0.6&P&157 (80.04\%)&133 (80.20\%)&126 (80.13\%)\\
		\cline{2-5}
		0.85&OF&147 (80.21\%)&126 (80.22\%)&117 (80.23\%)\\
		\hline
		0.6&P&153 (80.33\%)&131 (80.30\%)&122 (80.25\%)\\
		\cline{2-5}
		0.8&OF&143 (80.22\%)&123 (79.93\%)&112 (80.11\%)\\
		\hline
		0.6&P&146 (80.24\%)&126 (79.99\%)&115 (80.01\%)\\
		\cline{2-5}
		0.75&OF&136 (80.32\%)&119 (80.38\%)&106 (79.97\%)\\
		\hline
	\end{tabular}
	\label{table:sample_sizes}
\end{table} 
For the chosen number of simulation runs, the 95\%-confidence interval for an underlying true value of 0.8 has a breadth of about 0.005. In view of the results from Table 2, it can therefore be assumed that the analytical determination of the number of cases described above works reliably in terms of compliance with the targeted power.\\
Even slight improvements concerning post-progression survival, which can be expressed by a smaller value of $\delta^{12}$ may substantially reduce the sample size required to achieve a power of 80\%. 

\section{Application example}\label{sec:application_example}
For further illustration of the methods introduced above, we now apply it to the data of the NB2004-HR trial (NCT number NCT03042429). This was an open-label, multicenter, prospective randomized controlled Phase III trial for treatment of children with High Risk Neuroblastoma. The patients received six (control intervention) resp. eight (experimental intervention) cycles of induction chemotherapy. Afterwards, both groups received the same high-dose chemotherapy with autologous stem cell rescue and a consolidation therapy afterwards (see \cite{Berthold:2020} for further details). The NB2004-HR trial had only one primary endpoint: Event-free survival (EFS), defined as time from diagnosis to progression, recurrence, secondary malignant disease or death, whatever occurs first. Nevertheless, post-progression survival is of key interest both here and in many other studies with EFS as primary endpoint as well. In particular, the interaction of first- and second-line therapy given after progression is of special interest. The analysis did not reveal a relevant difference between the two interventions, neither in the primary endpoint EFS nor in the secondary endpoint OS. To illustrate our methodology, we will reanalyse the NB2004-HR trial using our testing method as in the context of Section \ref{sec:pfs_os} in order to compare the joint distribution of EFS and OS between the two interventions.\\ 
The NB2004-HR trial was originally designed group-sequentially according to \cite{Pampallona:1994} including two interim analyses with futility stops and was later amended to an inverse normal adaptive design according to \cite{Wassmer:06} using the same rejection region as the initial group-sequential design. On this basis, a data-dependent sample size recalculation was performed at the second interim analysis. We mimic this design by conducting interim analyses at the same time points. However, we do not make any binding futility stops. Stage-wise decision boundaries are determined by adopting the alpha-spending that resulted from the original procedure. Stagewise $p$-values are combined using the inverse normal combination function with equal weights for all stages.\\
The results are displayed in Figure \ref{fig:nb2004}. Each of the three plots in Figure \ref{fig:nb2004} shows the value of the increment of $\sqrt{n}\mathbf{U}$ for the respective stage. As the test statistic $\sqrt{n}\mathbf{U}$ is bivariate, its observed value (as well as the corresponding rejection region) is located in the two-dimensional plane. The OS component is plotted in the vertical direction, the EFS component in the horizontal one. For both components of $\sqrt{n}\mathbf{U}$, negative values indicate an advantage of the experimental therapy in comparison with the control therapy. The red ellipses show the rejection bounds. If one of the test statistic increments would have been localised outside of the respective ellipse, the trial would have stopped with rejection of $H_0$. The exact shape of the ellipse that determines the rejection bound for the increments of $\sqrt{n}\mathbf{U}$ depends on the sequential decision boundaries in terms of $p$-values, the (estimated) variance of the increments of $\sqrt{n}\mathbf{U}$ given by the increments of $\hat{\mathbf{V}}$ as well as the results of previous analyses. The stagewise $p$-values resulting from our test turn out to be $p_1 = 0.536$, $p_2 = 0.227$ and $p_3 = 0.592$. Thus, the null hypothesis of no difference in the joint distribution of EFS and OS between the interventions cannot be rejected. This is qualitatively consistent with the results of the NB2004-HR trial as reported in \cite{Berthold:2020}.\\

\begin{figure}[h]
	\centering
	\includegraphics[width=\textwidth]{"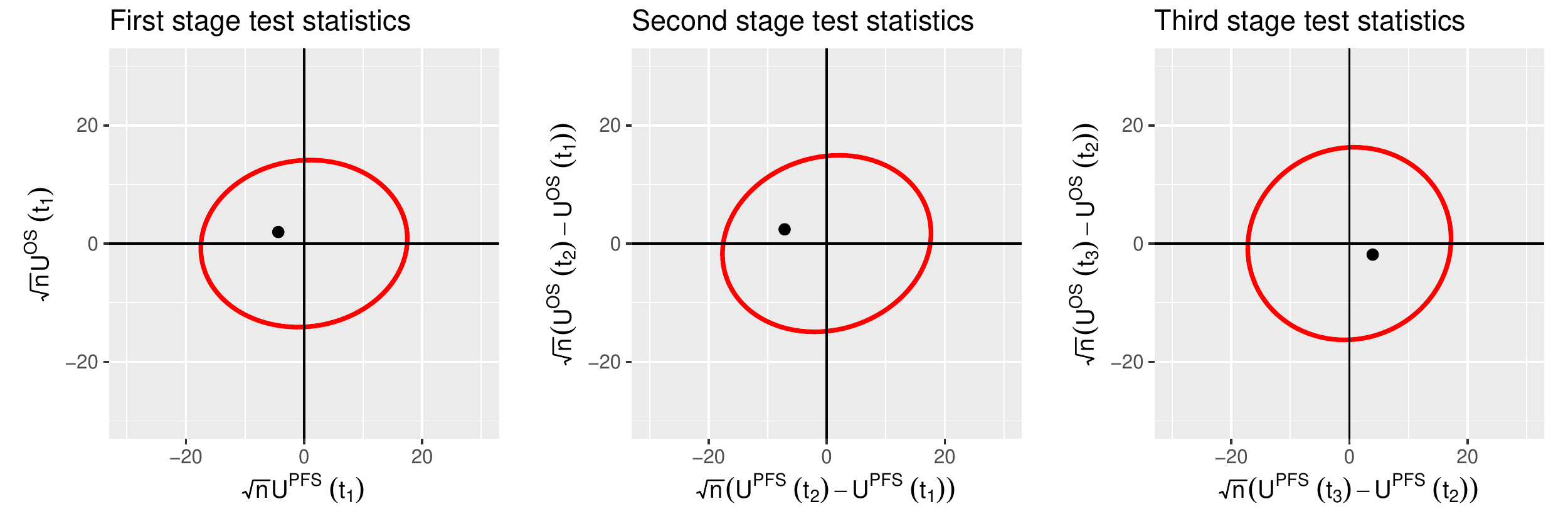"}
	\caption{Result of the evaluation of the NB2004-HR trial data with out new method}
	\label{fig:nb2004}
\end{figure}

In the original study, only primary outcome data on EFS were used for sample size recalculation as recommended by \cite{Wassmer:06}. The interim results from the first two phases suggested a slight benefit of the experimental treatment in terms of EFS, which is also evident from Figure \ref{fig:nb2004} in the form of a slight shift to the left of the observed statistic. This led the researchers to increase the number of events after which the final analysis should take place. This increase resulted from the requirement to achieve a conditional power of 80\% to reject the null hypothesis for EFS based on the original planning alternative. Information going beyond EFS-events has not been considered at the interim analyses. However, post-progression survival also plays a major role for a final assessment of a treatment for this disease. As one can see from the first two plots of Figure \ref{fig:nb2004}, a slightly unfavourable effect of the experimental treatment on post-progression survival has been observed at the interim analyses. This fact might have led the investigators to a different conclusion at the second interim analysis, if EFS and OS interim data had both been available in the context of the NB2004-HR trial.

\section{Discussion}\label{sec:discussion}
An adaptive group-sequential testing procedure for multiple primary time-to-event endpoints has been introduced. It serves as a generalisation to the adaptive log-rank test as presented e.g. by \cite{Wassmer:06} and coincides with the group-sequential procedure of \cite{Lin:1991} in case of a competing risks setting. As a consequence of the concerns raised in \cite{BPB:02}, an extension of \cite{Lin:1991} to an adaptive design is not straightforward. We do achieve that here by embedding these endpoints in a multi-state model under the assumption of Markovianity of this model. Our approach enables data-dependent interim design modifications based on the information about all involved endpoints. Similar to the one-sample procedure presented in \cite{Danzer:2022}, this is based on conditioning on the prior history of each patient which can be reduced to the current disease state under the Markov assumption.\\
%PFS/OS-Beispiel
%Fallstudie im Appendix
As a particularly relevant application example from a practical point of view, we place a special focus on the joint consideration of PFS and OS in the framework of a simple illness-death model (see Figure \ref{fig:pfsos}). Both endpoints play a major role in oncology clinical trials. While OS is the most objectively defined endpoint, the choice of PFS as the primary endpoint is already established in many cases, depending on the specific indication. Often, as in our example in Section \ref{sec:application_example}, both endpoints are of crucial importance, suggesting a joint consideration of both. For immunotherapies in particular, it is possible that therapy effects only become apparent or even after a progression event \cite{Hoos:2012}. This is another reason why a joint consideration of the endpoints OS and PFS appears helpful. Using the data from the NB2004-HR study, we also show how the different aspects of our multivariate test can be visualised and interpreted (see Figure \ref{fig:nb2004}). The benefits that can be gained from our adaptive design in terms of interim, data-driven design changes have also been demonstrated through a case study in the Appendix.\\
%Fehler 1. Art und Power
Our simulation study has demonstrated that adherence to the nominal type I error rate is not only given asymptotically in the limit of large sample sizes, but is also acceptable at case numbers of practical relevance. We also considered effects of several differences in the survival pattern between the two groups on power and sample size of a corresponding study. In this regard, it should be noted that our procedure appears particularly suitable and superior to an adaptive test of the single endpoint PFS in terms of power if there is a relevant effect for post-progression survival. If no or only a very small effect with respect to OS is expected, obviously, the restriction to a classical adaptive test of the single endpoint PFS (in the sense of \cite{Wassmer:06}) appears more reasonable.\\
%"Einfache" Erweiterungen:
% - alle Treffzeiten
% - Gewichte
% - Mehr-Stichproben
The methods presented here can be extended in several ways. To this end, it should be noted that the components of our general test statistic only take the first hitting time of some subset $E$ of the state space into account. Hence, it is not only a test for the null hypothesis \eqref{eq:H0_cum_int} formulated in terms of the cumulative intensity matrix, but also for the joint distribution of the $d$ different endpoints as in \eqref{eq:H0_dist_fct}. However, as an alternative to our approach, one could also think of test statistics that incorporate any hitting time of this set $E$ and not only the first one. The derivation of such a procedure is analogous to the derivation of the procedure on which we are focussing here. It is also carried out in full detail in the Appendix. These two approaches are the same for our illness-death model from Section \ref{sec:pfs_os} but can already differ for slightly more complex cases as e.g. the setting in Figure \ref{fig:efficacy_safety_msm}. However, the latter approach can only be used as a test for the null hypothesis \eqref{eq:H0_cum_int} as formulated in terms of the transition intensities.\\
Furthermore, we want to point out that we assumed transition-wise proportional hazards throughout our examples. Note that this generally does not imply proportional hazards for the endpoints (e.g. PFS and OS) considered within the multi-state model. In addition, settings are possible where the transition-wise comparisons may also not be subject to the proportional hazards assumption. If this is known, it might be beneficial to apply weights as it is also common for the univariate log-rank test (see e.g. Section V.2 in \cite{Andersen:93}). Such a weight can be selected separately for each individual transition. The theory lined out in the Appendix allows any weight fulfilling the standard assumptions.\\
An extension of our two-sample procedure to a $k$-sample procedure for some $k>2$ follows analogously to the way that the multivariate testing procedure from \cite{Wei:1984} is extended by \cite{Palesch:1994}, and is thus possible without further problems.\\
%Einseitige Tests und "post hoc"-Analysen
As stated in \eqref{eq:H0_dist_fct} resp. \eqref{eq:H0_cum_int}, the null hypothesis is formulated in a quite general way. Accordingly, we try to detect any kind of deviation from this hypothesis. In \cite{Wei:1984} and \cite{Lin:1991}, tests for one-sided hypotheses have also been suggested which could be adopted. However, these may be sensitive to undesirable alternatives as demonstrated in \cite{Bloch:2001}. In any case, rejection of the general null hypothesis should be followed by more in-depth analyses. This could be achieved by a closed testing procedure involving the various components of $\mathbf{M}$, similar to the suggestions made in \cite{Lehmacher:1991}. A separate analysis of the transition intensities as demonstrated in Section IV.4.4 of \cite{Andersen:93} is also recommendable.
%Überprüfung Markov-Eigenschaft
The correctness of the procedure requires the Markov assumption. This allows us to adequately incorporate the information gathered so far into the testing procedure. Before use, the appropriateness of this assumption should therefore be investigated. On the one hand, this can be based on the expertise of clinical investigators. On the other hand, it can be examined in historical data sets that reflect the population to be recruited for the present trial. Corresponding testing procedures have been developed for the simple illness death model of Figure \ref{fig:pfsos} in \cite{Rodriguez:2012} as well as for general multi-state models in \cite{Titman:2020}.\\
Considering the topics discussed here, we strive to further develop and improve our framework in future research to enhance applicability in clinical trials. In principle, analogous methods can be developed in non-Markov settings, e.g. in the scenario of Semi-Markov models (see e.g. \cite{Meller:2019} for details on the Semi-Markovia illness-death model). Furthermore, we aim to develop tests for marginal distributions of time-to-event endpoints in Markovian multi-stage models. Compared to the current methodology, these should not only consider the conditional distribution of an endpoint and still allow for the possibility of interim design adaptations based on the disease history data of all patients.

\section*{Acknowledgement}
This work was funded by the Deutsche Forschungsgemeinschaft (DFG, German Research Foundation) – 413730122. The NB2004-HR trial was funded by a grant obtained by Frank Berthold from the Deutsche Krebshilfe (grant number 70107712). We thank Frank Berthold for the opportunity to scientifically use the anonymized EFS and OS data of the NB2004-HR trial (published elsewhere) as a real clinical example in the context of this work.

\bibliographystyle{unsrt}
\bibliography{manuscript_mvlr_bib}

\section*{Appendix}

\subsection*{A.1 Proofs of main results}

In what follows we will prove the asymptotic results mentioned in the main manuscript. To do so, we will consider the different transitions of the Markovian multi-state model individually and use techniques as in the univariate setting (see e.g. \cite{Tsiatis:1981}). Afterwards, the results thus obtained can be plugged together.\\
For two states $j\neq k$ of the Markovian multi-state model, we will consider the collection of $\sigma$-algebras $(\mathcal{F}^{\{j\}\{k\}}_i(t,s))_{t,s \geq 0}$ containing only information about the transition $j \to k$. The $\sigma$-algebra $\mathcal{F}^{\{j\}\{k\}}_{i}(t,s)$ is generated by the random variables
\begin{equation*}
	\begin{split}
		&\mathbbm{1}_{\{R_i \leq t\}}, R_i \cdot \mathbbm{1}_{\{R_i \leq t\}}, \mathbbm{1}_{\{\tilde{C}_i 	\leq s \wedge (t-R_i)_+\}}, \tilde{C}_i \cdot \mathbbm{1}_{\{\tilde{C}_i \leq s \wedge (t-R_i)_+\}},\{ Y_i^j(t,u) | u \leq s \},\\
		&\mathbbm{1}_{\{T_i^{\{k\}} \leq s \wedge C_i(t)\}} \cdot \mathbbm{1}_{\{ X_i(T_i^{\{k\}}-) = j \}}, T_i^{\{k\}}\cdot \mathbbm{1}_{\{T_i^{\{k\}} \leq s \wedge C_i(t)\}} \cdot \mathbbm{1}_{\{ X_i(T_i^{\{k\}}-) = j \}}
	\end{split}
\end{equation*}
and hence contains any information about the transition $j \to k$ of patient $i$ available until calendar time $t$ and trial time $s$. The process $(N_i^{\{j\}\{k\}}(t,s))_{t,s \geq 0} $, defined by
\begin{equation*}
	N_i^{\{j\}\{k\}}(t,s) \coloneqq \mathbbm{1}_{\{T_i^{\{k\}} \leq s \wedge C_i(t)\}} \cdot \mathbbm{1}_{\{X_i(T_i^{\{k\}}-) = j\}},
\end{equation*}
indicates whether a transition $j \to k$ has occured for patient $i$ before calendar time $t$ and trial time $s$. Then the compensated process $(M_i^{\{j\}\{k\}}(t,s))_{s \geq 0}$ with
\begin{equation}\label{eq:transition_specific_martingale}
	M_i^{\{j\}\{k\}}(t,s) \coloneqq N_i^{\{j\}\{k\}}(t,s) - \int_{[0,s]} Y_i^j(t,u) \lambda^{jk}(u) \, du
\end{equation}
is known to be a martingale with respect to (w.r.t.) the filtration $(\mathcal{F}^{\{j\}\{k\}}_{i}(t,s))_{s \geq 0}$ for any fixed $t \geq 0$. The aggregated information concerning all patients of the study sample is thus given by the collection of $\sigma$-algebras $(\mathcal{F}^{\{j\}\{k\}}(t,s))_{t,s \geq 0}$ defined by $\mathcal{F}^{\{j\}\{k\}}(t,s) \coloneqq \sigma(\cup_{i=1}^n \mathcal{F}^{\{j\}\{k\}}_i(t,s))$. Based on that, we further define the collections of $\sigma$-algebras $(\mathcal{F}^{\{k\}}(t,s))_{t,s \geq 0}$, $(\mathcal{F}^{E}(t,s))_{t,s \geq 0}$ and $(\mathcal{F}(t,s))_{t,s \geq 0}$ given by
\begin{equation*}
	\mathcal{F}^{\{k\}}(t,s)\coloneqq \sigma\left( \cup_{j \in \{0,\dots,l\} \colon \lambda^{jk}\not\equiv 0} \mathcal{F}^{\{j\}\{k\}}(t,s) \right)
\end{equation*}
resp.
\begin{equation*}
	\mathcal{F}^{E}(t,s)\coloneqq \sigma\left( \cup_{k \in E} \mathcal{F}^{\{k\}}(t,s) \right)
\end{equation*}
resp.
\begin{equation*}
	\mathcal{F}(t,s)\coloneqq \sigma\left( \cup_{(j,k) \colon \lambda^{jk} \not\equiv 0} \mathcal{F}^{\{j\}\{k\}}(t,s) \right)
\end{equation*}
for any $k \in \{1,\dots,l\}$, $E \subset \{1,\dots,l\}$ and $s,t\geq 0$.\\
We will consider first the asymptotic behaviour of the multivariable transition specific stochastic processes $(U^{\{j\}\{k\}}(t,s))_{t,s \geq 0}$ and $(U^{\{j\}\{k\}}(t))_{t \geq 0}$ given by
\begin{equation*}\label{eq:transition_stat}
	U^{\{j\}\{k\}}(t,s)\coloneqq \frac{1}{\sqrt{n}} \sum_{i=1}^n \int_{[0,s]} Q^{jk}(t,u) \left( Z_i - \frac{Y^{j, Z=1}(t,u)}{Y^{j}(t,u)} \right)\, N_i^{\{j\}\{k\}}(t, du)
\end{equation*}
resp. 
\begin{equation*}
	U^{\{j\}\{k\}}(t)\coloneqq U^{\{j\}\{k\}}(t,t)
\end{equation*}
with a random weight function $(Q^{jk}(t,s))_{t,s \geq 0}$ that is predictable as a stochastic process in its second argument for any $t \geq 0$ w.r.t. $(\mathcal{F}(t,s))_{s \geq 0}$.
\begin{thm}\label{thm:transition_stat}
	Let $j \neq k$ be two states of a Markovian multi-state model with $\lambda^{jk} \not\equiv 0$. We define the stochastic processes $(u^{\{j\}\{k\}}(t,s))_{t,s \geq 0}$ and $(u^{\{j\}\{k\}}(t))_{t\geq 0}$ by
	\begin{equation}\label{eq:transition_stat_limit}
		u^{\{j\}\{k\}}(t,s)\coloneqq \frac{1}{\sqrt{n}} \sum_{i=1}^n \underbrace{ \int_{[0,s]} q^{jk}(t,u) \left( Z_i - \frac{y^{j, Z=1}(t,u)}{y^{j}(t,u)} \right)\, M_i^{\{j\}\{k\}}(t, du) }_{ \eqqcolon u_i^{\{j\}\{k\}}(t,s) }
	\end{equation}
	resp. 
	\begin{equation*}
		u^{\{j\}\{k\}}(t)\coloneqq u^{\{j\}\{k\}}(t,t).
	\end{equation*}
	Under the assumptions
	\begin{enumerate}[label = (A\arabic*)]
		\item \label{item:assumption:weight_convergence}
		\begin{equation*}
			\sup_{0\leq s \leq \tau} |Q^{jk}(t,s) - q^{jk}(t,s)| \overset{\mathbb{P}}{\to}0
		\end{equation*}
		for any $\tau < t$,
		\item \label{item:assumption:at_risk}
		\begin{equation*}
			\sup_{\tau_1 \leq s \leq \tau_2} \left| \frac{y^{j, Z=1}(t,s)}{y^{j}(t,s)}- \frac{Y^{j, Z=1}(t,s)}{Y^{j}(t,s)} \right| \overset{\mathbb{P}}{\to} 0
		\end{equation*} 
		for any $\tau_1 > 0$, $\tau_2 < t$ where $y^{j}(t,s) = \mathbb{E}[Y^{j}(t,s)]$ and $y^{j, Z=1}(t,s) = \mathbb{E}[Y^{j, Z=1}(t,s)]$,
		\item \label{item:assumption:weight_bound} $Q^{jk}(t,s)$ is bounded in its second argument over $[0,t]$, left continuous and with right hand limits,
	\end{enumerate}
	the processes $(U^{\{j\}\{k\}}(t))_{t\geq 0}$ and $(u^{\{j\}\{k\}}(t))_{t\geq 0}$ are asymptotically equivalent, i.e.
	\begin{equation*}
		(U^{\{j\}\{k\}}(t) - u^{\{j\}\{k\}}(t)) \overset{\mathbb{P}}{\to} 0
	\end{equation*}
	for any $t \geq 0$.
\end{thm}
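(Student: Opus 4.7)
The plan begins with the observation that $U^{\{j\}\{k\}}(t)$ is unchanged if $N_i^{\{j\}\{k\}}(t,du)$ is replaced by its compensated version $M_i^{\{j\}\{k\}}(t,du)$ from \eqref{eq:transition_specific_martingale}. Indeed, the contribution of the compensator equals
\begin{equation*}
\frac{1}{\sqrt{n}} \int_{[0,t]} Q^{jk}(t,u) \sum_{i=1}^n \left( Z_i - \frac{Y^{j,Z=1}(t,u)}{Y^{j}(t,u)} \right) Y_i^j(t,u)\,\lambda^{jk}(u)\,du,
\end{equation*}
and the inner sum vanishes identically because $\sum_i Z_i Y_i^j = Y^{j,Z=1}$ and $\sum_i Y_i^j = Y^j$. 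Hence $U^{\{j\}\{k\}}(t)$ admits the same martingale-integral representation as $u^{\{j\}\{k\}}(t)$, but with the random couple $(Q^{jk}, Y^{j,Z=1}/Y^{j})$ in place of $(q^{jk}, y^{j,Z=1}/y^{j})$.

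Next I would split $U^{\{j\}\{k\}}(t) - u^{\{j\}\{k\}}(t)$ into two stochastic integrals driven by the martingales $M_i^{\{j\}\{k\}}(t,\cdot)$:
\begin{align*}
D_1(t) &\coloneqq \frac{1}{\sqrt{n}} \sum_{i=1}^n \int_{[0,t]} \bigl( Q^{jk}(t,u) - q^{jk}(t,u) \bigr) \left( Z_i - \frac{Y^{j,Z=1}(t,u)}{Y^{j}(t,u)} \right) M_i^{\{j\}\{k\}}(t,du),\\
D_2(t) &\coloneqq \frac{1}{\sqrt{n}} \sum_{i=1}^n \int_{[0,t]} q^{jk}(t,u) \left( \frac{y^{j,Z=1}(t,u)}{y^{j}(t,u)} - \frac{Y^{j,Z=1}(t,u)}{Y^{j}(t,u)} \right) M_i^{\{j\}\{k\}}(t,du).
\end{align*}
The integrands are predictable with respect to $(\mathcal{F}^{\{j\}\{k\}}(t,s))_{s\geq 0}$ (left-continuous in $s$ and adapted by the càdlàg nature of $X_i$ together with \ref{item:assumption:weight_bound}), so $s \mapsto D_k(t,s)$ is a local square-integrable martingale for $k=1,2$, and Lenglart's inequality is the natural tool.

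For $D_1(t)$, using $\bigl(Z_i - Y^{j,Z=1}/Y^{j}\bigr)^2 Y_i^j \leq Y_i^j$ and $\tfrac{1}{n}\sum_i Y_i^j \leq 1$, the predictable variation at $s=t$ is dominated by
\begin{equation*}
\Bigl(\sup_{0\leq u \leq \tau} |Q^{jk}(t,u) - q^{jk}(t,u)|\Bigr)^2 \int_0^\tau \lambda^{jk}(u)\,du \; + \; 2\int_\tau^t \lambda^{jk}(u)\,du
\end{equation*}
for any $\tau < t$. Assumption \ref{item:assumption:weight_convergence} drives the first summand to zero in probability, while the integrability of $\lambda^{jk}$ on compacta lets the second summand be made arbitrarily small as $\tau \nearrow t$. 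Lenglart's inequality then yields $D_1(t) \overset{\mathbb{P}}{\to} 0$.

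The main obstacle will be the analogous bound for $D_2(t)$, because assumption \ref{item:assumption:at_risk} provides uniform convergence of the at-risk ratio only on subintervals $[\tau_1,\tau_2]$ with $\tau_1>0$ and $\tau_2<t$. Near $s=0$ the sample is still accruing, and near $s=t$ only participants with vanishing recruitment time remain at risk, so both ratios may degenerate at the endpoints. I would split $[0,t] = [0,\tau_1] \cup [\tau_1,\tau_2] \cup [\tau_2,t]$, bound the squared ratio difference trivially by a constant on the two outer pieces (both ratios lie in $[0,1]$), and absorb the outer contributions into $\int_0^{\tau_1}\lambda^{jk}\,du + \int_{\tau_2}^{t}\lambda^{jk}\,du$, which is arbitrarily small for $\tau_1$ small and $\tau_2$ close to $t$. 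On the middle piece, \ref{item:assumption:at_risk} makes the ratio difference uniformly smaller than any prescribed $\varepsilon>0$ with high probability. A standard $\varepsilon$-$\delta$ chase combined with Lenglart's inequality then gives $D_2(t) \overset{\mathbb{P}}{\to} 0$, and adding the two estimates completes the proof.
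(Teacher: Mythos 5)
Your proposal is correct and follows essentially the same route as the paper's proof: the same cancellation showing the compensator contribution vanishes, the identical decomposition into the $(Q^{jk}-q^{jk})$-term and the at-risk-ratio term, control of the predictable variations by combining assumptions \ref{item:assumption:weight_convergence}--\ref{item:assumption:weight_bound} with excision of small intervals near the endpoints where the uniform convergence fails, and a Lenglart-type conclusion (the paper invokes Theorem II.5.1 of \cite{Andersen:93} for the same purpose). The only cosmetic difference is that the paper first aggregates the per-patient martingales into the two group-specific processes $\bar{M}_g^{\{j\}\{k\}}$ so that the integrands no longer depend on $i$, whereas you keep the sum over patients and implicitly use their orthogonality; both packagings are equivalent.
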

\begin{proof}
	First, we note that for any $t\geq 0$
	\begin{equation*}
		U^{\{j\}\{k\}}(t) = \frac{1}{\sqrt{n}} \sum_{i=1}^n \int_{[0,t]} Q^{jk}(t,u) \left( Z_i - \frac{Y^{j, Z=1}(t,u)}{Y^{j}(t,u)} \right)\, M_i^{\{j\}\{k\}}(t, du)
	\end{equation*}
	as
	\begin{align*}
		&U^{\{j\}\{k\}}(t) - \frac{1}{\sqrt{n}} \sum_{i=1}^n \int_{[0,t]} Q^{jk}(t,u) \left( Z_i - \frac{Y^{j, Z=1}(t,u)}{Y^{j}(t,u)} \right)\, M_i^{\{j\}\{k\}}(t, du)\\
		=&\frac{1}{\sqrt{n}} \sum_{i=1}^n \int_{[0,t]} Q^{jk}(t,u) \left( Z_i - \frac{Y^{j,Z=1}(t,u)}{Y^j(t,u)} \right) d\left( \int_{[0,u]} Y_i^j(t,v) \lambda^{jk}(v) dv \right)\\
		=&\frac{1}{\sqrt{n}} \sum_{i=1}^n \int_{[0,t]} Q^{jk}(t,u) Z_i Y_i^j(t,u) \lambda^{jk}(u)\,du\\
		&-\frac{1}{\sqrt{n}} \sum_{i=1}^n \int_{[0,t]} Q^{jk}(t,u) \frac{Y^{j,Z=1}(t,u)}{Y^j(t,u)} Y_i^j(t,u) \lambda^{jk}(u)\,du\\
		=&\frac{1}{\sqrt{n}} \int_{[0,t]} Q^{jk}(t,u) Y^{j, Z=1}(t,u) \lambda^{jk}(u)\,du - \frac{1}{\sqrt{n}} \int_{[0,t]} Q^{jk}(t,u) Y^{j}(t,u) \frac{Y^{j,Z=1}(t,u)}{Y^j(t,u)} \lambda^{jk}(u)\,du\\
		=&0.
	\end{align*}
	Now, we can rearrange the difference between the two processes
	\begin{align*}
		&U^{\{j\}\{k\}}(t) - u^{\{j\}\{k\}}(t)\\
		=&\frac{1}{\sqrt{n}} \sum_{i=1}^n \int_{[0,t]} (Q^{jk}(t,u) - q^{jk}(t,u)) \left( Z_i - \frac{Y^{j, Z=1}(t,u)}{Y^{j}(t,u)} \right) M_i^{\{j\}\{k\}}(t,du)\\
		&+\frac{1}{\sqrt{n}} \sum_{i=1}^n \int_{[0,t]} q^{jk}(t,u) \left( \frac{y^{j, Z=1}(t,u)}{y^{j}(t,u)} - \frac{Y^{j, Z=1}(t,u)}{Y^{j}(t,u)} \right) M_i^{\{j\}\{k\}}(t,du).
	\end{align*}
	As $Z_i$ takes values in $\{0,1\}$, we can split the two groups by
	\begin{align*}
		&U^{\{j\}\{k\}}(t) - u^{\{j\}\{k\}}(t)\\
		=&\sum_{g=0}^1 \frac{1}{\sqrt{n}} \int_{[0,t]} (Q^{jk}(t,u) - q^{jk}(t,u)) \left( g - \frac{Y^{j, Z=1}(t,u)}{Y^{j}(t,u)} \right) \bar{M}_g^{\{j\}\{k\}}(t,du)\\
		&+ \sum_{g=0}^1 \frac{1}{\sqrt{n}}  \int_{[0,t]} q^{jk}(t,u) \left( \frac{y^{j, Z=1}(t,u)}{y^{j}(t,u)} - \frac{Y^{j, Z=1}(t,u)}{Y^{j}(t,u)} \right) \bar{M}_g^{\{j\}\{k\}}(t,du) .
	\end{align*}
	where $\bar{M}_g^{\{j\}\{k\}}(t,s)\coloneqq \sum_{i=1}^n M_i^{\{j\}\{k\}}(t,s) \cdot \mathbbm{1}_{Z_i = g}$ are the group specific aggregates of the transition specific martingales. The integrands of both terms are bounded and predictable w.r.t. the filtration $(\mathcal{F}^{\{j\}\{k\}}_{t,s})_{0 \leq s \leq t}$ and $(\bar{M}_g^{\{j\}\{k\}})_{0\leq s \leq t}$ is a mean zero and square integrable martingale w.r.t. the same filtration. Hence, the processes
	$(A_{g,t}(s))_{0\leq s \leq t}$ and $(B_{g,t}(s))_{0\leq s \leq t}$ with $g \in \{0,1\}$ defined by
	\begin{equation*}
		A_{g,t}(s)\coloneqq \frac{1}{\sqrt{n}} \int_{[0,s]} (Q^{jk}(t,u) - q^{jk}(t,u)) \left( g - \frac{Y^{j, Z=1}(t,u)}{Y^{j}(t,u)} \right) \bar{M}_g^{\{j\}\{k\}}(t,du)
	\end{equation*}
	and
	\begin{equation*}
		B_{g,t}(s)\coloneqq \frac{1}{\sqrt{n}} \int_{[0,s]} q^{jk}(t,u) \left( \frac{y^{j, Z=1}(t,u)}{y^{j}(t,u)} - \frac{Y^{j, Z=1}(t,u)}{Y^{j}(t,u)} \right) \bar{M}_g^{\{j\}\{k\}}(t,du)
	\end{equation*}
	are also mean zero and square integrable martingales with predictable covariation processes
	\begin{equation*}
		\langle A_{g,t} \rangle (s) = \frac{1}{n} \int_{[0,s]} (Q^{jk}(t,u) - q^{jk}(t,u))^2 \left( g - \frac{Y^{j, Z=1}(t,u)}{Y^{j}(t,u)} \right)^2 \lambda^{jk}(u) Y^{j,Z=1}(t,u)\,du
	\end{equation*}
	resp.
	\begin{equation*}
		\langle B_{g,t} \rangle (s) = \frac{1}{n} \int_{[0,s]} q^{jk}(t,u)^2 \left( \frac{y^{j, Z=1}(t,u)}{y^{j}(t,u)} - \frac{Y^{j, Z=1}(t,u)}{Y^{j}(t,u)} \right)^2 \lambda^{jk}(u) Y^{j,Z=1}(t,u)\,du.
	\end{equation*}
	For any fixed $\varepsilon > 0$ we can find some $\tau_{\varepsilon} > 0$ s.t. 
	\begin{equation*}
		\int_{[s-\tau_{\varepsilon}, s]} \lambda^{jk}(u)\, du \cdot \sup_{0\leq u \leq t} \left((Q^{jk}(t,u) - q^{jk}(t,u))^2 \left( g - \frac{Y^{j, Z=1}(t,u)}{Y^{j}(t,u)} \right)^2\right) < \frac{\varepsilon}{2}
	\end{equation*}
	resp.
	\begin{equation*}
		\int_{[0,\tau_{\varepsilon}]\cup[s-\tau_{\varepsilon},s]} \lambda^{jk}(u)\, du \cdot \sup_{0\leq u \leq t} \left(q^{jk}(t,u)^2 \left( \frac{y^{j, Z=1}(t,u)}{y^{j}(t,u)} - \frac{Y^{j, Z=1}(t,u)}{Y^{j}(t,u)} \right)^2\right) < \frac{\varepsilon}{2}
	\end{equation*}
	$\mathbb{P}$-almost surely for any $0\leq s \leq t$. We can find such a value because $\lambda^{jk}$ is bounded on compact intervals and hence the supremum is taken over values which are all bounded by definition or the additional assumptions. Further, since
	\begin{equation*}
		\sup_{0 \leq u \leq t-\tau_{\varepsilon}}|Q^{jk}(t,u) - q^{jk}(t,u)|\overset{\mathbb{P}}{\to}0\quad \text{and} \quad \sup_{\tau_{\varepsilon} \leq u \leq t-\tau_{\varepsilon}}\left| \frac{y^{j, Z=1}(t,u)}{y^{j}(t,u)} - \frac{Y^{j, Z=1}(t,u)}{Y^{j}(t,u)} \right| \overset{\mathbb{P}}{\to}0
	\end{equation*}
	for any $\delta > 0$ we can also find some $n_0$ large enough s.t. for any $n\geq n_0$,
	\begin{equation*}
		\begin{split}
			&\mathbb{P} \left[ \Lambda^{jk}(t) \sup_{0 \leq u \leq t-\tau_{\varepsilon}}|Q^{jk}(t,u) - q^{jk}(t,u)| < \frac{\varepsilon}{2} \right] > 1-\delta \qquad \text{and}\\
			&\mathbb{P} \left[ \Lambda^{jk}(t) \sup_{\tau_{\varepsilon} \leq u \leq t-\tau_{\varepsilon}}\left| \frac{y^{j, Z=1}(t,u)}{y^{j}(t,u)} - \frac{Y^{j, Z=1}(t,u)}{Y^{j}(t,u)} \right| < \frac{\varepsilon}{2} \right] > 1-\delta.
		\end{split}
	\end{equation*}
	Consequently, 
	\begin{equation*}
		\mathbb{P}\left[ \langle A_{g,t} \rangle (s) < \varepsilon \right] > 1-\delta \quad \text{and} \quad \mathbb{P}\left[ \langle B_{g,t} \rangle (s) < \varepsilon \right] > 1-\delta
	\end{equation*}
	for any $0\leq s \leq t$, $n\geq n_0$ and $g \in \{0,1\}$. Following Theorem II.5.1 of \cite{Andersen:93}, we obtain in particular
	\begin{equation*}
		(U^{\{j\}\{k\}}(t) - u^{\{j\}\{k\}}(t)) \overset{\mathbb{P}}{\to} 0
	\end{equation*}
	as $n \to \infty$.
\end{proof}

We will use this result as a basis to construct processes that do not only sum up the information about a single transition $j\to k$. As we are mostly interested in the first hitting time of a subset $E$ of the state space of our model as the time of a corresponding composite event, we will need to stop the observation of patients which have already experiences some event. To make this more formal, we will consider some collection of random variables $(V_i(t))_{i \in \{1,\dots,n\}, t\geq 0}$ s.t. $V_i(t)$ is a $(\mathcal{F}(t,s))_{s \geq 0}$-stopping time for any $i \in \{1,\dots,n\}$ and $t \geq 0$ and define the stopped processes $(\tilde{M}^{\{j\}\{k\}}_i(t,s))_{t,s \geq 0}$ by
\begin{align*}
	\tilde{M}^{\{j\}\{k\}}_i(t,s) & \coloneqq M^{\{j\}\{k\}}_i(t,s \wedge V_i(t))\\
	& = \underbrace{N^{\{j\}\{k\}}_i(t,s) \cdot \mathbbm{1}_{\{T_i^{\{k\}} \leq V_i(t)\}}}_{\eqqcolon \tilde{N}^{\{j\}\{k\}}_i(t,s)} - \int_{[0,s]} \underbrace{Y_i^{j}(t,u) \cdot \mathbbm{1}_{\{u \leq V_i(t)\}}}_{\eqqcolon \tilde{Y}_i^{j}(t,u)} \lambda^{jk}(u)\, du  
\end{align*}
Accordingly, we need to consider the adjusted "at risk" processes $(\tilde{Y}^{j}(t,s))_{t,s \geq 0}$ and $(\tilde{Y}^{j,Z=1}(t,s))_{t,s \geq 0}$ by accumulating the corresponding patient-wise risk indicators, i.e.
\begin{equation*}
	\tilde{Y}^{j}(t,s)\coloneqq \sum_{i=1}^n \tilde{Y}_i^{j}(t,s) \qquad \text{resp.} \qquad \tilde{Y}^{j, Z=1}(t,s)\coloneqq \sum_{i=1}^n \tilde{Y}_i^{j}(t,s) \cdot \mathbbm{1}_{\{Z_i = 1\}}.
\end{equation*}
Furthermore, we define the processes $(\tilde{U}^{\{j\}\{k\}}(t,s))_{t,s \geq 0}$ and $((\tilde{u}^{\{j\}\{k\}}(t,s))_{t,s \geq 0})$ by 
\begin{enumerate}
	\item replacing $Y$ and $y$ in \eqref{eq:transition_stat} and \eqref{eq:transition_stat_limit} by $\tilde{Y}$ and $\tilde{y}$, respectively, and
	\item stopping the integrator process at $V_i(t)$ (or equivalently integrating over $[0, s \wedge V_i(t)]$ instead of $[0,s]$).
\end{enumerate}
\begin{cor}\label{cor:transition_stat_stopped}
	Let the conditions of Theorem \ref{thm:transition_stat} be given where in \ref{item:assumption:at_risk} the "at risk"-processes $Y$ and $y$ are replaced by $\tilde{Y}$ and $\tilde{y}$, respectively. Let $(V_i(t))_{i \in \{1,\dots,n\}, t \geq 0}$ be a collection of random variables s.t. $V_i(t)$ is a  $(\mathcal{F}(t,s))_{s \geq 0}$-stopping time for any $i \in \{1,\dots,n\}$ and $t \geq 0$. Then, the processes $(\tilde{U}^{\{j\}\{k\}}(t))_{t\geq 0}$ and $(\tilde{u}^{\{j\}\{k\}}(t))_{t\geq 0}$ are asymptotically equivalent, i.e.
	\begin{equation*}
		(\tilde{U}^{\{j\}\{k\}}(t) - \tilde{u}^{\{j\}\{k\}}(t)) \overset{\mathbb{P}}{\to} 0
	\end{equation*}
	for any $t \geq 0$.
\end{cor}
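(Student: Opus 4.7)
The plan is to reduce the corollary to Theorem~\ref{thm:transition_stat} by exploiting the fact that stopping at a stopping time preserves the martingale property. Since $V_i(t)$ is an $(\mathcal{F}(t,s))_{s\geq 0}$-stopping time, the optional stopping theorem applied to $(M_i^{\{j\}\{k\}}(t,s))_{s\geq 0}$ yields that $\tilde{M}_i^{\{j\}\{k\}}(t,s) = M_i^{\{j\}\{k\}}(t, s\wedge V_i(t))$ is again a mean-zero square-integrable martingale with respect to the same filtration, whose compensator is obtained from that of $M_i^{\{j\}\{k\}}$ by replacing the factor $Y_i^j(t,u)\lambda^{jk}(u)$ by $\tilde{Y}_i^j(t,u)\lambda^{jk}(u)$. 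Note here that $\tilde{Y}_i^j(t,u)=Y_i^j(t,u)\mathbbm{1}_{\{u\leq V_i(t)\}}$ is still left-continuous and adapted in $u$, hence predictable, so stochastic integration against $\tilde{M}_i^{\{j\}\{k\}}(t,\cdot)$ behaves formally as in the unstopped case.

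First, I would verify the analogue of the cancellation identity at the start of the previous proof, namely that
\begin{equation*}
	\tilde{U}^{\{j\}\{k\}}(t) = \frac{1}{\sqrt{n}}\sum_{i=1}^n \int_{[0,t]} Q^{jk}(t,u)\left(Z_i - \frac{\tilde{Y}^{j,Z=1}(t,u)}{\tilde{Y}^{j}(t,u)}\right) \tilde{M}_i^{\{j\}\{k\}}(t,du).
\end{equation*}
The computation is identical to the one in the proof of Theorem~\ref{thm:transition_stat} with every $Y$ replaced by $\tilde{Y}$, since the aggregation identities $\sum_{i}Z_i\tilde{Y}_i^{j}=\tilde{Y}^{j,Z=1}$ and $\sum_{i}\tilde{Y}_i^{j}=\tilde{Y}^{j}$ make the compensator contribution vanish as before.

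Second, I would decompose $\tilde{U}^{\{j\}\{k\}}(t) - \tilde{u}^{\{j\}\{k\}}(t)$ into the same two stochastic integrals as in the theorem and split by treatment group $g\in\{0,1\}$, thereby producing processes $\tilde{A}_{g,t}$ and $\tilde{B}_{g,t}$ that are stochastic integrals of bounded predictable integrands against the mean-zero square-integrable martingales $\bar{\tilde{M}}_g^{\{j\}\{k\}}(t,\cdot)\coloneqq \sum_{i:Z_i=g}\tilde{M}_i^{\{j\}\{k\}}(t,\cdot)$. Their predictable variation processes take the same form as $\langle A_{g,t}\rangle$ and $\langle B_{g,t}\rangle$ of the theorem, with the factor $Y^{j,Z=1}(t,u)$ in the intensity being replaced by $\tilde{Y}^{j,Z=1}(t,u)$.

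Third, the uniform convergence hypotheses \ref{item:assumption:weight_convergence} and the modified \ref{item:assumption:at_risk} (phrased in terms of $\tilde{Y}$) together with the boundedness from \ref{item:assumption:weight_bound} allow the same choice of $\tau_\varepsilon$ and $n_0$ as in the theorem to make both predictable variations smaller than $\varepsilon$ with probability at least $1-\delta$, using that $\tilde{Y}^{j,Z=1}\leq Y^{j,Z=1}$ wherever explicit upper bounds are needed. An application of the Lenglart inequality (Theorem~II.5.1 of \cite{Andersen:93}) exactly as before then delivers the claimed asymptotic equivalence. The only genuine subtlety is the predictability check for $\tilde{Y}_i^{j}$ noted above; once this is in place, every step of the proof of Theorem~\ref{thm:transition_stat} carries over verbatim, so no new probabilistic obstacle arises.
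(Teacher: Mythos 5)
Your proposal is correct and takes essentially the same route as the paper: the paper's proof consists precisely of invoking the optional stopping theorem to conclude that $(\tilde{M}_i^{\{j\}\{k\}}(t,s))_{s \geq 0}$ is a martingale w.r.t.\ $(\mathcal{F}(t,s))_{s \geq 0}$ and then stating that the rest follows analogously to Theorem~\ref{thm:transition_stat}, which is exactly the reduction you carry out in detail. The one point you gloss over, which the paper makes explicit, is that $M_i^{\{j\}\{k\}}(t,\cdot)$ is a priori a martingale only w.r.t.\ the transition-specific filtration $(\mathcal{F}_i^{\{j\}\{k\}}(t,s))_{s \geq 0}$, so before stopping at the $(\mathcal{F}(t,s))_{s \geq 0}$-stopping time $V_i(t)$ one must use the independence of the observations of distinct patients to extend the martingale property to the enlarged filtration.
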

\begin{proof}
	Due to the optional stopping theorem and the independence of the observations of distinct patients, we also know that the stopped process $(\tilde{M}_i^{\{j\}\{k\}}(t,s))_{s \geq 0}$ is a martingale w.r.t. the extended filtration $(\mathcal{F}(t,s))_{s \geq 0}$ for any $i \in \{1,\dots,n\}$ and $t \geq 0$. The rest of the proof follows analogously to the proof of Theorem \ref{thm:transition_stat}.
\end{proof}

%Important for corollaries with summation: Zero covariance of processes M_i^{lj} if one of the upper indices is different!
We do not only consider specific transitions, but also the entry into specific states or set of states, regardless of the origin of this transition. From now on, we need to distinguish whether we only want to consider the first hitting time of a state or a set of states or any hitting time of those and not only the first one. In the former case, we need to apply Corollary \ref{cor:transition_stat_stopped} while in the latter case Theorem \ref{thm:transition_stat} is sufficient. Hence, for a set of states $E \subset \{1,\dots,l\}$, we consider the processes $(U^{E}(t,s))_{t,s\geq 0}$ and $(U^{E}(t))_{t \geq 0}$ for the former resp. $(\Uall^{E}(t,s))_{t,s\geq 0}$ and $(\Uall^{E}(t))_{t \geq 0}$ for the latter case. Those are defined by
\begin{equation*}
	U^{E}(t,s)\coloneqq \frac{1}{\sqrt{n}} \sum_{k \in E} \sum_{j \notin E} \sum_{i=1}^n \int_{[0,s]} Q^{jk}(t,u) \left( Z_i - \frac{Y^{j\to E, Z=1}(t,u)}{Y^{j\to E}(t,u)} \right)\, N_i^{\{j\}\{k\}}(t, d(u \wedge T_i^E))
\end{equation*}
and
\begin{equation*}
	U^{E}(t)\coloneqq U^{E}(t,t).
\end{equation*}
where
\begin{equation*}
	Y^{j\to E}(t,s) = \sum_{i=1}^n Y^j_i(t,s) \cdot \mathbbm{1}_{\{s \leq T_i^E \wedge C_i(t)\}} \text{ and } Y^{j\to E, Z=1}(t,s) = \sum_{i=1}^n Y^j_i(t,s) \cdot \mathbbm{1}_{\{s \leq T_i^E \wedge C_i(t) \}} \cdot \mathbbm{1}_{\{Z_i=1\}},
\end{equation*}
resp.
\begin{equation*}
	\Uall^E(t,s)\coloneqq \sum_{j \notin E} \sum_{k \in E} U^{\{j\}\{k\}}(t,s) \quad \text{and} \quad \Uall^E(t)\coloneqq U^E(t,t)
\end{equation*}
In some cases, as e.g. in our example in Section \ref{sec:pfs_os} or in other purely progressive disease models, where composite events are given by a nested sequence of sets, the two approaches are the same as a set of states that has been hit once will never be left again.\\
As we will consider several events simultaneously, we need to establish multivariate convergence in probability using the following Lemma.
\begin{lemma}\label{lemma:mv_conv_in_prob}
	Let $(\mathbf{X}^{(n)})_{n \geq 0}$ be a sequence of $\mathbb{R}^d$-valued random vectors s.t. $X_c^{(n)} \overset{\mathbb{P}}{\to} X_c$ for each $c \in \{1,\dots, d\}$ as $n\to \infty$. Then, it also holds
	\begin{equation*}
		\mathbf{X}^{(n)} \overset{\mathbb{P}}{\to} \mathbf{X} \eqqcolon (X_1,\dots X_d)
	\end{equation*}
\end{lemma}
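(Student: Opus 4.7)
The plan is to reduce the multivariate statement to the componentwise hypotheses by a straightforward union bound argument, exploiting the equivalence of norms on $\mathbb{R}^d$. Concretely, I will fix an arbitrary $\varepsilon > 0$ and aim to show that $\mathbb{P}[\|\mathbf{X}^{(n)} - \mathbf{X}\| > \varepsilon] \to 0$ as $n \to \infty$, where $\|\cdot\|$ denotes any convenient norm on $\mathbb{R}^d$; the maximum norm is the natural choice here.

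First I would note that $\|\mathbf{X}^{(n)} - \mathbf{X}\|_\infty > \varepsilon$ holds if and only if at least one component satisfies $|X_c^{(n)} - X_c| > \varepsilon$. Hence
\begin{equation*}
    \{\|\mathbf{X}^{(n)} - \mathbf{X}\|_\infty > \varepsilon\} \subset \bigcup_{c=1}^d \{|X_c^{(n)} - X_c| > \varepsilon\},
\end{equation*}
so that by subadditivity of $\mathbb{P}$,
\begin{equation*}
    \mathbb{P}[\|\mathbf{X}^{(n)} - \mathbf{X}\|_\infty > \varepsilon] \leq \sum_{c=1}^d \mathbb{P}[|X_c^{(n)} - X_c| > \varepsilon].
\end{equation*}
By hypothesis, each of the $d$ probabilities on the right tends to $0$ as $n \to \infty$, and $d$ is fixed and finite, so the right-hand side tends to $0$. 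This already yields convergence in probability with respect to the maximum norm.

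To transfer the conclusion to any other norm on $\mathbb{R}^d$ (in particular the Euclidean norm implicit in the notation $\mathbf{X}^{(n)} \overset{\mathbb{P}}{\to} \mathbf{X}$), I would invoke the equivalence of all norms on finite-dimensional real vector spaces: there is a constant $c_d > 0$ with $\|v\|_2 \leq c_d \|v\|_\infty$ for all $v \in \mathbb{R}^d$, so $\{\|\mathbf{X}^{(n)} - \mathbf{X}\|_2 > \varepsilon\} \subset \{\|\mathbf{X}^{(n)} - \mathbf{X}\|_\infty > \varepsilon / c_d\}$, and the bound above applies with $\varepsilon$ replaced by $\varepsilon/c_d$. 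There is no real obstacle in this argument; the only subtlety worth noting is that finiteness of $d$ is essential so that the union bound remains useful, which is of course satisfied for our application.
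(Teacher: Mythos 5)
Your proof is correct and follows essentially the same route as the paper: a union bound over the $d$ components combined with the equivalence of norms on $\mathbb{R}^d$. The only cosmetic difference is that you work with the maximum norm (where the union bound needs no rescaling) while the paper uses the $1$-norm and the threshold $\varepsilon/d$; the substance of the argument is identical.
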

as $n \to \infty$ in $\mathbb{R}^d$.
\begin{proof}
	As all norms are equivalent on $\mathbf{R}^d$, it is enough to show it for the 1-norm, i.e.
	\begin{equation*}
		\mathbb{P}\left[\sum_{c=1}^d |X_c^{(n)} - X_c| > \varepsilon \right] \to 0
	\end{equation*}
	for any $\varepsilon > 0$. Because $\sum_{c=1}^d |X_c^{(n)} - X_c| > \varepsilon$ implies that there is at least one $c$ s.t. $|X_c^{(n)} - X_c| > \varepsilon/d$, we get
	\begin{align*}
		&\mathbb{P}\left[\sum_{c=1}^d |X_c^{(n)} - X_c| > \varepsilon \right]\\
		\leq & \mathbb{P}\left[\cup_{c=1}^d |X_c^{(n)} - X_c| > \varepsilon/d \right]\\
		\leq & \sum_{c=1}^d \mathbb{P}[|X_c^{(n)} - X_c| > \varepsilon/d]
	\end{align*}
	As all of the summands in the last sum converge to $0$, the sum becomes arbitrary small for increasing $n$.
\end{proof}
Using this Lemma, we can now establish asymptotic equivalence for multivariate processes which components are given as defined above.
\begin{cor}\label{cor:asymp_equiv_comp}
	Let $E \subset \{1,\dots,l\}$ be a subspace of the state space of a Markovian multi-state model. We assume that for any $k \in E$ and $j \notin E$ with $\lambda^{jk} \not\equiv 0$, the assumptions given in Theorem \ref{thm:transition_stat} hold and the assumption \ref{item:assumption:at_risk} holds as well for the adjusted "at risk"-processes $Y^{j \to E}$ and $y^{j \to E}$, respectively.\\
	Then, the processes $(U^E(t))_{t\geq 0}$ and $(u^E(t))_{t\geq 0}$ resp. $(\Uall^E(t))_{t\geq 0}$ and $(\Uall^E(t))_{t\geq 0}$ are asymptotically equivalent, i.e.
	\begin{equation*}
		(U^{E}(t) - u^{E}(t)) \overset{\mathbb{P}}{\to} 0 \quad \text{resp.} \quad (\Uall^{E}(t) - \uall^{E}(t)) \overset{\mathbb{P}}{\to} 0
	\end{equation*}
	for any $t \geq 0$ where
	\begin{equation}\label{eq:asymptotic_composite_processs}
		u^{E}(t,s)\coloneqq \frac{1}{\sqrt{n}} \sum_{k \in E} \sum_{j \notin E} \sum_{i=1}^n \int_{[0,s]} q^{jk}(t,u) \left( Z_i - \frac{y^{j\to E, Z=1}(t,u)}{y^{j\to E}(t,u)} \right)\, M_i^{\{j\}\{k\}}(t, d(u \wedge T_i^E))
	\end{equation}
	resp.
	\begin{equation}\label{eq:asymptotic_composite_processs_all}
	\uall^{E}(t,s)\coloneqq \sum_{k \in E} \sum_{j \notin E} u^{\{j\}\{k\}}(t,s)
	\end{equation}
	and $u^E(t)\coloneqq u^E(t,t)$ resp. $\uall^E(t)\coloneqq \uall^E(t,t)$.\\
	Moreover, for a set of subsets of the state space $E_1, \dots, E_d \subset \{1,\dots,l\}$, it holds
	\begin{equation}\label{eq:def_mv_stopped_process}
		(U^{E_1}(t),\dots,U^{E_d}(t))\eqqcolon \mathbf{U}(t) \overset{\mathbb{P}}{\to} \mathbf{u}(t) \coloneqq (u^{E_1}(t),\dots,u^{E_d}(t))
	\end{equation}
	and
	\begin{equation}\label{eq:def_mv_process}
		(\Uall^{E_1}(t),\dots,\Uall^{E_d}(t))\eqqcolon \mathbf{\Uall}(t) \overset{\mathbb{P}}{\to} \mathbf{\uall}(t) \coloneqq (\uall^{E_1}(t),\dots,\uall^{E_d}(t))
	\end{equation}
	for any $t\geq 0$ as $n \to \infty$.
\end{cor}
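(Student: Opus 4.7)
The plan is to express each of $U^E(t)$ and $\Uall^E(t)$ as a finite sum of the transition-specific processes already handled in Theorem~\ref{thm:transition_stat} and Corollary~\ref{cor:transition_stat_stopped}, pass the convergence through that finite sum, and then lift the componentwise convergence to the multivariate setting by Lemma~\ref{lemma:mv_conv_in_prob}.

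First I would dispatch the $\Uall^E$ statement, which is immediate from its definition. By construction $\Uall^E(t) = \sum_{k \in E}\sum_{j \notin E} U^{\{j\}\{k\}}(t)$ and, using \eqref{eq:asymptotic_composite_processs_all}, $\uall^E(t) = \sum_{k \in E}\sum_{j \notin E} u^{\{j\}\{k\}}(t)$. Applying Theorem~\ref{thm:transition_stat} to each of the finitely many pairs $(j,k)$ with $j \notin E$, $k \in E$ and $\lambda^{jk} \not\equiv 0$ (the other pairs contribute processes that are identically zero) gives $U^{\{j\}\{k\}}(t) - u^{\{j\}\{k\}}(t) \overset{\mathbb{P}}{\to} 0$, and since a finite sum of terms each converging to zero in probability also converges to zero in probability, $\Uall^E(t) - \uall^E(t) \overset{\mathbb{P}}{\to} 0$ for every $t \geq 0$.

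For $U^E$ I would perform the analogous reduction, but through Corollary~\ref{cor:transition_stat_stopped} with the stopping time $V_i(t) \coloneqq T_i^E$. Since $\{T_i^E \leq s\} \in \mathcal{F}^E(t,s) \subset \mathcal{F}(t,s)$, $V_i(t)$ is indeed a $(\mathcal{F}(t,s))_{s \geq 0}$-stopping time for every $i$ and every $t \geq 0$. With this choice, for any $j \notin E$ the stopped at-risk process satisfies $\tilde{Y}_i^j(t,u) = Y_i^j(t,u)\,\mathbbm{1}_{\{u \leq T_i^E\}} = Y_i^{j\to E}(t,u)$, and after aggregation $\tilde{Y}^j = Y^{j \to E}$, $\tilde{Y}^{j,Z=1} = Y^{j\to E,Z=1}$. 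Moreover, integrating $N_i^{\{j\}\{k\}}(t,\,\cdot\,)$ against $d(u \wedge T_i^E)$ in the definition of $U^E(t,s)$ is by definition equivalent to stopping the integrator at $V_i(t)$. Therefore $U^E(t) = \sum_{k \in E}\sum_{j \notin E} \tilde{U}^{\{j\}\{k\}}(t)$ and, using \eqref{eq:asymptotic_composite_processs}, $u^E(t) = \sum_{k \in E}\sum_{j \notin E} \tilde{u}^{\{j\}\{k\}}(t)$. The hypothesis that \ref{item:assumption:at_risk} holds with $Y^{j \to E}$ and $y^{j \to E}$ in place of $Y^j$ and $y^j$ is exactly the assumption needed to invoke Corollary~\ref{cor:transition_stat_stopped} for each pair, yielding $\tilde{U}^{\{j\}\{k\}}(t) - \tilde{u}^{\{j\}\{k\}}(t) \overset{\mathbb{P}}{\to} 0$; summing finitely many such differences gives $U^E(t) - u^E(t) \overset{\mathbb{P}}{\to} 0$.

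Finally, for the multivariate statements \eqref{eq:def_mv_stopped_process} and \eqref{eq:def_mv_process}, each of the $d$ coordinates has just been shown to satisfy coordinatewise convergence in probability, and applying Lemma~\ref{lemma:mv_conv_in_prob} to the vector of differences lifts this to convergence in probability in $\mathbb{R}^d$. There is no genuine obstacle in this argument: the only items requiring care are the verification that $V_i(t) = T_i^E$ is a stopping time with respect to the joint filtration and the bookkeeping identification $\tilde{Y}_i^j = Y_i^{j\to E}$; once these are in place the proof is a direct assembly of the results already established.
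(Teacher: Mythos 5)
Your proof is correct and takes essentially the same route as the paper's: decompose $U^E$ and $\Uall^E$ into the finitely many transition-specific processes, apply Corollary~\ref{cor:transition_stat_stopped} (with the stopping times $V_i(t)=T_i^E$) resp.\ Theorem~\ref{thm:transition_stat} to each pair $(j,k)$, sum the vanishing differences, and lift the componentwise convergence to $\mathbb{R}^d$ via Lemma~\ref{lemma:mv_conv_in_prob}. The only difference is that you spell out the bookkeeping the paper's two-sentence proof leaves implicit, namely the verification that $T_i^E$ serves as the stopping time and the identification $\tilde{Y}_i^j = Y_i^{j\to E}$.
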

\begin{proof}
As convergence in probability holds for sums of random variables converging in probability, the statements concerning the convergence of a single component are a direct consequence of Corollary \ref{cor:transition_stat_stopped} resp. Theorem \ref{thm:transition_stat}.\\
As multivariate convergence in probability of $\mathbb{R}^d$-valued random variables can be established by convergence in probability of its $d$ components, the second part follows from Lemma \ref{lemma:mv_conv_in_prob}.
\end{proof}
Following Corollary \ref{cor:transition_stat_stopped} we will now consider the asymptotic distribution of the process $(\mathbf{u}(t))_{t \geq 0}$ and $(\mathbf{\uall}(t))_{t \geq 0}$ for the fixed set of (composite) events $E_1, \dots, E_d \subset \{1,\dots,l\}$. At first, we note, that the process $(u^{E_c}(t,s))_{s \geq 0}$ and $(\uall^{E_c}(t,s))_{s \geq 0}$ are $(\mathcal{F}(t,s))_{s\geq 0}$-martingales for any $c \in \{1,\dots,d\}$ as they are predictable processes integrated w.r.t. an $(\mathcal{F}(t,s))_{s\geq 0}$-martingale. Hence, expectation and covariance matrix of $\mathbf{u}(t)$ will be determined by 
\begin{equation*}
	\mathbb{E}\left[u^{E_c}(t)\right]=\mathbb{E}\left[\uall^{E_c}(t)\right]=0
\end{equation*}
and
\begin{align*}
	&\text{Cov}\left[u^{E_{c_1}}(t), u^{E_{c_2}}(t)\right]\\
	=&\sum_{k \in E_{c_1} \cap E_{c_2}} \sum_{j \notin E_{c_1} \cup E_{c_2}} \int_{[0,t]} q^{jk}(t,s)^2 \lambda^{jk}(s) \cdot \\
	& \qquad \cdot \mathbb{E}\left[ \mathbbm{1}_{\{X(s-)=j\}} \mathbbm{1}_{\{s \leq C(t) \wedge T^{E_{c_1} \cap E_{c_2}}\}} \left(Z - \frac{y^{j \to E_{c_1}, Z=1}(t,s)}{y^{j \to E_{c_1}}(t,s)}\right)\left(Z - \frac{y^{j \to E_{c_2}, Z=1}(t,s)}{y^{j \to E_{c_2}}(t,s)}\right) \right] ds.
\end{align*}
resp.
\begin{align*}
	&\text{Cov}\left[\uall^{E_{c_1}}(t), \uall^{E_{c_2}}(t)\right]\\
	=&\sum_{k \in E_{c_1} \cap E_{c_2}} \sum_{j \notin E_{c_1} \cup E_{c_2}} \int_{[0,t]} q^{jk}(t,s)^2 \lambda^{jk}(s) \mathbb{E}\left[ \mathbbm{1}_{\{X(s-)=j\}} \mathbbm{1}_{\{s \leq C(t)\}} \left(Z - \frac{y^{j, Z=1}(t,s)}{y^{j}(t,s)}\right)^2\right] ds.
\end{align*}
However, in order to apply Theorem II.5.1 of \cite{Andersen:93}, we want to show that $(\mathbf{u}(t))_{t \geq 0}$ and $(\mathbf{\uall}(t))_{t \geq 0}$ are martingales w.r.t. $(\mathcal{F}(t))_{t \geq 0}$ with $\mathcal{F}(t)\coloneqq \mathcal{F}(t,t)$. This is the filtration containing all available information at the respective calendar time $t$.
\begin{lemma}\label{lemma:mv_martingale}
Let $(\mathbf{u}(t))_{t\geq 0}$ and $(\mathbf{\uall}(t))_{t\geq 0}$ be the processes as defined in \eqref{eq:asymptotic_composite_processs}. Under the assumption that the limiting functions $q^{jk}(t,s)$, $\mu^{j\to E}(t,s)\coloneqq y^{j \to E, Z=1}(t,s)/y^{j \to E}(t,s)$ resp. $\mu^{j}(t,s)\coloneqq y^{j, Z=1}(t,s)/y^{j}(t,s)$ are independent of $t$ for any $j \in \{0,\dots,l\}$, $k \in \{1,\dots,l\}$ and $E \subset \{1,\dots,l\},$ both processes are $(\mathcal{F}(t))_{t \geq 0}$-martingales.
\end{lemma}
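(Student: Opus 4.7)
The plan is to show that the scalar process $u^{\{j\}\{k\}}(t,t)$ is a calendar-time martingale with respect to $(\mathcal{F}(t))_{t \geq 0}$ by reformulating it as a stochastic integral against a compensated counting process on the calendar-time scale. Once this is established, the statements for $u^E$, $\uall^E$ and their multivariate aggregates $\mathbf{u}$, $\mathbf{\uall}$ follow by linearity.

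First I would change the time scale from trial time to calendar time for each individual patient. For patient $i$, introduce the calendar-time counting process $\mathcal{N}_i^{\{j\}\{k\}}(t) \coloneqq N_i^{\{j\}\{k\}}(t,t)$, which jumps at calendar time $R_i + T_i^{\{k\}}$ provided $T_i^{\{k\}} \leq \tilde{C}_i$ and $X_i(T_i^{\{k\}}-) = j$, together with the calendar-time at-risk indicator $\mathcal{Y}_i^j(v) \coloneqq Y_i^j(v, v-R_i)$. Starting from the known trial-time martingale property of \eqref{eq:transition_specific_martingale} and using the assumed independence of $R_i$, $\tilde{C}_i$ and $(X_i(s))_{s\geq 0}$ (so that $R_i$ and $\tilde{C}_i$ act as independent time shifts / censorings), the compensated process
\begin{equation*}
	\mathcal{M}_i^{\{j\}\{k\}}(t) \coloneqq \mathcal{N}_i^{\{j\}\{k\}}(t) - \int_0^t \mathcal{Y}_i^j(v) \lambda^{jk}(v - R_i)\, dv
\end{equation*}
is a martingale with respect to the natural calendar-time filtration generated by the observable data on patient $i$, and by independence across patients this is also a martingale with respect to $(\mathcal{F}(t))_{t \geq 0}$ with $\mathcal{F}(t) = \mathcal{F}(t,t)$.

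Next I would exploit the hypothesis that $q^{jk}(t,s)$ and $\mu^j(t,s)$ (resp.\ $\mu^{j\to E}(t,s)$) do not depend on $t$ to rewrite
\begin{equation*}
	u_i^{\{j\}\{k\}}(t,t) = \int_{[0,t]} q^{jk}(v - R_i)\, (Z_i - \mu^j(v - R_i))\, \mathcal{M}_i^{\{j\}\{k\}}(dv),
\end{equation*}
verified by matching the jump contribution $q^{jk}(T_i^{\{k\}})(Z_i - \mu^j(T_i^{\{k\}})) \mathcal{N}_i^{\{j\}\{k\}}(t)$ on one side with the point-mass of $N_i^{\{j\}\{k\}}(t,du)$ on the other, and by matching the compensator part after the substitution $u = v - R_i$, using $R_i + C_i(t) = t \wedge (R_i + \tilde{C}_i)$. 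Without $t$-independence this identification breaks down, since the integrand at a fixed trial time would silently change with $t$ and the increment $u^{\{j\}\{k\}}(t',t') - u^{\{j\}\{k\}}(t,t)$ could not be localised to the interval $(t, t']$ of calendar time. For the stopped version appearing in \eqref{eq:asymptotic_composite_processs}, the integrator is further stopped at the calendar time $R_i + T_i^E$, which preserves the martingale property by optional stopping.

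Since the integrand $q^{jk}(v-R_i)(Z_i - \mu^j(v-R_i))$ is bounded on compact intervals, left continuous in $v$, and $R_i$, $Z_i$ are known by calendar time $R_i$, it is predictable with respect to $(\mathcal{F}(t))_{t \geq 0}$. Hence each $u_i^{\{j\}\{k\}}(t,t)$ is a calendar-time martingale, and the aggregates $u^{\{j\}\{k\}}(t)$, $u^E(t)$ and $\uall^E(t)$ inherit the martingale property as finite sums of martingales, from which the multivariate claim for $\mathbf{u}(t)$ and $\mathbf{\uall}(t)$ follows component-wise. The main technical obstacle is the careful passage from the trial-time definitions, in which $t$ appears both as an argument and as a censoring parameter inside $M_i^{\{j\}\{k\}}(t, du)$, to a genuine calendar-time stochastic integral; once this re-representation is in place, the conclusion is an immediate consequence of the standard fact that predictable integrals against compensated counting processes are martingales.
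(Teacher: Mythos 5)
Your proposal is correct in substance, but it takes a genuinely different route from the paper's own proof. The paper never leaves the two-parameter framework: for fixed calendar time $t$ it already knows that the patient-wise processes $u_i^{jk,E}(t,s)$ are trial-time martingales, and it converts this into the calendar-direction property via the swap identities $u_i^{jk,E}(t,s) = u_i^{jk,E}(s+R_i,(t-R_i)_+)$ and $\mathcal{F}_i^{\{j\}\{k\}}(t,s) = \mathcal{F}_i^{\{j\}\{k\}}(s+R_i,(t-R_i)_+)$ (the entry time is misprinted as $Y_i$ there) --- identities which hold precisely because the integrand is free of $t$, which is where the hypothesis on $q^{jk}$ and $\mu^{j\to E}$ enters --- obtaining $\mathbb{E}[u^E(t_2,s)\,|\,\mathcal{F}(t_1,s)] = u^E(t_1,s)$ and closing with the tower property along the path $(t_1,t_1)\to(t_1,t_2)\to(t_2,t_2)$. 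You instead collapse the two time scales per patient and work on a single calendar axis: you posit the compensated process $\mathcal{M}_i^{\{j\}\{k\}}$ with shifted intensity $\mathcal{Y}_i^j(v)\lambda^{jk}(v-R_i)$ as a calendar-time martingale and represent $u_i^{\{j\}\{k\}}(t,t)$ as a predictable integral against it, with $t$-independence being exactly what makes the integrand a function of the integration variable $v$ alone. Both arguments rest on the same three ingredients (Markovianity, so the trial-time intensity is $Y_i^j\lambda^{jk}$; independence of $R_i$, $\tilde C_i$, $Z_i$ from the multistate process; and $t$-independence of the limiting functions), so the difference is where the work is placed. Your route buys transparency: it reduces the lemma to the textbook fact that predictable integrals against compensated counting processes are martingales, and it pinpoints why the hypothesis is indispensable (otherwise increments cannot be localised to a calendar interval). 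Its cost is the claim you assert rather than derive, namely that $\mathcal{M}_i^{\{j\}\{k\}}$ is a martingale w.r.t.\ $(\mathcal{F}(t))_{t\geq 0}$ --- this calendar-time intensity statement (standard for staggered entry with independent entry and censoring, but it does need the full-history filtration and the Markov property, not just the transition-specific one) is precisely what the paper's swap-plus-tower argument is engineered to avoid proving. Two smaller points you should make explicit: predictability of $v \mapsto \mu^{j\to E}(v-R_i)$ requires left-continuity of $y^{j\to E}$ and $y^{j\to E,Z=1}$ in the trial-time argument, and the stopping at $T_i^E$ must be phrased via the observable time $T_i^E \wedge C_i(t)$, as in Corollary \ref{cor:transition_stat_stopped}; neither is an obstacle.
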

\begin{proof}
We will prove the statement for $(\mathbf{u}(t))_{t\geq 0}$. The proof for $(\mathbf{\uall}(t))_{t\geq 0}$ follows analogously. It is sufficient to consider the $d$ entries of $\mathbf{u}$ separately. For one fixed $E\subset \{1,\dots,l\}$ we decompose the process by
\begin{equation*}
	u^{E}(t)\coloneqq \frac{1}{\sqrt{n}} \sum_{k \in E} \sum_{j \notin E} \sum_{i=1}^n \underbrace{ \int_{[0,s]} q^{jk}(u) \left( Z_i - \mu^{j\to E}(u) \right)\, M_i^{\{j\}\{k\}}(t, d(u \wedge T_i^E))}_{\eqqcolon u_i^{jk,E}(t,s)}.
\end{equation*}
The process $(u_i^{jk,E}(t,s))_{s \geq 0}$ is an $(\mathcal{F}_i(t,s))_{s \geq 0}$-martingale where $\mathcal{F}_i(t,s) \coloneqq \sigma(\cup_{j,k} \mathcal{F}_i^{\{j\}\{k\}}(t,s)).$ As $u_i^{jk,E}(t,s) = u_i^{jk,E}(s+Y_i, (t-Y_i)_+)$ and $\mathcal{F}_i^{\{j\}\{k\}}(t,s) = \mathcal{F}_i^{\{j\}\{k\}}(s+Y_i,(t-Y_i)_+)$ for any $s,t\geq 0$ we know that for any $s\geq 0$ and $0\leq t_1 < t_2$ we have
\begin{align*}
	&\mathbb{E}\left[ u^E(t_2,s) |\mathcal{F}(t_1,s) \right]\\
	=&\frac{1}{\sqrt{n}} \sum_{i=1}^n \sum_{k \in E} \sum_{j \notin E \colon \lambda^{jk} \not\equiv 0} \mathbb{E}[u_i^{jk,E}(t_2,s) | \mathcal{F}(t_1,s)]\\
	=&\frac{1}{\sqrt{n}} \sum_{i=1}^n \sum_{k \in E} \sum_{j \notin E \colon \lambda^{jk} \not\equiv 0} \mathbb{E}[u_i^{jk,E}(t_2,s) | \mathcal{F}_i(t_1,s)]\\
	=&\frac{1}{\sqrt{n}} \sum_{i=1}^n \sum_{k \in E} \sum_{j \notin E \colon \lambda^{jk} \not\equiv 0} \mathbb{E}[u_i^{jk,E}(s+Y_i, (t_2-Y_i)_+) | \mathcal{F}_i(s + Y_i, (t_1 - Y_i)_+)]\\
	=&\frac{1}{\sqrt{n}} \sum_{i=1}^n \sum_{k \in E} \sum_{j \notin E \colon \lambda^{jk} \not\equiv 0} u_i^{jk,E}(s+Y_i, (t_1-Y_i)_+)\\
	=&u^E(t_1,s)
\end{align*}
Consequently, we also get
\begin{align*}
	\mathbb{E}[u^E(t_2)|\mathcal{F}(t_1, t_1)]&=\mathbb{E}[u^E(t_2,t_2)|\mathcal{F}(t_1, t_2)]=\mathbb{E}[\mathbb{E}[u^E(t_2,t_2)|\mathcal{F}(t_1, t_2)]|\mathcal{F}(t_1, t_1)]\\
	&=\mathbb{E}[u^E(t_2,t_1)|\mathcal{F}(t_1, t_1)]=u^E(t_1,t_1)=u^E(t_1)
\end{align*}
for any $0\leq t_1 < t_2$ by the tower property of conditional expectation.
\end{proof}
To this martingale we can now apply the multivariate version of Rebolledo's Theorem.
\begin{thm}\label{thm:rebolledo_application}
	The process $(\mathbf{u}(t))_{t \geq 0}$ converges as $n \to \infty$ in distribution to a Gaussian mean-zero vector martingale on the interval $[0, t_{\text{max}}]$ with the $d \times d$-matrix-valued covariance funtion $\mathbf{V}=(V_{c_1 c_2})_{c_1, c_2 \in \{1,\dots,d\}}$ given by
	\begin{equation}\label{eq:covariance_u}
		V_{c_1 c_2}(t)=\sum_{k \in E_{c_1} \cap E_{c_2}} \sum_{j \notin E_{c_1} \cup E_{c_2}} V^{jk}_{c_1 c_2}(t)
	\end{equation}
	where
	\begin{equation}\label{eq:covariance_contribution_u}
		V^{jk}_{c_1 c_2}(t) \coloneqq \int_{[0,t]} q^{jk}(s)^2 \lambda^{jk}(s) \mathbb{E}\left[\mathbbm{1}_{\{X(s)=j\}} \mathbbm{1}_{\{s \leq C(t) \wedge T^{E_{c_1} \cup E_{c_2}}\}} (Z - \mu^{j \to E_{c_1}}(s))(Z - \mu^{j \to E_{c_2}}(s)) \right]ds
	\end{equation}
	for $t \geq 0$ and $c_1, c_2 \in \{1, \dots, d\}$ if $q^{jk}$ is bounded for any $j \in \{0,\dots,l\}$ and $k \in \{1,\dots,l\}$ on that interval. This covariance function can consistently be estimated by $\hat{\mathbf{V}}=(\hat{V}_{c_1 c_2})_{c_1, c_2 \in \{1,\dots,d\}}$ given by
	\begin{equation*}
		\hat{V}_{c_1 c_2}(t)=\sum_{k \in E_{c_1} \cap E_{c_2}} \sum_{j \notin E_{c_1} \cup E_{c_2}} \frac{1}{n} \sum_{i=1}^n \hat{V}^{jk}_{c_1 c_2,i}(t)
	\end{equation*}
	where
	\begin{equation}\label{eq:covariance_component_estimator}
		\begin{split}
		&\hat{V}^{jk}_{c_1 c_2,i}(t) \coloneqq \int_{[0,t]} Q^{jk}(t,s)^2 \mathbbm{1}_{\{X_i(s)=j\}} \mathbbm{1}_{\{s \leq C_i(t) \wedge T^{E_{c_1} \cup E_{c_2}}\}} \cdot \\
		&\qquad \qquad \qquad \qquad \left(Z_i-\frac{Y^{j \to E_{c_1} , Z=1 } (t,s) }{Y^{j \to E_{c_1}} (t,s)}\right) \left(Z_i-\frac{Y^{j \to E_{c_2}, Z=1 } (t,s) }{Y^{j \to E_{c_2}} (t,s)}\right) \hat{\Lambda}^{jk, E_{c_1}\cup E_{c_2}}(t,ds).
		\end{split}
	\end{equation}
	In this formula, $\hat{\Lambda}^{jk, E_{c_1}\cup E_{c_2}}$ is the Nelson-Aalen estimator for the transition intensity of the transition $j\to k$ where all observations are censored at $T^{E_{c_1} \cup E_{c_2}}$.\\ %can also be replaced by "general" estimator, i.e. without additional censoring
	The same holds for the process $(\mathbf{u}_{\text{all}}(t))_{t \geq 0}$ where $\mathbf{V}$ and $\hat{\mathbf{V}}$ are replaced by $\mathbf{V}_{\text{all}}$ and $\hat{\mathbf{V}}_{\text{all}}$, respectively. The entries of these matrix-valued functions are given by
	\begin{equation}\label{eq:covariance_uall}
		V_{\text{all},c_1 c_2}(t)=\sum_{k \in E_{c_1} \cap E_{c_2}} \sum_{j \notin E_{c_1} \cup E_{c_2}} V^{jk}(t) \quad\text{resp.} \quad \hat{V}_{\text{all},c_1 c_2}(t)=\sum_{k \in E_{c_1} \cap E_{c_2}} \sum_{j \notin E_{c_1} \cup E_{c_2}} \frac{1}{n} \sum_{i=1}^n \hat{V}_i^{jk}(t) 
	\end{equation}
	with
	\begin{equation}\label{eq:covariance_contribution_uall}
		V^{jk}(t) \coloneqq \int_{[0,t]} q^{jk}(s)^2 \lambda^{jk}(s) \mathbb{E}\left[\mathbbm{1}_{\{X(s)=j\}} \mathbbm{1}_{\{s \leq C(t)\}} (Z - \mu^{j}(s))^2 \right]ds
	\end{equation}
	and 
	\begin{equation*}
		\hat{V}_i^{jk}(t) \coloneqq \int_{[0,t]} Q^{jk}(t,s)^2 \mathbbm{1}_{\{X_i(s)=j\}} \mathbbm{1}_{\{s \leq C_i(t)\}} \left(Z_i-\frac{Y^{j, Z=1 } (t,s) }{Y^{j} (t,s)}\right)^2 \hat{\Lambda}^{jk}(t,ds),
	\end{equation*}
	respectively.
\end{thm}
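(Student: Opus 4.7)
The approach is to apply the multivariate version of Rebolledo's martingale central limit theorem (e.g.\ Theorem II.5.1 of \cite{Andersen:93}). Lemma \ref{lemma:mv_martingale} already supplies the martingale property of $(\mathbf{u}(t))_{t\geq 0}$ and $(\mathbf{u}_{\text{all}}(t))_{t\geq 0}$ with respect to $(\mathcal{F}(t))_{t\geq 0}$, so the outstanding tasks are identification of the limiting predictable covariation and verification of a Lindeberg-type jump condition.

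I would start by decomposing
\begin{equation*}
	u^{E_c}(t) = \frac{1}{\sqrt{n}} \sum_{i=1}^n \sum_{k \in E_c} \sum_{\substack{j \notin E_c \\ \lambda^{jk}\not\equiv 0}} u_i^{jk, E_c}(t,t),
\end{equation*}
where each $u_i^{jk, E_c}$ is the stochastic integral of a bounded predictable integrand against the compensated transition counting process $M_i^{\{j\}\{k\}}$ stopped at $T_i^{E_c}$. The crucial structural fact is that distinct transition martingales are mutually orthogonal: $\langle M_i^{\{j_1\}\{k_1\}}, M_{i'}^{\{j_2\}\{k_2\}}\rangle \equiv 0$ whenever $i \neq i'$ (by independence of patients) or $(j_1,k_1) \neq (j_2,k_2)$ (two distinct transitions of the same patient cannot occur simultaneously in a pure-jump càdlàg state process). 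Consequently, only pairs $(j,k)$ appearing in both decompositions contribute to $\langle u^{E_{c_1}}, u^{E_{c_2}}\rangle$, which forces $k \in E_{c_1}\cap E_{c_2}$ and $j \notin E_{c_1}\cup E_{c_2}$; and because the two integrals are stopped at $T_i^{E_{c_1}}$ and $T_i^{E_{c_2}}$ respectively, the combined stopping time is $T_i^{E_{c_1}} \wedge T_i^{E_{c_2}} = T_i^{E_{c_1}\cup E_{c_2}}$, which accounts for the indicator in \eqref{eq:covariance_contribution_u}.

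Using $d\langle M_i^{\{j\}\{k\}}\rangle(s) = Y_i^j(s)\lambda^{jk}(s)\,ds$, the predictable covariation becomes a normalized sum of i.i.d.\ bounded integrals in $i$, and the (pointwise) law of large numbers gives convergence in probability to $V_{c_1 c_2}(t)$ as defined in \eqref{eq:covariance_u}--\eqref{eq:covariance_contribution_u}. The Lindeberg condition is then routine: each jump of the normalized martingale is of order $1/\sqrt{n}$ (since the weights and $\lambda^{jk}$ are bounded on the compact interval by \ref{item:assumption:weight_bound}), so the predictable variation carried by jumps exceeding any fixed $\varepsilon > 0$ vanishes. Rebolledo's theorem then yields the stated weak convergence to a Gaussian mean-zero vector martingale with covariance $\mathbf{V}$ on $[0, t_{\text{max}}]$.

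For consistency of $\hat{\mathbf{V}}$, I would replace the Nelson--Aalen integrator $\hat{\Lambda}^{jk, E_{c_1}\cup E_{c_2}}(t,ds)$ by its compensator $\mathbbm{1}_{\{Y^{j\to E_{c_1}\cup E_{c_2}}(t,s) > 0\}}\lambda^{jk}(s)\,ds$; the resulting error is a martingale whose variance is $O(1/n)$ and hence $o_{\mathbb{P}}(1)$ by Lenglart's inequality. The deterministic remainder then converges to $V_{c_1 c_2}(t)$ via assumptions \ref{item:assumption:weight_convergence}--\ref{item:assumption:weight_bound} combined with dominated convergence. The analogous statements for $\mathbf{u}_{\text{all}}$ are obtained by repeating the argument with the difference that no stopping at $T_i^E$ occurs, so $\mu^{j}$ replaces $\mu^{j\to E}$ and the indicator involving $T^{E_{c_1}\cup E_{c_2}}$ drops out. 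I expect the main obstacle to be careful bookkeeping in the covariance computation, in particular tracking how the two distinct stopping sets $E_{c_1}$ and $E_{c_2}$ couple into $E_{c_1}\cup E_{c_2}$ for $\mathbf{u}$ while no such coupling appears for $\mathbf{u}_{\text{all}}$.
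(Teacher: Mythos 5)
Your proposal is correct and follows essentially the same route as the paper's proof: compute the predictable covariation using the orthogonality of the transition-specific martingales (independence across patients, no simultaneous jumps within a patient), pass to the limit by a law of large numbers under the boundedness of $q^{jk}$ and $\lambda^{jk}$, verify the vanishing-jump condition from the $1/\sqrt{n}$ bound on jump sizes, and invoke Theorem II.5.1 of \cite{Andersen:93}; your treatment of $\hat{\mathbf{V}}$ by splitting off the Nelson--Aalen martingale fluctuation and controlling it via Lenglart matches in substance the paper's decomposition of $V^{jk}_{c_1 c_2} - \hat{V}^{jk}_{c_1 c_2,i}$ into terms where one factor differs. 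The only cosmetic quibble is that the remainder after centering the Nelson--Aalen integrator is not deterministic --- it still contains $Q^{jk}$, the at-risk ratios and an empirical average over $i$ --- but these are precisely the terms handled by \ref{item:assumption:weight_convergence}--\ref{item:assumption:weight_bound} together with the uniform law of large numbers, as your sketch indicates.
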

\begin{proof}
	We will prove this here for $(\mathbf{u}(t))_{t\geq 0}$. The proof for $(\mathbf{\uall}(t))_{t\geq 0}$ and corresponding variance components follows analogously.
	By splitting up the sums of which the components of $\mathbf{u}$ consist, exploiting the independence of observations concerning different transitions and applying standard rules to compute covariation processes, we obtain that the covariation process of $\mathbf{u}$ is given by
	\begin{equation*}
		\begin{split}
			\langle \mathbf{u} \rangle_{c_1 c_2} (t) &=\langle u^{E_{c_1}}, u^{E_{c_2}} \rangle (t)\\
			&=\sum_{k \in E_{c_1} \cap E_{c_2}} \sum_{j \notin E_{c_1} \cup E_{c_2}} \int_{[0,t]} q^{jk}(s)^2 \lambda^{jk}(s) \frac{1}{n} \sum_{i=1}^n \phi_i^{j,E_{c_1}}(t,s) \cdot \phi_i^{j,E_{c_2}}(t,s) ds
		\end{split}
	\end{equation*}
	where
	\begin{equation*}
	\phi_i^{j,E}(t,s) \coloneqq \left( \mathbbm{1}_{\{X_i(s)=j\}} \mathbbm{1}_{\{s \leq C_i(t) \wedge T_i^{E}\}} (Z_i - \mu^{j\to E}(s)) \right)
	\end{equation*}
	By a uniform law of large numbers and the boundedness of $\lambda^{jk}$ and $q^{jk}$ on bounded intervals for any $j$ and $k$ we obtain
	\begin{equation*}
		\langle \mathbf{u} \rangle_{c_1 c_2} (t) \overset{\mathbb{P}}{\to} \sum_{k \in E_{c_1} \cap E_{c_2}} \sum_{j \notin E_{c_1} \cup E_{c_2}} \int_{[0,t]} q^{jk}(s)^2 \lambda^{jk}(s) \mathbb{E}\left[\phi^{j,E_{c_1}}(t,s) \cdot \phi^{j,E_{c_2}}(t,s)\right] ds.
	\end{equation*}
	As there are $\mathbb{P}$-almost surely no simultaneous jumps of the processes considered here and by the boundedness of $q^{jk}$, there are no jumps of $u$ exceeding the size
	\begin{equation*}
		\frac{1}{\sqrt{n}} \sup_{j\in \{0,\dots,l\}, k \in \{1,\dots,l\}, u \in [0,t_{\text{max}}]} q^{jk}(u)
	\end{equation*}
	which converges to $0$ as $n\to \infty$, all conditions of Theorem II.5.1 of \cite{Andersen:93} are fulfilled. This yields the first part of the statement.\\
	For the second part of the statement, we note that the difference
	\begin{equation*}
		V^{jk}_{c_1 c_2}(t) - \hat{V}^{jk}_{c_1 c_2,i}(t)
	\end{equation*}
	can be decomposed into three parts in which only one factor differs. Each of these summands converges by the assumptions of Theorem \ref{thm:transition_stat}, a uniform law of large numbers for the sum
	\begin{equation*}
		\frac{1}{n} \sum_{i=1}^n \mathbbm{1}_{\{X_i(s)=j\}} \mathbbm{1}_{\{s \leq C_i(t) \wedge T_i^{E_{c_1} \cup E_{c_2}}\}} \left(Z_i-\frac{Y^{j \to E_{c_1} , Z=1 } (t,s) }{Y^{j \to E_{c_1}} (t,s)}\right) \left(Z_i-\frac{Y^{j \to E_{c_2}, Z=1 } (t,s) }{Y^{j \to E_{c_2}} (t,s)}\right)
	\end{equation*}
	and another application of Theorem II.5.1 of \cite{Andersen:93} by applying the same techniques as in the proof of Theorem \ref{thm:transition_stat}. As this holds for any $j$ and $k$, the second part of the statement follows.
\end{proof}
When estimating the variance of a single component, i.e. in the case $E_{c_1} = E_{c_2} = E$, the estimator from \eqref{eq:covariance_component_estimator} can be simplified to
\begin{equation*}
	\hat{V}^{jk}_{c_1 c_2,i}(t) = \int_{[0,t]} Q^{jk}(t,s)^2 \frac{Y^{j \to E, Z=1 } (t,s) }{Y^{j \to E} (t,s)}\cdot \left( 1 - \frac{Y^{j \to E, Z=1 } (t,s) }{Y^{j \to E} (t,s)} \right) N_i^{\{j\}\{k\}}(t,d(s \wedge T_i^{E})).
\end{equation*}
Analogous simplifications can be made for covariance components if $T^{E_{c_1}} \leq T^{E_{c_2}}$ or $E_{c_1} \cup E_{c_2} = \{1, \dots l\}$. This is the case for both examples given in the main manuscript.
\begin{cor}\label{cor:test_stat}
	For a sequence of analysis dates in calendar time $0\eqqcolon < t_1 < \dots < t_m$ of analysis dates, the following holds:
	\begin{enumerate}[label = (C\arabic*)]
		\item The test statistics $\mathbf{U}$ at the calendar dates $t_1,\dots,t_m$ are asymptotically jointly normally distributed and hence also have asymptotically independent increments, i.e. 
		\begin{align*}
			&(\mathbf{U}(t_1),\dots,\mathbf{U}_M(t_m)) &\underset{n \to \infty}{\overset{\mathcal{D}}{\to}} \mathcal{N}(0, \mathbf{V}_{\text{acc}})\\
			&(\mathbf{U}(t_1) - \mathbf{U}_M(t_0),\dots,\mathbf{U}_M(t_m)-\mathbf{U}_M(t_{p-1})) &\underset{n \to \infty}{\overset{\mathcal{D}}{\to}} \mathcal{N}(0, \mathbf{V}_{\text{inc}})
		\end{align*}
		where both $\mathbf{V}_{\text{acc}}$ and $\mathbf{V}_{\text{inc}}$ are $md \times md$ matrices consisting of $m^2$ blocks of size $d\times d$. The block in row $r_1$ and column $r_2$ of $\mathbf{V}_{\text{acc}}$ is given by $\mathbf{V}(t_{r_1} \wedge t_{r_2})$ and $\mathbf{V}_{\text{inc}}$ is a block diagonal matrix with $\mathbf{V}_{\text{inc}}=\text{diag}(\mathbf{V}(t_1) - \mathbf{V}(t_0), \dots, \mathbf{V}(t_m) - \mathbf{V}(t_{m-1}))$.
		\item \label{item:test_stat} If $\mathbf{V}(t_{r}) - \mathbf{V}(t_{r-1})$ is invertible for any $r \in \{1,\dots,m\}$, it holds
		\begin{equation*}
			S_r \coloneqq (\mathbf{U}(t_r) - \mathbf{U}(t_{r-1}))^T (\hat{\mathbf{V}}(t_{r}) - \hat{\mathbf{V}}(t_{r-1}))^+ (\mathbf{U}(t_r) - \mathbf{U}(t_{r-1})) \underset{n \to \infty}{\overset{\mathcal{D}}{\to}} \chi^2_{d} \qquad \forall r \in \{1,\dots,m\}
		\end{equation*}
		where $\mathbf{V}^+$ denotes the generalized Moore-Penrose inverse for some matrix $\mathbf{V}$. Also, $S_1,\dots, S_m$ are mutually asymptotically independent.
	\end{enumerate}
	Both statements also hold when replacing $\mathbf{U}$, $\mathbf{V}$ and $\hat{\mathbf{V}}$ by $\mathbf{\Uall}$, $\mathbf{\Vall}$ and $\hat{\mathbf{V}}_{\text{all}}$.
\end{cor}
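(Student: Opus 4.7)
The plan is to reduce both statements to the limiting processes $\mathbf{u}$ and $\mathbf{\uall}$, which have already been shown to converge to Gaussian vector martingales, and then read off the required joint distributions and quadratic-form convergences by Slutsky's theorem and the continuous mapping theorem. I would present everything for $(\mathbf{U}, \mathbf{V}, \hat{\mathbf{V}})$; the version for $(\mathbf{\Uall}, \mathbf{\Vall}, \hat{\mathbf{V}}_{\text{all}})$ is identical, invoking the corresponding halves of Corollary \ref{cor:asymp_equiv_comp} and Theorem \ref{thm:rebolledo_application}.

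First I would apply Corollary \ref{cor:asymp_equiv_comp} at each calendar time $t_r$, $r=1,\dots,m$, and combine the $m$ statements via Lemma \ref{lemma:mv_conv_in_prob} to obtain $(\mathbf{U}(t_r) - \mathbf{u}(t_r))_{r=1}^m \overset{\mathbb{P}}{\to} 0$ in $\mathbb{R}^{dm}$. By Slutsky, it therefore suffices to prove (C1) and (C2) with $\mathbf{U}$ replaced by $\mathbf{u}$. For (C1), Theorem \ref{thm:rebolledo_application} yields that $\mathbf{u}$ converges (as a process on $[0,t_{\max}]$) to a mean-zero Gaussian vector martingale $\mathbf{G}$ with covariance function $\mathbf{V}$. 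Restricting this functional limit to the finite collection $\{t_1,\dots,t_m\}$ makes $(\mathbf{G}(t_1),\dots,\mathbf{G}(t_m))$ mean-zero multivariate Gaussian. The block structure of $\mathbf{V}_{\text{acc}}$ then follows from the martingale identity $\mathbb{E}[\mathbf{G}(t_{r_1})\mathbf{G}(t_{r_2})^T] = \mathbf{V}(t_{r_1}\wedge t_{r_2})$, which one derives by conditioning on $\mathcal{F}(t_{r_1}\wedge t_{r_2})$ using Lemma \ref{lemma:mv_martingale}. The increment version is a linear transformation of the above, hence again jointly Gaussian, and orthogonality of martingale increments (together with the same conditioning argument) gives $\mathbb{E}[(\mathbf{G}(t_r)-\mathbf{G}(t_{r-1}))(\mathbf{G}(t_{r'})-\mathbf{G}(t_{r'-1}))^T] = 0$ for $r\neq r'$, which is exactly the stated block-diagonal form of $\mathbf{V}_{\text{inc}}$.

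For (C2), I would combine (C1) with the consistency statement $\hat{\mathbf{V}}(t) \overset{\mathbb{P}}{\to} \mathbf{V}(t)$ from Theorem \ref{thm:rebolledo_application}, again applied at each $t_r$ and bundled via Lemma \ref{lemma:mv_conv_in_prob}. Slutsky then gives the joint convergence
\begin{equation*}
\bigl((\mathbf{U}(t_r)-\mathbf{U}(t_{r-1}))_{r=1}^m,\,(\hat{\mathbf{V}}(t_r)-\hat{\mathbf{V}}(t_{r-1}))_{r=1}^m\bigr) \overset{\mathcal{D}}{\to} \bigl((\mathbf{W}_r)_{r=1}^m,\,(\mathbf{V}(t_r)-\mathbf{V}(t_{r-1}))_{r=1}^m\bigr),
\end{equation*}
where the $\mathbf{W}_r$ are independent mean-zero Gaussians with covariances $\mathbf{V}(t_r)-\mathbf{V}(t_{r-1})$. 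Under the invertibility hypothesis, the map $(x,A)\mapsto x^T A^{+} x$ is continuous at every $(\mathbf{W}_r,\mathbf{V}(t_r)-\mathbf{V}(t_{r-1}))$ (the Moore--Penrose inverse coincides with the ordinary inverse there and is continuous on invertible matrices). The continuous mapping theorem, applied componentwise, then delivers $(S_1,\dots,S_m) \overset{\mathcal{D}}{\to} (\mathbf{W}_1^T(\mathbf{V}(t_1)-\mathbf{V}(t_0))^{-1}\mathbf{W}_1,\dots,\mathbf{W}_m^T(\mathbf{V}(t_m)-\mathbf{V}(t_{m-1}))^{-1}\mathbf{W}_m)$; each coordinate is a standard $\chi^2_d$, and the independence of the $\mathbf{W}_r$'s forces mutual independence of the limits.

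The main obstacle I anticipate is the passage from the functional Gaussian limit in Theorem \ref{thm:rebolledo_application} to the finite-dimensional joint convergence at the analysis times, in particular verifying that the limiting covariance at distinct calendar times really equals $\mathbf{V}(t_{r_1}\wedge t_{r_2})$ rather than something more complicated depending on both arguments. This is where Lemma \ref{lemma:mv_martingale} is crucial: without the $t$-independence of $q^{jk}$, $\mu^{j\to E}$, $\mu^j$ ensuring the two-parameter process reduces to a one-parameter $(\mathcal{F}(t))$-martingale, the tower-property computation underlying the block form would fail. Everything else (Slutsky, continuous mapping, consistency of $\hat{\mathbf{V}}$) is routine once the Gaussian martingale limit is in hand.
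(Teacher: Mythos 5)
Your proposal is correct and follows essentially the same route as the paper's proof: reduce $\mathbf{U}$ to the limit process $\mathbf{u}$ via the asymptotic-equivalence results bundled by Lemma \ref{lemma:mv_conv_in_prob}, obtain joint Gaussianity from the Gaussian vector-martingale limit, and get (C2) from consistency of $\hat{\mathbf{V}}$, continuity of the Moore--Penrose inverse at invertible matrices, Slutsky, and the continuous mapping theorem. You are in fact somewhat more careful than the paper at one point: the paper attributes the first statement for $\mathbf{u}$ to Theorem \ref{thm:transition_stat}, whereas the joint normality with the $\mathbf{V}(t_{r_1}\wedge t_{r_2})$ covariance structure really rests, as you say, on Theorem \ref{thm:rebolledo_application} together with the $(\mathcal{F}(t))_{t\geq 0}$-martingale property from Lemma \ref{lemma:mv_martingale}.
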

\begin{proof}
	The first statement holds for $\mathbf{u}$ by Theorem \ref{thm:transition_stat}. The components of $\mathbf{U}$ and $\mathbf{u}$ are asymptotically equivalent for each $t\geq 0$ by Theorem \ref{thm:transition_stat}. By Lemma \ref{lemma:mv_conv_in_prob}, this asymptotical equivalence also holds for the whole vectors $\mathbf{U}$ and $\mathbf{u}$. If for two sequences of random variables $(X_n)_{n\geq 0}$ and $(Y_n)_{n\geq 0}$, it holds $X_n \overset{\mathcal{D}}{\to} X$ for a random variable $X$ and $X_n-Y_n \overset{\mathbb{P}}{\to} 0$, then $Y_n$ also converges in distribution to $X$. Hence, the first statment also holds for $\mathbf{U}$.\\
	For the second statement, we note that $\hat{\mathbf{V}}(t_{r}) - \hat{\mathbf{V}}(t_{r-1})$ converges in probability to $\mathbf{V}(t_{r}) - \mathbf{V}(t_{r-1})$. As we assume that the latter is invertible, we also have
	\begin{equation*}
		(\hat{\mathbf{V}}(t_{r}) - \hat{\mathbf{V}}(t_{r-1}))^+ \overset{\mathbb{P}}{\to} (\mathbf{V}(t_{r}) - \mathbf{V}(t_{r-1}))^+ = (\mathbf{V}(t_{r}) - \mathbf{V}(t_{r-1}))^{-1}
	\end{equation*}	
	as stated e.g. in \cite{Puri:1984}. By Slutsky's Theorem and Lemma \ref{lemma:mv_conv_in_prob} it follows
	\begin{equation*}
		(S_1,\dots, S_m) \overset{\mathbb{P}}{\to} (\mathbf{u}(t_1)^T \mathbf{V}(t_{1})^{-1} \mathbf{u}(t_1), \dots, (\mathbf{u}(t_{m}) - \mathbf{u}(t_{m-1}))^T (\mathbf{V}(t_{m}) - \mathbf{V}(t_{m-1}))^{-1} (\mathbf{u}(t_{m}) - \mathbf{u}(t_{m-1}))).
	\end{equation*}
	By the first part of the statement, the continuous mapping theorem and the same argument as in the last step of the proof of the firstpart, we obtain
	\begin{equation*}
		(S_1,\dots, S_m) \overset{\mathcal{D}}{\to} (X_1, \dots, X_m)
	\end{equation*}
	where $X_1,\dots, X_m$ are $m$ independent $\chi^2_d$-distributed random variables.\\
	The proof for $\mathbf{\Uall}$, $\mathbf{\Vall}$ and $\hat{\mathbf{V}}_{\text{all}}$ follows analogously.
\end{proof}

Now, that the test statistics and its asymptotic distribution are fixed, we will also consider the distribution under alternatives. In doing so, we assume that the transition intensities $\lambda^{jk}_0$ and $\lambda^{jk}_1$ for the transition $j\to k$ in the treatment groups in which $Z=0$ resp. $Z=1$ only differ by a factor, i.e. we assume transition-wise proportional hazards. In particular, we will consider contiguous alternatives. Hence, $\lambda^{jk}_1 / \lambda^{jk}_0 = 1 - n^{-1/2}\tilde{\delta}^{jk}$ for some $\tilde{\delta}^{jk}$ independent of $n$ for each transition $j \to k$.\\
Under some alternative, we need to consider that the transition specific martingale $M_i^{\{j\}\{k\}}$ from \eqref{eq:transition_specific_martingale} is then given by
\begin{equation}\label{eq:martingale_under_alternative}
	M_i^{\{j\}\{k\}}(t,s) \coloneqq N_i^{\{j\}\{k\}}(t,s) - \int_{[0,s]} Y_i^{j, Z=0}(t,u) \lambda^{jk}_0(u) + Y_i^{j, Z=1}(t,u) \lambda^{jk}_1(u) \, du.
\end{equation}

\begin{thm}\label{thm:transition_stat_alternative}
Let $j \neq k$ be two states of a Markovian multi-state model with $\lambda^{jk} \not\equiv 0$. Let $\lambda_0^{jk} \coloneqq \lambda^{jk}$ denote the corresponding transition intensity function if $Z=0$. Under contiguous alternatives, the transition intensity if $Z=1$ is then given by $\lambda^{jk}_1 \coloneqq \lambda^{jk} \cdot (1 - n^{-1/2} \tilde{\delta}^{jk})$. Additionally, we define the drift process $(\theta^{\{j\}\{k\}}(t,s))_{t,s \geq 0}$ by
\begin{equation*}
	\theta^{\{j\}\{k\}}(t,s)\coloneqq -\tilde{\delta}^{jk} \int_{[0,s]} q^{jk}(t,u) \left(1 - \frac{y^{j,Z=1}(t,u)}{y^j(t,u)} \right) y^{j,Z=1}(t,u) \lambda^{jk}_0(u) \, du 
\end{equation*}
If the conditions \ref{item:assumption:weight_convergence} - \ref{item:assumption:weight_bound} are given under the contiguous alternatives, the processes $(U^{\{j\}\{k\}}(t))_{t\geq 0}$ and $(u^{\{j\}\{k\}}(t) + \theta^{\{j\}\{k\}}(t))_{t \geq 0}$ are asymptotically equivalent, i.e.
\begin{equation*}
	U^{\{j\}\{k\}}(t) - (u^{\{j\}\{k\}}(t) + \theta^{\{j\}\{k\}}(t)) \overset{\mathbb{P}}{\to} 0
\end{equation*}
for any $t \geq 0$.
\end{thm}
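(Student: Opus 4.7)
My approach is to separate the statistic into a stochastic integral against the ``alternative'' martingale and a deterministic drift term that quantifies the effect of the contiguous departure. Concretely, under the alternative the process
\begin{equation*}
M_i^{\{j\}\{k\}}(t,s) = N_i^{\{j\}\{k\}}(t,s) - \int_{[0,s]} \bigl[ Y_i^{j}(t,u)(1-Z_i)\lambda^{jk}_0(u) + Y_i^{j}(t,u) Z_i \lambda^{jk}_1(u) \bigr] du
\end{equation*}
is a mean-zero square integrable martingale with respect to $(\mathcal{F}^{\{j\}\{k\}}_i(t,s))_{s\geq 0}$. Substituting $N_i^{\{j\}\{k\}}(t,du) = M_i^{\{j\}\{k\}}(t,du) + dA_i(t,u)$ in the definition of $U^{\{j\}\{k\}}(t)$ gives the decomposition $U^{\{j\}\{k\}}(t) = A(t) + B(t)$, where $A(t)$ is the integral of the predictable weight $Q^{jk}(t,u)(Z_i - Y^{j,Z=1}/Y^j)$ against $dM_i^{\{j\}\{k\}}$ and $B(t)$ is the integral against the alternative compensator.

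For the martingale part $A(t)$, the argument of Theorem~\ref{thm:transition_stat} carries over unchanged: it only uses that $M_i^{\{j\}\{k\}}$ is a martingale with bounded compensator increments, assumptions \ref{item:assumption:weight_convergence}--\ref{item:assumption:weight_bound}, and the uniform law for $Y^{j,Z=1}/Y^j$. Under contiguous alternatives the compensator differs from the null by a term of order $n^{-1/2}$ uniformly on compact intervals, so the limits $q^{jk}$ and $y^{j,Z=1}/y^j$ are unchanged, the predictable covariation bounds used in that proof remain valid, and hence $A(t) - u^{\{j\}\{k\}}(t) \overset{\mathbb{P}}{\to} 0$ via the very same application of Theorem~II.5.1 of \cite{Andersen:93}.

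For the drift part $B(t)$, I plan a direct algebraic simplification. Using $Y_i^{j,Z=0} = (1-Z_i)Y_i^j$, $Y_i^{j,Z=1} = Z_i Y_i^j$, and summing over $i$ collapses the two group contributions to
\begin{equation*}
B(t) = \frac{1}{\sqrt{n}} \int_{[0,t]} Q^{jk}(t,u)\, Y^{j,Z=1}(t,u)\left(1 - \frac{Y^{j,Z=1}(t,u)}{Y^{j}(t,u)}\right) \bigl(\lambda^{jk}_1(u) - \lambda^{jk}_0(u)\bigr)\, du.
\end{equation*}
Inserting $\lambda^{jk}_1 - \lambda^{jk}_0 = -n^{-1/2}\tilde{\delta}^{jk}\lambda^{jk}_0$ absorbs the remaining $\sqrt{n}$, leaving $-\tilde{\delta}^{jk}\cdot n^{-1}$ times an integral in which $Y^{j,Z=1}/n$ converges in probability (uniformly on $[0,t]$) to $y^{j,Z=1}/y^j$ scaled appropriately, $Y^{j,Z=1}/Y^j$ to $y^{j,Z=1}/y^j$, and $Q^{jk}(t,\cdot)$ to $q^{jk}(t,\cdot)$. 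The uniform LLN for i.i.d.\ indicator sums, assumption \ref{item:assumption:at_risk}, and assumption \ref{item:assumption:weight_convergence} together with dominated convergence (everything is bounded by \ref{item:assumption:weight_bound} and boundedness of $\lambda^{jk}_0$ on compacts) then yield $B(t) \overset{\mathbb{P}}{\to} \theta^{\{j\}\{k\}}(t)$. Combining the two pieces finishes the argument.

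The main obstacle is the bookkeeping around which distribution (null or the $n$-dependent contiguous alternative) governs the various convergence-in-probability statements that were imported from the null setting: I need to check that assumptions \ref{item:assumption:weight_convergence}--\ref{item:assumption:weight_bound} and \ref{item:assumption:at_risk} are assumed here directly under the alternative, or else invoke contiguity (Le~Cam's first lemma) to transfer the null-probability limits to the alternative. The rest is essentially algebraic.
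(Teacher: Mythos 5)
Your proposal is correct and takes essentially the same route as the paper's proof: decompose $U^{\{j\}\{k\}}$ into the integral against the alternative-compensated martingale from \eqref{eq:martingale_under_alternative} (handled by rerunning the proof of Theorem \ref{thm:transition_stat}, whose assumptions \ref{item:assumption:weight_convergence}--\ref{item:assumption:weight_bound} the theorem explicitly imposes under the contiguous alternative, so no Le Cam transfer is needed) plus a drift term that collapses via $\sum_i \bigl(Z_i - Y^{j,Z=1}/Y^j\bigr) Y_i^{j,Z=1} = Y^{j,Z=1}\bigl(1 - Y^{j,Z=1}/Y^j\bigr)$ and converges to $\theta^{\{j\}\{k\}}(t)$ by the law of large numbers. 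Your explicit algebraic simplification of the drift even carries the correct sign, whereas the paper's displayed decomposition contains a sign slip, so nothing is missing.
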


\begin{proof}
As in the proof of Theorem \ref{thm:transition_stat}, we obtain 
\begin{equation*}
	U^{\{j\}\{k\}}(t,s) = \frac{1}{\sqrt{n}} \sum_{i=1}^n \int_{[0,t]} Q^{jk}(t,u) \left( Z_i - \frac{Y^{j, Z=1}(t,u)}{Y^j(t,u)} \right) \left(N_i^{\{j\}\{k\}}(t, du) - \lambda^{jk}(u) \cdot Y^j_i(t,u)\, du \right)
\end{equation*}
In contrast to Theorem \ref{thm:transition_stat}, the integrator is not a maringale. However, we can decompose this process to obtain the martingale \eqref{eq:martingale_under_alternative} as
\begin{align*}
	U^{\{j\}\{k\}}(t)=&\frac{1}{\sqrt{n}} \sum_{i=1}^n \int_{[0,t]} Q^{jk}(t,u) \left( Z_i - \frac{Y^{j, Z=1}(t,u)}{Y^j(t,u)} \right) M_i^{\{j\}\{k\}}(t, du)\\
	&+\frac{1}{\sqrt{n}} \sum_{i=1}^n \int_{[0,t]} Q^{jk}(t,u) \left( Z_i - \frac{Y^{j, Z=1}(t,u)}{Y^j(t,u)} \right) \cdot \lambda_0^{jk}(u) \cdot n^{-1/2} \tilde{\delta}^{jk} \cdot Y_i^{j,Z=1}(t,u)\, du.
\end{align*}
This basically follows from the equality $\lambda_0^{jk}(u) = \lambda_1^{jk}(u) + \lambda_0^{jk}(u) \cdot n^{-1/2} \tilde{\delta}^{jk}$. For the first summand, the proof of Theorem \ref{thm:transition_stat} can be applied. For the second summand, we can use the assumptions given in this statement and the law of large number to obtain
\begin{equation*}
	\theta^{\{j\}\{k\}}(t,t) - \frac{1}{n} \sum_{i=1}^n \int_{[0,t]} Q^{jk}(t,u) \left(Z_i - \frac{Y^{j, Z=1}(t,u)}{Y^j(t,u)} \right) \cdot \lambda_0^{jk}(u) \cdot \tilde{\delta}^{jk} \cdot Y_i^{j,Z=1}(t,u)\, du \overset{\mathbb{P}}{\to} 0.
\end{equation*}
\end{proof}

Concerning the validity of assumption \ref{item:assumption:at_risk}, we note the following. The $(l+1)\times(l+1)$-matrix valued transition intensity functions $\mathbf{\Lambda}^0$ and $\mathbf{\Lambda}^1(s)$ for the two treatment groups are related to each other in the following way under the contiguous alternatives mentioned above:
\begin{equation*}
	\mathbf{\Lambda}^1(s) = \mathbf{\Lambda}^0(s) \circ (1 - n^{-1/2} \tilde{\delta}^{jk})_{j,k \in \{0,\dots,l\}}.
\end{equation*}
Here, $a \circ b$ denotes the Hadamard product (i.e. elementwise multiplication) of two matrices $a$ and $b$ of the same size. The transition probability matrix $\mathbf{P}_g(s_1,s_2)$ is given by the exponential of $\mathbf{\Lambda}^g(s_2)-\mathbf{\Lambda}^g(s_1)$ for any group $g\in \{0,1\}$. As the matrix exponential is Lipschitz continuous on compact subsets of the space of $(l+1)\times (l+1)$-matrices and we assume the integrated transition intensities to be bounded on compact subsets of $[0,\infty)$, the second condition should in general be fulfilled, i.e. we have $\mathbf{P}_1(\cdot,\cdot) \to \mathbf{P}_0(\cdot,\cdot)$ elementwise uniformly on compact subsets of $\mathbb{R}_+^2$.\\
Based on Theorem \ref{thm:transition_stat_alternative}, analogous results to those of Corollary \ref{cor:transition_stat_stopped}, Corollary \ref{cor:asymp_equiv_comp}, Lemma \ref{lemma:mv_martingale}, Theorem \ref{thm:rebolledo_application} and Corollary \ref{cor:test_stat} can be derived. The most important results from that procedure are summarized in the following Corollary. The key factor here is that the drift terms $\theta^{\{j\}\{k\}}$ are deterministic, and hence it does not contribute any variance. For the random part $u$ of the process obtained in Theorem \ref{thm:transition_stat_alternative}, analogous results as those obtained above under the null hypothesis apply. For any event defined by a subset of the state space $E$, we define the drift processes $(\theta^{E}(t))_{t\geq 0}$ and $(\theta^E_{\text{all}}(t))_{t\geq 0}$ by
\begin{equation*}
	\theta^{E}(t)\coloneqq \sum_{k \in E} \sum_{j \notin E\colon \lambda^{jk}\not\equiv 0} \underbrace{ -\tilde{\delta}^{jk} \int_{[0,t]} q^{jk}(u) \left(1 - \mu^{j \to E}(u) \right) y^{j\to E,Z=1}(t,u) \lambda^{jk}_0(u) \, du}_{\eqqcolon \theta^{jk,E}(t)}.
\end{equation*} 
resp. 
\begin{equation*}
	\theta^{E}_{\text{all}}(t)\coloneqq \sum_{k \in E} \sum_{j \notin E\colon \lambda^{jk}\not\equiv 0}  -\tilde{\delta}^{jk} \int_{[0,t]} q^{jk}(u) \left(1 - \mu^{j}(u) \right) y^{j,Z=1}(t,u) \lambda^{jk}_0(u) \, du.
\end{equation*}
For a collection of subsets $E_1,\dots,E_d$ of subsets of the state space, we define the $d$-dimensional processes $\boldsymbol{\theta}\coloneqq (\theta^{E_1}, \dots, \theta^{E_d})$ and $\boldsymbol{\theta}_{\text{all}}\coloneqq (\theta^{E_1}_{\text{all}}, \dots, \theta^{E_d}_{\text{all}})$.
\begin{cor}
	Under contiguous alternatives for each transition as in Theorem \ref{thm:transition_stat_alternative}, the following statements hold under assumptions \ref{item:assumption:weight_convergence} - \ref{item:assumption:weight_bound} and the assumption stated in Lemma \ref{lemma:mv_martingale}:
	\begin{enumerate}
		\item The processes $\mathbf{U}$ and $\mathbf{U}_{\text{all}}$ as defined in \eqref{eq:def_mv_stopped_process} resp. \eqref{eq:def_mv_process} are asymptotically equivalent to the processes $\mathbf{u} + \boldsymbol{\theta}$ resp. $\mathbf{u}_{\text{all}} + \boldsymbol{\theta}_{\text{all}}$. where $\mathbf{u}$ and $\mathbf{u}_{\text{all}}$ are $(\mathcal{F}(t))_{t \geq 0}$-martingales which converge in distribution to a Gaussian mean-zero vector martingale with covariance processes as stated in Theorem \ref{thm:rebolledo_application}.
		\item The increments of $\mathbf{U}$ resp. $\mathbf{U}_{\text{all}}$ are asymptotically independent and asymptotically multivariately normally distributed. The covariance can be estimated consistently as stated in Theorem \ref{thm:rebolledo_application}.
		\item The stage-wise test statistics $S_1, \dots, S_m$ as defined in Corollary \ref{cor:test_stat} are asymptotically independent and asymptotically follow a non-central $\chi^2_d$ distribution with non-centrality parameter
		\begin{equation*}
			\eta_r \coloneqq (\boldsymbol{\theta}(t_r) - \boldsymbol{\theta}(t_{r-1}))^T (\mathbf{V}(t_{r}) - \mathbf{V}(t_{r-1}))^{-1} (\boldsymbol{\theta}(t_r) - \boldsymbol{\theta}(t_{r-1}))
		\end{equation*}
		resp.
		\begin{equation*}
			\eta_{r, \text{all}} \coloneqq (\boldsymbol{\theta}_{\text{all}}(t_r) - \boldsymbol{\theta}_{\text{all}}(t_{r-1}))^T (\mathbf{V}_{\text{all}}(t_{r}) - \mathbf{V}_{\text{all}}(t_{r-1}))^{-1} (\boldsymbol{\theta}_{\text{all}}(t_r) - \boldsymbol{\theta}_{\text{all}}(t_{r-1}))
		\end{equation*}
		if the increments of the covariance function are invertible.
	\end{enumerate}
\end{cor}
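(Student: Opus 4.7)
The plan is to assemble this corollary by iterating the same constructions that led from Theorem \ref{thm:transition_stat} to Corollary \ref{cor:test_stat}, but starting from the alternative-hypothesis version Theorem \ref{thm:transition_stat_alternative} instead. The key observation, already highlighted in the statement, is that the drift term $\theta^{\{j\}\{k\}}$ is deterministic (in the limit), so it contributes nothing to covariance, jumps, or predictable variation, and the entire stochastic analysis reduces to that of the mean-zero process $u^{\{j\}\{k\}}$.

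First I would extend Theorem \ref{thm:transition_stat_alternative} to stopped transition-specific processes $\tilde{U}^{\{j\}\{k\}}$, in analogy with Corollary \ref{cor:transition_stat_stopped}, by invoking optional stopping for the alternative-hypothesis martingale in \eqref{eq:martingale_under_alternative}; the drift then stops accordingly. Summing over admissible $(j,k)$ and applying Lemma \ref{lemma:mv_conv_in_prob} to the components gives, as in Corollary \ref{cor:asymp_equiv_comp}, the asymptotic equivalences
\begin{equation*}
\mathbf{U}(t) - (\mathbf{u}(t) + \boldsymbol{\theta}(t)) \overset{\mathbb{P}}{\to} 0, \qquad \mathbf{U}_{\text{all}}(t) - (\mathbf{u}_{\text{all}}(t) + \boldsymbol{\theta}_{\text{all}}(t)) \overset{\mathbb{P}}{\to} 0.
\end{equation*}
Part (1) then follows once I verify that $\mathbf{u}$ and $\mathbf{u}_{\text{all}}$ remain $(\mathcal{F}(t))_{t\geq 0}$-martingales under the alternative (the argument of Lemma \ref{lemma:mv_martingale} carries over verbatim, since it only used the martingale property of each $M_i^{\{j\}\{k\}}$ and the fact that individual data become visible at calendar time $R_i$, both preserved under contiguity) and that the application of Rebolledo's theorem in Theorem \ref{thm:rebolledo_application} still produces the same limiting covariance. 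For the latter I would argue that replacing $\lambda^{jk}$ by $\lambda^{jk}_0(1 - n^{-1/2}\tilde{\delta}^{jk})$ inside the predictable variation integrals perturbs each integrand by $O(n^{-1/2})$ uniformly, and as noted in the remark after Theorem \ref{thm:transition_stat_alternative}, the Lipschitz continuity of the matrix exponential ensures that $\mathbf{P}_1(\cdot,\cdot)\to \mathbf{P}_0(\cdot,\cdot)$ uniformly on compacts, so the limits $y^j(t,s)$, $y^{j\to E}(t,s)$ and the ratios $\mu^{j}$, $\mu^{j\to E}$ are unchanged.

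For part (2), the multivariate martingale central limit theorem applied stage-wise to $\mathbf{u}$ yields, as in Corollary \ref{cor:test_stat}(C1), joint asymptotic normality of $(\mathbf{u}(t_1),\ldots,\mathbf{u}(t_m))$ with the block-covariance structure $\mathbf{V}(t_{r_1}\wedge t_{r_2})$, hence asymptotic independence of the increments. Adding the deterministic drift shifts the means to $\boldsymbol{\theta}(t_r)-\boldsymbol{\theta}(t_{r-1})$ without disturbing independence, and Slutsky together with the proven asymptotic equivalence transfers the statement to $\mathbf{U}$ (and analogously to $\mathbf{U}_{\text{all}}$). Consistency of $\hat{\mathbf{V}}$ for $\mathbf{V}$ under the alternative follows exactly as in the last part of the proof of Theorem \ref{thm:rebolledo_application}, again because the $n^{-1/2}$ perturbation vanishes in the limit.

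Part (3) is then a standard quadratic-form computation: by the continuous mapping theorem and Slutsky, applied to the consistent convergence $\hat{\mathbf{V}}(t_r) - \hat{\mathbf{V}}(t_{r-1}) \overset{\mathbb{P}}{\to} \mathbf{V}(t_r) - \mathbf{V}(t_{r-1})$ (invertible by assumption, so the Moore--Penrose inverse equals the ordinary inverse in the limit, cf. \cite{Puri:1984}), each $S_r$ converges in distribution to $X_r^T (\mathbf{V}(t_r)-\mathbf{V}(t_{r-1}))^{-1} X_r$ with $X_r \sim \mathcal{N}(\boldsymbol{\theta}(t_r)-\boldsymbol{\theta}(t_{r-1}), \mathbf{V}(t_r)-\mathbf{V}(t_{r-1}))$ independent across $r$. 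After the standard whitening $\mathbf{L}_r^{-1} X_r$, this is the sum of squares of $d$ independent normals with non-centrality $\eta_r$ as stated, i.e.\ a non-central $\chi^2_d$ with non-centrality parameter $\eta_r$ (resp.\ $\eta_{r,\text{all}}$), and the $S_r$ are mutually independent.

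The main technical obstacle is the verification that the contiguous perturbation leaves the covariance limit and the at-risk ratio limits unchanged; once one grants the uniform convergence $\mathbf{P}_1\to \mathbf{P}_0$ and the boundedness of $\lambda^{jk}_0$ and $q^{jk}$ on compacts, everything else is a bookkeeping exercise mirroring the null-hypothesis arguments.
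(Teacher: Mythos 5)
Your proposal is correct and follows essentially the same route as the paper, which itself only sketches this corollary by asserting that the chain Corollary \ref{cor:transition_stat_stopped} $\to$ Corollary \ref{cor:asymp_equiv_comp} $\to$ Lemma \ref{lemma:mv_martingale} $\to$ Theorem \ref{thm:rebolledo_application} $\to$ Corollary \ref{cor:test_stat} can be rerun from Theorem \ref{thm:transition_stat_alternative}, with the deterministic drift contributing no variance. You supply the same argument in more detail --- in particular, your explicit verification that the $O(n^{-1/2})$ perturbation of $\lambda^{jk}$ leaves the predictable variation limits and the at-risk ratios $\mu^{j}$, $\mu^{j\to E}$ unchanged is exactly the point the paper delegates to the remark on $\mathbf{P}_1(\cdot,\cdot)\to\mathbf{P}_0(\cdot,\cdot)$ preceding the corollary.
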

The last part of this statement can now be used for power and sample size calculations. However, we notice strong deviations for small sample sizes if such results are used for sample size considerations. This is particularly the case if differences in several transitions are present. In such a case, it is better to approximate the distribution of an increment $\mathbf{U}(t_r) - \mathbf{U}(t_{r-1})$ by a normal distribution with adjusted drift components
\begin{equation*}
	\tilde{\theta}^{jk, E}(t)\coloneqq - \tilde{\delta}^{jk} \int_{[0,t]} q^{jk}(u) \left(1 - \frac{y^{j\to E,Z=1}(t,u)}{y^{j \to E}(t,u)} \right) y^{j\to E,Z=1}(t,u) \lambda^{jk}_0(u) du
\end{equation*}
and adjusted variance components 
\begin{equation*}
	\begin{split}
	\tilde{V}^{jk}_{c_1 c_2}(t) \coloneqq \int_{[0,t]} q^{jk}(s)^2 &(y^{j\to E_{c_1} \cup E_{c_2}, Z=0}(t,s)\lambda_0^{jk}(s) + y^{j\to E_{c_1} \cup E_{c_2}, Z=1}(t,s)\lambda_1^{jk}(s)) \cdot\\
	&\cdot \frac{y^{j \to E_{c_1}, Z=1}(t,s)}{y^{j \to E_{c_1}}(t,s)} \frac{y^{j \to E_{c_2}, Z=1}(t,s)}{y^{j \to E_{c_2}}(t,s)} ds
	\end{split}.
\end{equation*}
The quantities used here can be computed from the planning assumptions. Here, it should be noted, that in the main manuscript, we are considering local alternatives of the form $\lambda^{jk}_1 / \lambda^{jk}_0 = \delta^{jk}$ and that the quantities calculated there result from simple algebraic transformations of the calculations here,
\subsection*{A.2 Invertibility of the increments of the covariance matrix}\label{sec:invertibility}

For the test statistics obtained from \ref{item:test_stat} of Corollary \ref{cor:test_stat}, it was required that the increments of the covariance function are invertible. In what follows we want to provide criteria under which this invertibility can be guaranteed. While for $\mathbf{\Vall}$, a necessary and sufficient characterization can be provided, this is not possible for $\mathbf{V}$. The reason for this lies in the different structures of those matrices. For one specific transition $j \to k$, its contributions to different entries of the covariance function $\mathbf{\Vall}$ are all the same as it can be seen from \eqref{eq:covariance_uall}. Contrarily, this is not the case for $\mathbf{V}$ as it can be seen from \eqref{eq:covariance_u}. The contributions differ, depending on the combination of $E_{c_1}$ and $E_{c_2}$.\\
Throughout the next part, we will assume that $V^{jk}$ and $V_{c_1 c_2}^{jk}$ as defined in \eqref{eq:covariance_contribution_uall} and \eqref{eq:covariance_contribution_u}, respectively, are strictly increasing functions for any transition $j \to k$ included in our multi-state model and $c_1,c_2 \in \{1,\dots, d\}$, in particular this means that for any arbitrarily small interval $[t_1, t_2]$, there is a positive probability that a transition from state $j$ to state $k$ before occurs between the calendar time $t_1$ and $t_2$.\\
As mentioned above, it is easier to establish a criterion for the invertibility of increments of $\mathbf{\Vall}$. Let $w_1,\dots,w_{n_{\text{trans}}}$ be an arbitrary enumeration of the transitions in the multi-state model. For any subset $E \subset \{1,\dots,l\}$ of the state space, we define the vector $\Psi_E\coloneqq (\psi_{E,j})_{j \in \{1,\dots,n_{\text{trans}}\}}$ by
\begin{equation*}
	\psi_{E,j} \coloneqq 
		\begin{cases}
			1 & \text{if $w_j$ contributes to $E$}\\
			0 & \text{else.}
		\end{cases}
\end{equation*} 
where "contribution" of a transition to an event $E$ means that the start of the transition starts outside of $E$ and enters $E$.
\begin{lemma}\label{lemma:invertibility_Vall}
	Increments of the covariance matrix function $\mathbf{\Vall}$ are invertible if and only if	for the sets $E_1,\dots, E_d$, the family of vectors $(\Psi_{E_c})_{c \in \{1,\dots,d\}}$ is linearly independent.
\end{lemma}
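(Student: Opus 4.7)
The plan is to exhibit a matrix factorisation of $\mathbf{V}_{\text{all}}(t)$ that reduces the invertibility question to a rank condition on the incidence vectors $\Psi_{E_c}$. Let $w_1,\dots,w_{n_{\text{trans}}}$ enumerate the transitions of the model, and arrange the incidence vectors into the $d\times n_{\text{trans}}$ matrix
\begin{equation*}
	\Psi \coloneqq \begin{pmatrix} \Psi_{E_1}^T \\ \vdots \\ \Psi_{E_d}^T \end{pmatrix},
\end{equation*}
so that $\Psi_{c,w}=\psi_{E_c,w}$. Let $D(t)\coloneqq \operatorname{diag}(V^{w_1}(t),\dots,V^{w_{n_{\text{trans}}}}(t))$ with $V^w$ as in \eqref{eq:covariance_contribution_uall}.

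First I would verify the factorisation
\begin{equation*}
	\mathbf{V}_{\text{all}}(t) = \Psi\, D(t)\, \Psi^T.
\end{equation*}
This is essentially a matter of reading off the index sets in \eqref{eq:covariance_uall}: a transition $w=(j\to k)$ contributes to the $(c_1,c_2)$-entry precisely when $k\in E_{c_1}\cap E_{c_2}$ and $j\notin E_{c_1}\cup E_{c_2}$, i.e.\ exactly when $\psi_{E_{c_1},w}\psi_{E_{c_2},w}=1$. Taking increments yields
\begin{equation*}
	\mathbf{V}_{\text{all}}(t_r)-\mathbf{V}_{\text{all}}(t_{r-1}) \;=\; \Psi\, \bigl(D(t_r)-D(t_{r-1})\bigr)\, \Psi^T,
\end{equation*}
and the standing assumption that each $V^w$ is strictly increasing makes $D(t_r)-D(t_{r-1})$ a (strictly) positive definite diagonal matrix.

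The remaining step is the elementary linear-algebra fact: for any real diagonal matrix $D_0$ that is positive definite, the symmetric matrix $\Psi D_0 \Psi^T$ is invertible if and only if $\Psi$ has full row rank. For sufficiency one writes $D_0=D_0^{1/2}D_0^{1/2}$ and observes that $\ker(\Psi D_0\Psi^T)=\ker(D_0^{1/2}\Psi^T)=\ker(\Psi^T)$; for necessity one uses that if $\sum_c\alpha_c\Psi_{E_c}=0$ with $\alpha\neq 0$, then $\Psi^T\alpha=0$ and hence $(\Psi D_0\Psi^T)\alpha=0$. Combining this with the factorisation above yields the equivalence stated in the lemma.

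I do not expect a genuine obstacle here: the bookkeeping that identifies $\mathbf{V}_{\text{all}}(t)$ as a Gram-type product $\Psi D(t)\Psi^T$ is the only non-routine ingredient, and once it is in place the rest is standard linear algebra driven by the positivity assumption on the $V^w$. The reason an analogous clean criterion will not work for $\mathbf{V}$ (as noted just before the lemma) is exactly that the per-transition contributions there differ across $(c_1,c_2)$, which destroys this rank-one-per-transition decomposition.
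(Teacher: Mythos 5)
Your proof is correct and is essentially the paper's own argument in cleaner clothing: the factorisation $\mathbf{V}_{\text{all}}(t)=\Psi D(t)\Psi^T$ is exactly the paper's entrywise representation via $(\Psi_{c_1}\circ\Psi_{c_2})\cdot\phi$, and your kernel computation with $D_0^{1/2}$ is the same step the paper carries out by expanding $\beta^T(\mathbf{V}_{\text{all}}(t_2)-\mathbf{V}_{\text{all}}(t_1))\beta$ as $\lVert\sum_i \beta_i\Psi_i\circ\sqrt{\phi}\rVert^2$ and invoking strict positivity of the increments $\phi_j$. No gap; the two proofs differ only in packaging.
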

\begin{proof}
	Let $0\leq t_1 < t_2$ be arbitrary. For any $j \in \{1,\dots,n_{\text{trans}}\}$, we define $\phi_j \coloneqq V^{jk}(t_2) - V^{jk}(t_1)$ if the $j$-th transition of our enumeration goes from state $j$ to state $k$. If $a \circ b$ denotes the Hadamard product (i.e. elementwise multiplication) and $a \cdot b$ the scalar product of two vectors $a$ and $b$, the increment of the covariance matrix $\mathbf{\Vall}$ is given by
	\begin{equation*}
		\mathbf{\Vall}(t_2) - \mathbf{\Vall}(t_1) = 
		\begin{pmatrix}
			(\Psi_1 \circ \Psi_1) \cdot \phi & \cdots & (\Psi_1 \circ \Psi_d) \cdot \phi\\
			\vdots & \ddots & \vdots\\
			(\Psi_d \circ \Psi_1) \cdot \phi & \cdots & (\Psi_d \circ \Psi_d) \cdot \phi
		\end{pmatrix}
	\end{equation*} 
	Additionally, we note that $\mathbf{\Vall}(t_2) - \mathbf{\Vall}(t_1)$ is a covariance matrix and is thus symmetric and positive semi-definite. Hence, its invertibility is equivalent to positive definiteness. In what follows, we will also abbreviate $\Psi_{E_c}\coloneqq \Psi_{c}$ and $\psi_{E_c,j}\coloneqq \psi_{c,j}$ for any $c$ and $j \in \{1,\dots,n_{\text{trans}}\}$.\\
	\underline{"$\Rightarrow$"}: We prove by contrapositive. If $(\Psi_{c})_{c \in \{1,\dots,d\}}$ is not linearly independent, there is some $\beta \in \mathbb{R}^d$ s.t. $\sum_{c=1}^d \beta_c \Psi_c=0$. Hence, for any $c\in \{1,\dots,d\}$ it holds
	\begin{equation*}
		\sum_{i=1}^d \beta_i ((\Psi_c \circ \Psi_i)\cdot v) = (\Psi_c \circ \underbrace{(\sum_{i=1}^d \beta_i\Psi_i)}_{=0})\cdot v = 0.
	\end{equation*}
	Followingly, $\mathbf{\Vall}(t_2) - \mathbf{\Vall}(t_1)$ does not have full rank and is thus not invertible.\\
	\underline{"$\Leftarrow$"}: We prove by contrapositive. If $\mathbf{\Vall}(t_2) - \mathbf{\Vall}(t_1)$ is not invertible, it is also not positive definite. Hence, there is some $0 \neq \beta \in \mathbb{R}^d$ s.t. $\beta^T (\mathbf{\Vall}(t_2) - \mathbf{\Vall}(t_1)) \beta = 0$. If $\sqrt{\phi}$ denotes the element-wise square root of the (element-wise positive) vector $\phi$, then
	\begin{align*}
		0&=\beta^T (\mathbf{\Vall}(t_2) - \mathbf{\Vall}(t_1)) \beta\\
		&=\sum_{i=1}^d \sum_{j=1}^d \beta_i \beta_j (\Psi_i \circ \Psi_j) \cdot \phi\\
		&=\sum_{i=1}^d \sum_{j=1}^d (\beta_i\Psi_i \circ \beta_j\Psi_j) \cdot \phi\\
		&=\sum_{i=1}^d \sum_{j=1}^d \underbrace{(\beta_i\Psi_i \circ \sqrt{\phi})\cdot(\beta_j\Psi_j \circ \sqrt{\phi})}_{= \langle (\beta_i\Psi_i \circ \sqrt{\phi}), (\beta_j\Psi_j \circ \sqrt{\phi}) \rangle}\\
		&=\langle \sum_{i=1}^d (\beta_i\Psi_i \circ \sqrt{\phi}), \sum_{j=1}^d (\beta_j\Psi_j \circ \sqrt{\phi}) \rangle\\
		&=||\sum_{i=1}^d (\beta_i\Psi_i \circ \sqrt{\phi})||^2. 
	\end{align*}
	Accordingly, each entry of $\sum_{i=1}^d (\beta_i\Psi_i \circ \sqrt{\phi})$ is equal to zero. Hence, for any $j \in \{1,\dots,n_{\text{trans}}\}$,
	\begin{align*}
		&\sum_{i=1}^d (\beta_i\psi_{i,j} \sqrt{\phi_j}) = 0\\
		\Rightarrow & \sqrt{\phi_j} \sum_{i=1}^d (\beta_i\psi_{i,j}) = 0\\
		\Rightarrow & \sum_{i=1}^d (\beta_i\psi_{i,j}) = 0.
	\end{align*}
	It follows that $(\Psi_{E_c})_{c \in \{1,\dots,d\}}$ is not linearly independent.
\end{proof}
As explained above, the situation for increments of $\mathbf{V}$ is a bit more complicated. In what follows, we try to give a criterion as general as possible.\\ 
Let $I_0\coloneqq \{1,\dots,d\}$ denote the index set of the events we want to assess. The corresponding variance matrix is given by $\mathbf{V}$. For any subset of this set, we denote by $\mathbf{V}_{I}$ the $|I|\times|I|$-submatrix of $\mathbf{V}$ in which all columns and rows in $I_0\setminus I$ are deleted.\\
We refer to a transition $j \to k$ as \textit{exclusive} for a set $E_c$ w.r.t. some collection $I \subset I_0$ of events if $j \notin E_c$, $k \in E_c$ and $j \in E_{\tilde{c}}$ or  $k \notin E_{\tilde{c}}$ for any $\tilde{c} \in I \setminus \{c\}$.
Given these definitions we now state the following
\begin{lemma}
For some $I\subseteq I_0$, the increments of the covariance matrix function $\mathbf{V}_I$ are invertible if at least one of the following conditions is fulfilled:
\begin{enumerate}[label = (I\arabic*)]
	\item\label{item:exclusivity} For each $c\in I$, $E_c$ has an exclusive transition w.r.t. $I$.
	\item\label{item:decomposition} The index sets $I$ can be decomposed into disjoint sets $I_1 \cup I_2$ s.t. for each $c\in I_1$, $E_c$ has an exclusive transition w.r.t. $I$ and the increments of $\mathbf{V}_{I_2}$ are invertible.
\end{enumerate}
\end{lemma}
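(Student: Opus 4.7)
The plan is to prove positive definiteness of $\mathbf{V}_I(t_2) - \mathbf{V}_I(t_1)$ by contrapositive: suppose $\beta \in \mathbb{R}^I$ satisfies $\beta^T (\mathbf{V}_I(t_2) - \mathbf{V}_I(t_1))\beta = 0$ and deduce $\beta = 0$. First I would decompose the increment transition-wise as $\mathbf{V}_I(t_2) - \mathbf{V}_I(t_1) = \sum_{(j,k)\colon \lambda^{jk} \not\equiv 0} A^{jk}_I$, where $A^{jk}_I \coloneqq (V^{jk}_{c_1 c_2}(t_2) - V^{jk}_{c_1 c_2}(t_1))_{c_1, c_2 \in I}$. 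Each $A^{jk}_I$ is the covariance matrix of the increment over $(t_1, t_2]$ of the $\mathbb{R}^I$-valued vector of single-transition contributions to the components of $\mathbf{u}$, and is therefore positive semi-definite. Since PSD summands make the vanishing of a non-negative sum equivalent to the vanishing of each term, one obtains $\beta^T A^{jk}_I \beta = 0$ for every transition $j \to k$.

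The key combinatorial fact is that an exclusive transition contributes to $A^{jk}_I$ only at a single diagonal entry. Suppose $j \to k$ is exclusive for $E_c$ with respect to $I$. For any $c' \in I \setminus \{c\}$, the definition of exclusivity provides either $j \in E_{c'}$ or $k \notin E_{c'}$, and in either case the joint requirement ``$k \in E_{c_1} \cap E_{c_2}$ and $j \notin E_{c_1} \cup E_{c_2}$'' needed for the transition to contribute to $V^{jk}_{c_1 c_2}$ fails as soon as $c'$ appears as $c_1$ or $c_2$. The only surviving entry of $A^{jk}_I$ is therefore $(c,c)$, where the standing strict-monotonicity assumption yields $V^{jk}_{cc}(t_2) - V^{jk}_{cc}(t_1) > 0$. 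Plugging into $\beta^T A^{jk}_I \beta = 0$ then forces $\beta_c = 0$.

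Both criteria of the lemma now follow directly. For \ref{item:exclusivity}, applying the previous step to one exclusive transition per $c \in I$ yields $\beta_c = 0$ for every $c \in I$, hence $\beta = 0$ and the increment is positive definite. For \ref{item:decomposition}, the same argument on $I_1$ yields $\beta_c = 0$ for every $c \in I_1$; the residual quadratic form then reduces to $\beta_{I_2}^T (\mathbf{V}_I(t_2) - \mathbf{V}_I(t_1))_{I_2, I_2}\beta_{I_2} = 0$, and since the entries $V_{c_1 c_2}$ depend only on the pair $(c_1, c_2)$ the restricted matrix coincides with the invertible (and hence positive definite) $\mathbf{V}_{I_2}(t_2) - \mathbf{V}_{I_2}(t_1)$, forcing $\beta_{I_2} = 0$. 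The principal obstacle I anticipate is the first step: establishing the PSD property of each $A^{jk}_I$ cleanly, since the definition of $V^{jk}_{c_1 c_2}$ blends the conditional means $\mu^{j \to E_{c_1}}$ and $\mu^{j \to E_{c_2}}$, and one must carefully view $A^{jk}_I$ as a genuine single-transition covariance matrix (exploiting the independence of the $M_i^{\{j\}\{k\}}$ for different transitions that underlies Theorem \ref{thm:rebolledo_application}) rather than as an arbitrary symmetric matrix.
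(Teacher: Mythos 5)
Your proof is correct, and it reaches the conclusion by a genuinely finer decomposition than the paper's. You split the increment completely transition-wise, $\mathbf{V}_I(t_2)-\mathbf{V}_I(t_1)=\sum_{(j,k)\colon \lambda^{jk}\not\equiv 0} A^{jk}_I$, show each summand is positive semi-definite, and run a single contrapositive argument that covers both \ref{item:exclusivity} and \ref{item:decomposition} (the former being the special case $I_2=\emptyset$). The paper instead makes a coarser two-term split: it subtracts from $u^{E_c}$ only the exclusive contributions, recognises the remainder $\tilde{\mathbf{V}}_I$ as the covariance function of the explicitly modified process $\tilde{\mathbf{u}}_I$ (hence with PSD increments), and for \ref{item:exclusivity} concludes directly that PSD plus a positive definite diagonal matrix is positive definite, reserving the quadratic-form argument for \ref{item:decomposition}. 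Your combinatorial core --- that an exclusive transition $j\to k$ for $E_c$ contributes only to the entry $(c,c)$, since for $c'\neq c$ either $j\in E_{c'}$ or $k\notin E_{c'}$ violates the requirement $k\in E_{c_1}\cap E_{c_2}$, $j\notin E_{c_1}\cup E_{c_2}$ in \eqref{eq:covariance_u} --- is exactly what the paper uses implicitly when it asserts that $\tilde{\mathbf{V}}_I$ and $\mathbf{V}_I$ differ only in the diagonal entries, and your appeal to the standing strict-monotonicity assumption on $V^{jk}_{cc}$ to get $\beta_c=0$ matches the paper's use of it. The obstacle you flag (PSD of each $A^{jk}_I$) is real but closes by the same mechanism the paper invokes for $\tilde{\mathbf{V}}_I$: under the assumptions of Lemma \ref{lemma:mv_martingale}, each single-transition contribution to $u^{E_c}$ is itself an $(\mathcal{F}(t))_{t\geq 0}$-martingale (their mutual orthogonality across transitions is what makes \eqref{eq:covariance_u} additive in the first place), so the covariance matrix of the vector of increments over $(t_1,t_2]$ equals the increment of the covariance function, which is therefore PSD; note that the fixed-$t$ Gram structure of the integrand in \eqref{eq:covariance_contribution_u} alone would not suffice for increments, since the censoring indicator $\mathbbm{1}_{\{s\leq C(t)\}}$ also moves with $t$. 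In the trade-off, your version buys a uniform proof of both criteria and makes explicit the per-transition PSD structure (which is also what the later diagonal-dominance criteria exploit), while the paper's version needs only one PSD fact and yields a one-line direct proof of \ref{item:exclusivity}.
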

\begin{proof}
	\underline{\ref{item:exclusivity}:} Let $j_c \to k_c$ denote the transition that is exclusive for $c \in I$ w.r.t. $I$. We define
	\begin{equation*}
		\tilde{u}^{E_c}(t)\coloneqq u^{E_c}(t) - \frac{1}{\sqrt{n}} \sum_{i=1}^n \int_{[0,t]} q^{j_c k_c}(t,u) \left( Z_i - \frac{y^{j_c\to E, Z=1}(t,u)}{y^{j_c\to E}(t,u)} \right)\, M_i^{\{j_c\}\{k_c\}}(t, d(u \wedge T_i^E)).
	\end{equation*}
	The covariance matrix of the random vector $\tilde{\mathbf{u}}_I \coloneqq (\tilde{u}^{E_c})_{c \in I}$ is given by $\tilde{\mathbf{V}}_I$ where the off-diagonal entries are the same as those of $\mathbf{V}_I$ and the diagonal entry belonging to some $E_c$ with $c \in I$ is given by $V_{cc} - V_{cc}^{j_c k_c}$. As $\tilde{\mathbf{V}}_I$ is a matrix-valued covariance function, its increments are positive semi-definite. The $|I|\times|I|$-diagonal matrix with entries $V_{cc}^{j_c k_c}$ for $c \in I$ has positive definite increments by the assumption stated at the beginning of this subsection. Hence, the increments of $\mathbf{V}_I$ can be written as a sum of a positive semi-definite and a positive definite matrix. They are hence positive definite and invertible.\\
	\underline{\ref{item:decomposition}:} For each $c \in I_1$, we denote by $j_c \to k_c$ the transition that is exclusive for $E_c$ w.r.t. $I$. We define
	\begin{equation*}
		\tilde{u}^{E_c}(t)\coloneqq u^{E_c}(t) - \mathbbm{1}_{\{c \in I_1\}}\cdot \frac{1}{\sqrt{n}} \sum_{i=1}^n \int_{[0,t]} q^{j_c k_c}(t,u) \left( Z_i - \frac{y^{j_c\to E, Z=1}(t,u)}{y^{j_c\to E}(t,u)} \right)\, M_i^{\{j_c\}\{k_c\}}(t, d(u \wedge T_i^E)).
	\end{equation*}
	The covariance matrix of the random vector $\tilde{\mathbf{u}}_I \coloneqq (\tilde{u}^{E_c})_{c \in I}$ is given by $\tilde{\mathbf{V}}_I$ where the off-diagonal entries are the same as those of $\mathbf{V}_I$ and the diagonal entry belonging to some $E_c$ with $c \in I$ is given by $V_{cc} - \mathbbm{1}_{\{c \in I_1\}}\cdot V_{cc}^{j_c k_c}$. As $\tilde{\mathbf{V}}_I$ is a matrix-valued covariance function, its increments are positive semi-definite. The $|I|\times|I|$-diagonal matrix with non-negative entries $\mathbbm{1}_{\{c \in I_1\}} V_{cc}^{j_c k_c}$ for $c \in I$ also has positive semi-definite increments by the assumption stated before Lemma \ref{lemma:invertibility_Vall}. Now, if $\mathbf{V}_I$ is not invertible, it is also not positive definite. Hence there must be some $0\neq\beta\in \mathbb{R}^{|I|}$ s.t.
	\begin{align*}
		0&=\beta^T \mathbf{V}_I \beta\\
		&=\underbrace{\beta^T \tilde{\mathbf{V}}_I \beta}_{\geq 0} + \underbrace{\beta^T \text{diag}((\mathbbm{1}_{\{c \in I_1\}} V_{cc}^{j_c k_c})_{c \in I}) \beta}_{=0 \text{ only if } \beta_c=0 \; \forall c \in I_1}.
	\end{align*}
	Hence, this equation can only be fulfilled, if $\beta_c=0$ for all $c \in I$. For such a $\beta$ we can thus rewrite
	\begin{equation*}
		\beta^T \mathbf{V}_I \beta = \sum_{c_1 \in I_2} \sum_{c_2 \in I_2} \beta_{c_1} \beta_{c_2} V_{c_1 c_2} = \beta_{I_2}^T \mathbf{V}_{I_2} \beta_{I_2}
	\end{equation*}
	where $\beta_{I_2}$ is the subvector of $\beta$ consisting only of the components belonging to $I_2$. If $\mathbf{V}_{I_2}$ has positive definite increments, this is not possible and $\mathbf{V}_{I}$ must be invertible.
\end{proof}
This result already ensures invertibility of increments of the covariance matrix for the settings depicted by Figures \ref{fig:pfsos} and \ref{fig:efficacy_safety_msm}. Going beyond that, it also holds in progressive illness models as in \cite{Wanek:1993} in which the events are defined by decreasing sequence of sets of nodes.\\
This Lemma can also be applied iteratively if the condition \ref{item:decomposition} is fulfilled and it remains to show invertibility of a submatrix. The next statement describes how the invertibility of the required in the second part of \ref{item:decomposition} can also be verified. For this statement, we define by $\text{In}(E)$ the set of transitions of the multi-state model going from outside of $E$ in to $E$.
\begin{lemma}\label{lemma:irreducibility}
	The increments of the covariance matrix function $\mathbf{V}_I$ are invertible if the follwoing conditions hold:
	\begin{enumerate}[label = (D\arabic*)]
		\item\label{item:irreducibility} For any $c_0, c\in I$, there is a finite sequence $c_0, c_1, \dots, c_n \coloneqq c$ s.t. $\text{In}(E_{c_{i-1}}) \cap \text{In}(E_{c_i}) \neq \emptyset$ for all $i \in \{1,\dots,n\}$.
		\item\label{item:diagonal_dominance} For any ${c_1}\in I$ and any transition $j\to k$ in $\text{In}(E_{c_1})$, there is at most one $c_1\neq c_2 \in I$ s.t. the transition is also in $\text{In}(E_{c_2})$.
		\item\label{item:strict dominance} For at least one $c_1\in I$, $E_{c_1}$ has an exclusive transition w.r.t. $I$ or there is some $c_2\in I$ and a transition $j\to k$ in $\text{In}(E_{c_1}) \cap \text{In}(E_{c_1})$ s.t. $V^{jk}_{c_1 c_1} - V^{jk}_{c_1 c_2}$ is monotonically increasing.
	\end{enumerate}
\end{lemma}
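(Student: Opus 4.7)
The plan is to show that the symmetric positive semi-definite increment $\Delta\mathbf{V}_I := \mathbf{V}_I(t_2) - \mathbf{V}_I(t_1)$ is in fact positive definite by verifying that $\beta^T \Delta\mathbf{V}_I \beta = 0$ forces $\beta = 0$ for any $\beta \in \mathbb{R}^{|I|}$. The first step is the same kind of decomposition as in the proof of Lemma 3: grouping contributions to $V_{c_1c_2}$ by transition and using that a transition $j\to k$ feeds into $V_{c_1c_2}$ exactly when $\{c_1,c_2\} \subseteq \mathcal{E}_{j,k} := \{c \in I : (j,k) \in \text{In}(E_c)\}$, one obtains
\begin{equation*}
\beta^T \Delta\mathbf{V}_I\, \beta = \sum_{j \to k} \int_{t_1}^{t_2} q^{jk}(s)^2\, \lambda^{jk}(s)\, \mathbb{E}\!\left[\Bigl(\sum_{c \in \mathcal{E}_{j,k} \cap I} \beta_c\, \xi_c^{jk}(s)\Bigr)^{\!2}\right] ds,
\end{equation*}
where $\xi_c^{jk}(s) = \mathbbm{1}_{\{X(s-)=j\}}\, \mathbbm{1}_{\{s \leq C(t_2)\wedge T^{E_c}\}}\, (Z - \mu^{j \to E_c}(s))$. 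Each summand is non-negative, so vanishing of the left-hand side forces every transition's contribution to vanish, and by the working assumption that $q^{jk}(s)^2\lambda^{jk}(s) > 0$ almost everywhere on $[t_1,t_2]$ we deduce $\sum_{c \in \mathcal{E}_{j,k} \cap I} \beta_c\, \xi_c^{jk}(s) = 0$ almost surely for almost every $s \in [t_1,t_2]$.

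The second step uses condition \ref{item:diagonal_dominance} to bound $|\mathcal{E}_{j,k}\cap I| \leq 2$ and then performs a case split. If $\mathcal{E}_{j,k}\cap I = \{c\}$, then on the event $\{X(s-)=j,\, s \leq T^{E_c}\wedge C(t_2)\}$—which has positive probability on a positive-measure subset of $[t_1,t_2]$ because $V^{jk}_{cc}$ is strictly increasing—the constraint collapses to $\beta_c(Z - \mu^{j\to E_c}(s)) = 0$, and non-degeneracy of $Z$ forces $\beta_c = 0$. If $\mathcal{E}_{j,k}\cap I = \{c_1,c_2\}$, partition $\{X(s-)=j\}$ into four disjoint sub-events according to whether each of $T^{E_{c_1}}$ and $T^{E_{c_2}}$ has already occurred. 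On the "imbalance" sub-event $\{X(s-)=j,\, s \leq T^{E_{c_1}},\, s > T^{E_{c_2}}\}$ the constraint collapses to $\beta_{c_1}(Z - \mu^{j\to E_{c_1}}(s)) = 0$, forcing $\beta_{c_1} = 0$ provided the sub-event carries positive mass on a positive-measure subset of $[t_1,t_2]$; the symmetric sub-event forces $\beta_{c_2} = 0$.

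The final step combines \ref{item:strict dominance} (base case) with \ref{item:irreducibility} (propagation). If \ref{item:strict dominance} is satisfied via an exclusive transition, the singleton analysis above immediately yields $\beta_{c_1} = 0$. Otherwise a direct computation gives
\begin{equation*}
V^{jk}_{c_1c_1}(t) - V^{jk}_{c_1c_2}(t) = \int_0^t q^{jk}(s)^2 \lambda^{jk}(s)\, \mathbb{E}\!\left[\xi^{jk}_{c_1}(s)\bigl(\xi^{jk}_{c_1}(s) - \xi^{jk}_{c_2}(s)\bigr)\right] ds,
\end{equation*}
and decomposing the integrand by the four sub-events isolates a contribution from the "both-unvisited" sub-event that is proportional to $\mu^{j \to E_{c_2}} - \mu^{j \to E_{c_1}}$ (hence vanishing under $H_0$, where both ratios equal the randomisation probability) and a non-negative contribution from the imbalance sub-event. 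Strict monotonicity of the difference therefore forces the imbalance sub-event to carry positive mass on a positive-measure subset of $[t_1,t_2]$, yielding $\beta_{c_1} = 0$ by the pair analysis. Propagation then uses \ref{item:irreducibility}: on the graph on $I$ whose edges are pairs sharing an incoming transition, condition \ref{item:irreducibility} gives connectedness; for any neighbour $c'$ of an already-zero vertex $c$, the pair constraint associated to the shared transition collapses (since $\beta_c = 0$) to the singleton form and gives $\beta_{c'} = 0$. Iterating over the connected graph yields $\beta \equiv 0$ on $I$. The main obstacle is the calibration of \ref{item:strict dominance}: one must verify that strict monotonicity of the specific scalar difference $V^{jk}_{c_1c_1} - V^{jk}_{c_1c_2}$ is the right quantitative proxy for the imbalance sub-event having positive mass, which relies on the sub-event decomposition of the integrand and the simplifications available under $H_0$.
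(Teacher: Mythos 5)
Your proposal is correct in substance, but it takes a genuinely different route from the paper. The paper's own proof is a two-line appeal to classical matrix theory: condition \ref{item:irreducibility} makes the increment of $\mathbf{V}_I$ irreducible (its off-diagonal entry $(c_1,c_2)$ is strictly positive exactly when $\text{In}(E_{c_1})\cap\text{In}(E_{c_2})\neq\emptyset$, by the standing strict-monotonicity assumption), condition \ref{item:diagonal_dominance} yields row-wise diagonal dominance (each transition in $\text{In}(E_{c_1})$ feeds at most one off-diagonal entry of row $c_1$, and its covariance contribution $V^{jk}_{c_1 c_2}$ is dominated by its variance contribution $V^{jk}_{c_1 c_1}$ because the ``neither event hit'' indicator is smaller than the ``$E_{c_1}$ not yet hit'' indicator), and condition \ref{item:strict dominance} gives strict dominance in at least one row; non-singularity then follows from the known fact that irreducibly diagonally dominant matrices are invertible (Taussky's theorem). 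You instead prove positive definiteness directly: you expand $\beta^T\Delta\mathbf{V}_I\beta$ transition by transition into a sum of expectations of squares (the same Gram-type decomposition the paper uses for $\mathbf{V}_{\text{all}}$ in Lemma \ref{lemma:invertibility_Vall}), extract almost-sure linear constraints, kill one coordinate via \ref{item:strict dominance}, and propagate zeroes along the shared-transition graph, whose connectedness is \ref{item:irreducibility} and whose pairwise collapse uses the bound $|\mathcal{E}_{j,k}\cap I|\leq 2$ from \ref{item:diagonal_dominance}. What each approach buys: the paper's argument is shorter and avoids all case analysis, but leans on an external theorem and leaves the verification of dominance unspoken; yours is self-contained, in effect re-deriving the Taussky mechanism in this structured setting, and your identification of strict monotonicity of $V^{jk}_{c_1 c_1}-V^{jk}_{c_1 c_2}$ with positive mass of the imbalance event $\{X(s-)=j,\,T^{E_{c_2}}<s\le T^{E_{c_1}}\}$ under $H_0$ (where all $\mu$'s equal the randomisation fraction, so the both-unvisited term drops out) makes transparent why \ref{item:strict dominance} is the right hypothesis --- the paper leaves this entirely implicit. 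One detail to repair: the censoring indicator inside $V^{jk}_{c_1 c_2}(t)$ depends on $t$ through $C(t)=\tilde{C}\wedge(t-R)_+$, so the increment over calendar times $t_1<t_2$ is not $\int_{t_1}^{t_2}$ of the integrand evaluated with $C(t_2)$; it is the integral over $[0,t_2]$ against the window indicator $\mathbbm{1}_{\{C(t_1)<s\le C(t_2)\}}$. This correction preserves the sum-of-squares structure and all of your conclusions, and your representation additionally presupposes (as the paper does in Lemma \ref{lemma:mv_martingale} and Theorem \ref{thm:rebolledo_application}) that $q^{jk}$ and the limiting ratios $\mu^{j\to E}$ do not depend on $t$.
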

\begin{proof}
	The first condition ensures irreducibility of the matrix $\mathbf{V}_I$. The second condition ensures diagonal dominance. The third condition implies that for at least one diagonal entry, strict inequality in the diagonal dominance formulation can be guaranteed. Hence, $\mathbf{V}_I$ is irreducibly diagonally dominant. Such matrices are known to be non-singular.
\end{proof}
The condition \ref{item:strict dominance} is fulfilled if there is a positive probability that a patient enters the set $E_{c_2}$ before the transition $j \to k$ is made. A similar result can be obtained if the condition \ref{item:irreducibility} is not given. This follows from a simple decomposition of the covariance function.
\begin{cor}
	Let \ref{item:irreducibility} of Lemma \ref{lemma:irreducibility} not be fulfilled. Let $I_1, \dots, I_b$ denote a disjunct decomposition of $I$  s.t. \ref{item:irreducibility} of Lemma \ref{lemma:irreducibility} is fulfilled for any $a \in \{1,\dots, b\}$. If conditions \ref{item:diagonal_dominance} and \ref{item:strict dominance} of Lemma \ref{lemma:irreducibility} are fulfilled for at least one of $I_1, \dots, I_b$, then the increments of 
\end{cor}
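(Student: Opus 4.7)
The plan is to first establish that under the stated decomposition, the covariance matrix function $\mathbf{V}_I$ decomposes into a block diagonal form (up to reordering of rows and columns) with blocks $\mathbf{V}_{I_1}, \dots, \mathbf{V}_{I_b}$. To see this, I would examine a generic off-block entry: pick $c_1 \in I_a$ and $c_2 \in I_{a'}$ with $a \neq a'$. By the way the decomposition is constructed — namely so that \ref{item:irreducibility} holds within each $I_a$ but fails across them — the indices $c_1$ and $c_2$ cannot be connected by a finite chain of indices with pairwise non-empty intersection of the sets $\text{In}(\cdot)$, and in particular the direct intersection must satisfy $\text{In}(E_{c_1}) \cap \text{In}(E_{c_2}) = \emptyset$. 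Inspecting \eqref{eq:covariance_u}, the sum defining $V_{c_1 c_2}(t)$ runs precisely over transitions $j \to k$ with $k \in E_{c_1} \cap E_{c_2}$ and $j \notin E_{c_1} \cup E_{c_2}$, i.e.\ over transitions in $\text{In}(E_{c_1}) \cap \text{In}(E_{c_2})$. Hence this sum is empty, giving $V_{c_1 c_2} \equiv 0$ and the same for all its increments. Consequently, after permuting coordinates, the increment $\mathbf{V}_I(t_2) - \mathbf{V}_I(t_1)$ is block diagonal with blocks $\mathbf{V}_{I_a}(t_2) - \mathbf{V}_{I_a}(t_1)$ for $a \in \{1, \dots, b\}$.

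With that structural decomposition in hand, the second step is to apply Lemma \ref{lemma:irreducibility} directly to each block $\mathbf{V}_{I_a}$ for which the conditions are fulfilled. Within each such $I_a$, condition \ref{item:irreducibility} holds by the construction of the decomposition, and \ref{item:diagonal_dominance} together with \ref{item:strict dominance} hold by hypothesis; hence Lemma \ref{lemma:irreducibility} yields that the increments of $\mathbf{V}_{I_a}$ are invertible. Since the invertibility of a block diagonal matrix reduces to the invertibility of each block separately, the increments of the submatrix $\mathbf{V}_{I_a}$ are invertible for every such $a$, which is the content of the corollary.

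I do not expect any real technical obstacle here — the work is bookkeeping in two places. The first is justifying that the decomposition into $I_1, \dots, I_b$ can indeed be chosen so that not only does \ref{item:irreducibility} hold within each class but also there are no cross-class edges in the associated graph (otherwise the block diagonal reduction fails); the natural choice is the decomposition into connected components of the graph whose vertices are $I$ and whose edges link $c, c'$ with $\text{In}(E_c) \cap \text{In}(E_{c'}) \neq \emptyset$, and one should state this explicitly. The second is noting that the block diagonal observation only yields invertibility of those $\mathbf{V}_{I_a}$ that satisfy \ref{item:diagonal_dominance} and \ref{item:strict dominance}, not of the full $\mathbf{V}_I$ — which is exactly why the corollary is phrased in terms of the sub-decomposition and allows the conditions to hold for only some of the $I_a$.
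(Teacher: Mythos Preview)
Your block-diagonal reduction is exactly the approach the paper takes: after rearranging coordinates, $\mathbf{V}_I = \text{diag}(\mathbf{V}_{I_1},\dots,\mathbf{V}_{I_b})$, and then Lemma~\ref{lemma:irreducibility} is applied to a block where its hypotheses are available. Your justification for why off-block entries vanish (via $\text{In}(E_{c_1})\cap\text{In}(E_{c_2})=\emptyset$ and \eqref{eq:covariance_u}) is more explicit than the paper's, and your remark that the decomposition should be taken as the connected components of the graph on $I$ is a useful clarification the paper leaves implicit.

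Where you diverge from the paper is in the \emph{intended conclusion}. The statement is truncated in the paper, but the paper's own proof makes clear that the target is invertibility (positive definiteness) of the increments of the full $\mathbf{V}_I$: it argues that the block-diagonal increment is positive definite because each block is positive semi-definite and at least one block is positive definite by Lemma~\ref{lemma:irreducibility}. You instead read the corollary as asserting only invertibility of those sub-blocks $\mathbf{V}_{I_a}$ for which \ref{item:diagonal_dominance} and \ref{item:strict dominance} hold. Your reading is mathematically safe; the paper's argument, as written, is not: a block-diagonal matrix is positive definite if and only if \emph{every} block is positive definite, so positive semi-definiteness of the remaining blocks together with positive definiteness of one block is insufficient. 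In short, your proof is correct for the conclusion you state, and it follows the same structural route as the paper; but be aware that the paper is aiming for the stronger conclusion about $\mathbf{V}_I$ itself, and that stronger conclusion would require \ref{item:diagonal_dominance} and \ref{item:strict dominance} to hold for \emph{each} $I_a$, not merely for one.
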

\begin{proof}
	After rearrangement of the entries, the covariance function $\mathbf{V}_I$ is block diagonal and can be written as
	\begin{equation*}
		\mathbf{V}_I = \text{diag}(\mathbf{V}_{I_1}, \dots, \mathbf{V}_{I_b}).
	\end{equation*}
	The increments of this function are positive definite if the increments of $\mathbf{V}_{I_1}, \dots, \mathbf{V}_{I_b}$ are positive semi-definite and at least one of them is positive definite. For one of $\mathbf{V}_{I_1}, \dots, \mathbf{V}_{I_b}$, Lemma \ref{lemma:irreducibility} can be applied by assumption.
\end{proof}

\subsection*{A.3 Adaptive case study}
While the simulation study in the main manuscript shows how our procedure enables the simulatenous assessment of PFS and OS as co-primary endpoints and the real data example shows how results of the procedure can be displayed and interpreted, we want to focus on interim trial adaptations in this case study. As explained above, it is allowed to perform data-dependent sample size recalculations based on all involved time-to-event endpoints. This is not covered by previous methodology.\\
Here, we illustrate an adaptive testing strategy based on the specific illness-death model from \cite{Fleischer:2009} for patients with non-small cell lung cancer. The dependence between PFS and OS is modelled by a time-homogeneous Markov model, i.e. we have $\gamma^{01} = \gamma^{02} = \gamma^{12} = 1$ in terms of \eqref{eq:weibull_intensities}. The transition rates are given by $\lambda^{01}=0.284$, $\lambda^{02}=0.075$ and $\lambda^{12}=0.128$. It should be noted that the time scale for this model is given in months.\\
For our case study, we assume that this model states the course of disease under the standard of care which shall now be compared to a new experimental treatment in a clinical trial. The calendar time parameters of this fictional trial shall be given as follows: accrual period $a = 24$ with uniform accrual rate, fixed follow-up time $f = 12$, an interim analysis at $t_1 = 18$ and a final analysis at $t_2 = a + f = 36$. The PFS- and OS-rates in the model explained above are given by 63.4\% resp. 41.6\% at $t_1$ and 99.8\% resp. 91.8\% at $t_2$. We want to apply the testing procedure introduced in the main manuscript to detect differences between the distribution of PFS and OS between the two treatment groups. Stagewise $p$-values will be computed as in \eqref{eq:stagewise_p_values_pfsos} and combined via the inverse normal combination function with equal weights. An overall type I error level of $\alpha = 5\%$ shall be achieved by using the sequential O'Brien-Fleming boundaries.\\
As in the simulation study, this fictional trial is planned under proportional hazards assumptions for all three transitions with hazard ratios $\delta^{01}=1/1.5$, $\delta^{02}=1$ and $\delta^{12}=1/1.25$. If 20 patients are be recruited per month, a power of just about 80\% is achieved under this planning assumptions. This amounts to an overall sample size of 480. Obviously, the actual power may differ from this targeted value if at least one of the hazard rations is misspecified. Applying the results obtained here, we want to use the complete information about PFS- and OS-events available at calendar time $t_1$ to reschedule the remaining trial. The adaptation rule at the interim analysis is inspired by the rule applied in the NB2004-HR trial \cite{Berthold:2020}. Transition intensities $\lambda^{01}$, $\lambda^{02}$ and $\lambda^{12}$ will be estimated in both groups based on the available data according to the maximum likelihood procedure given in \cite{Fleischer:2009}. Based on these estimates, the conditional power of the procedure will be calculated under the current design as well as under some alternative designs with adaptations of the duration of the accrual period which will always be followed by the follow-up period of fixed length $f$. Recruitment will be continued for a flexible time $a_{\text{add}}$ which shall be bounded from below and above by $a_{\text{add, min}} \leq a_{\text{add}} \leq a_{\text{add, max}}$. If $\psi(\cdot)$ denotes the conditional power under the new planning alternative in dependence of the additional recruitment peroid, the decision rule on how to proceed with the trial is given as follows:
\begin{equation*}
	a_{\text{add}} = 
	\begin{cases}
		a_{\text{add, min}} & \text{if } \psi(a_{\text{add, min}}) \geq 0.8\\
		\psi^{-1}(0.8) & \text{else if } \psi(a_{\text{add, min}}) < 0.8$ \text{ and } $\psi(a_{\text{add, max}}) \geq 0.8\\
		a_{\text{add, max}} & \text{else if } \psi(a_{\text{add, max}}) \geq 0.5\\
		a_{\text{add, min}} & \text{else.}
	\end{cases}
\end{equation*}
In this study we consider the choices $a_{\text{add, min}} = 3$ and $a_{\text{add, max}} \in \{30, 42\}$ which would amount to a doubling of the originally planned accrual rate or a doubling of the originally planned overall trial duration, respectively.\\
We simulated scenarios with any of the combinations of the following "true" hazard ratios:
\begin{equation*}
	\delta^{01} \in \{1.3^{-1}, 1.4^{-1}, 1.5^{-1}, 1.6^{-1}, 1.7^{-1}\} \text{ and } \delta^{12} \in \{1.15^{-1}, 1.2^{-1}, 1.25^{-1}, 1.3^{-1}, 1.35^{-1}\}.
\end{equation*}
For each combination, 10,000 simulation runs were executed. The empirical rejection rates of the three procedures (group-sequential, adaptive with $a_{\text{add, max}} = 30$ and adaptive with $a_{\text{add, max}} = 42$) are displayed in Table \ref{table:adaptive_vs_gs}.

\begin{table}
	\centering
	\caption{Comparison of empirically attained power of group-sequential and adaptive designs for several deviations from the planning hypothesis (originally planned sample size is set for hazard ratios $\delta^{01}=1.5^{-1}$ and $\delta^{12}=1.25^{-1}$). Upper value in each cell refers to power of group-sequential design, middle value refers to adaptive design with $a_{\text{add, max}} = 30$, lower value refers to adaptive design with $a_{\text{add, max}} = 42$. Numbers in brackets show average duration of accrual period in the respective scenario (in months).}
	
	\begin{tabular}{|c|c||c|c|c|c|c|}
		\cline{3-7}
		\multicolumn{2}{c}{}&\multicolumn{5}{|c|}{$\delta_{01}$}\\
		\cline{3-7}
		\multicolumn{2}{c|}{}&$1/1.3$&$1/1.4$&$1/1.5$&$1/1.6$&$1/1.7$\\
		\hhline{--::=====}
		\multirow{15}{*}{$\delta_{12}$}&\multirow{3}{*}{$1/1.15$}
		\unskip\textcolor{white}{\makebox[0pt]{\smash{\rule[9pt]{14pt}{3pt}}}}
		&0.416 (23.24)&0.607 (22.54)&0.778 (21.63)&0.894 (20.75)&0.957 (19.97)\\
		&&0.451 (25.61)&0.645 (23.66)&0.791 (22.02)&0.888 (20.63)&0.940 (19.65)\\
		&&0.470 (27.31)&0.656 (24.87)&0.799 (22.48)&0.887 (20.79)&0.944 (19.79)\\
		\hhline{~-||-----}
		&\multirow{3}{*}{$1/1.2$}
		&0.458 (23.18)&0.639 (22.47)&0.790 (21.67)&0.899 (20.77)&0.951 (19.98)\\
		&&0.494 (25.45)&0.663 (23.64)&0.794 (21.95)&0.883 (20.63)&0.945 (19.64)\\
		&&0.513 (26.95)&0.683 (24.64)&0.805 (22.59)&0.890 (21.02)&0.943 (19.74)\\
		\hhline{~-||-----}
		&\multirow{3}{*}{$1/1.25$}
		&0.502 (23.13)&0.670 (22.48)&0.806 (21.70)&0.900 (20.80)&0.954 (20.00)\\
		&&0.525 (24.87)&0.695 (23.37)&0.812 (21.79)&0.897 (20.58)&0.942 (19.56)\\
		&&0.555 (26.18)&0.703 (24.30)&0.820 (22.35)&0.892 (20.82)&0.945 (19.76)\\
		\hhline{~-||-----}
		&\multirow{3}{*}{$1/1.3$}
		&0.563 (23.04)&0.709 (22.45)&0.828 (21.59)&0.911 (20.80)&0.960 (20.02)\\
		&&0.585 (24.62)&0.717 (23.06)&0.827 (21.54)&0.901 (20.48)&0.947 (19.55)\\
		&&0.618 (25.75)&0.727 (23.90)&0.829 (22.15)&0.907 (20.58)&0.948 (19.71)\\
		\hhline{~-||-----}
		&\multirow{3}{*}{$1/1.35$}
		&0.615 (22.99)&0.750 (22.34)&0.848 (21.62)&0.926 (20.77)&0.964 (19.98)\\
		&&0.640 (23.93)&0.756 (22.69)&0.849 (21.49)&0.913 (20.35)&0.951 (19.60)\\
		&&0.654 (24.98)&0.762 (23.22)&0.853 (21.82)&0.915 (20.46)&0.955 (19.59)\\
		\hhline{--||-----}
	\end{tabular}
	\label{table:adaptive_vs_gs}
\end{table} 

The breadth of the 95\%-confidence intervals for an underlying value of 0.8 amounts to about 0.016. The adaptive procedures assume the preplanned power if the planning assumptions agree with the truth. The power of these procedures does not increase more than that of the group-seuqnetial design if one of the hazard ratios is underestimated. The average duration of the procedures in these cases is very similar. In contrast, the type II error rate decreases remarkably when using an adaptive design instead of the group-sequential design in case differences between the two groups are overestimated. Especially if $\delta^{01}$ is overestimated, a difference in post-progression survival can help to recover some of the lost power. However, this increases the average duration of the recruitment phase by up to 4 months.\\
Concerning maintenance of the nominal level of the procedure, all three procedures perform accetably (group-sequential procedure: 4.95\%, adaptive procedure with $a_{\text{add, max}} = 30$: 5.06\%, adaptive design with $a_{\text{add, max}} = 42$: 5.24\%; 95\%-confidence interval: $[4.57\%, 5.43\%]$).

\end{document}